\theoremstyle{plain}
\pgfplotsset{compat=1.14}
\let\oldtocsection=\tocsection
\let\oldtocsubsection=\tocsubsection
\renewcommand{\tocsection}[2]{\hspace{0em}\oldtocsection{#1}{#2}}
\renewcommand{\tocsubsection}[2]{\hspace{1.75em}\oldtocsubsection{#1}{#2}}
\newcommand*{\mailto}[1]{\href{mailto:#1}{\nolinkurl{#1}}}
\newcommand{\yud}{\mathrel{\substack{\Ydown\\[-.2em]\Yup}}}
\newcommand\restr[2]{{
  \left.\kern-\nulldelimiterspace 
  #1 
  \vphantom{\big|} 
  \right|_{#2} 
  }}
\newtheorem{theorem}{Theorem}[section]
\newtheorem{proposition}[theorem]{Proposition}
\newtheorem{definition}[theorem]{Definition}
\newtheorem{lemma}[theorem]{Lemma}
\newtheorem{corollary}[theorem]{Corollary}
\newtheorem{remark}[theorem]{Remark}
\newtheorem{assumption}[theorem]{Assumption}
\newtheorem{example}[theorem]{Example}
\newtheorem*{fga}{Finite Gap Ansatz}
\newcommand{\R}{{\mathbb R}}
\newcommand{\N}{{\mathbb N}}
\newcommand{\C}{{\mathbb C}}
\newcommand{\nn}{\nonumber}
\newcommand{\be}{\begin{equation}}
\newcommand{\ee}{\end{equation}}
\newcommand{\bea}{\begin{eqnarray}}
\newcommand{\eea}{\end{eqnarray}}
\newcommand{\btheo}{\begin{theorem}}
\newcommand{\etheo}{\end{theorem}}
\newcommand{\re}{\Re}
\newcommand{\im}{\Im}
\newcommand{\asympt}{\mathcal{O}}
\newcommand{\esf}{\mathsf{E}}
\newcommand{\msf}{\mathsf{M}}
\newcommand{\nsf}{\mathsf{N}}
\newcommand{\CO}{{\mathcal O}}
\newcommand{\clos}{\textsf{clos}}
\newcommand{\ph}{\mathrm{ph}}
\newcommand{\thsf}{\mathsf{\theta}}
\newcommand{\omsf}{\mathsf{\omega}}
\newcommand{\lam}{\lambda}
\newcommand{\Om}{\Omega}
\newcommand{\z}{\zeta}
\newcommand{\ro}{\rho}
\newcommand{\3}{\hspace{2pt}{\rm mod}3}
\numberwithin{equation}{section}
\title[WKB for a Dirac Operator with a rapidly oscillating Potential]
{Semiclassical WKB Problem for the non-self-adjoint Dirac operator with 
an analytic rapidly oscillating potential} 
\author[S.Fujii\'e]{Setsuro Fujii\'e$^{\ast}$}
\address{$^{\ast}$ Department of Mathematical Sciences, Ritsumeikan University, Japan}
\email{\mailto{fujiie@fc.ritsumei.ac.jp}}
\urladdr{\url{https://research-db.ritsumei.ac.jp/rithp/k03/resid/S000793?lang=en}}
\author[N. Hatzizisis]{Nicholas Hatzizisis$^{\dag}$}
\address{$^{\dag}$ Department of Mathematics \& Applied Mathematics, University of Crete, 
and 
Institute of Applied and Computational Mathematics, FORTH, Greece}
\email{\mailto{nhatzitz@gmail.com}}
\urladdr{\url{http://www.nikoshatzizisis.wordpress.com/home/}}
\author[S. Kamvissis]{Spyridon Kamvissis$^{\ddag}$}
\address{$^{\ddag}$ Department of Mathematics \& Applied Mathematics, 
University of Crete, and Institute
of Applied and Computational Mathematics, FORTH, Greece}
\email{\mailto{spyros@tem.uoc.gr}}
\urladdr{\url{http://www.tem.uoc.gr/~spyros/}}
\begin{document}

\abstract
In this paper we examine the semiclassical behavior  
of the scattering data of a 
non-self-adjoint Dirac operator with a rapidly oscillating
potential that is complex analytic in some neighborhood of the real line.
Some of our results are rigorous and quite general. 
On the other hand, complete  and concrete understanding requires the  investigation of
the WKB geometry of specific examples. 
For such detailed computations we use a particular  example that has been investigated numerically
more than 20 years ago by Bronski and Miller and rely heavily on their numerical computations.
Mostly employing the exact WKB method, we provide the complete rigorous uniform 
semiclassical analysis of the Bohr-Sommerfeld condition for the location of the 
eigenvalues across unions of analytic arcs as well as the associated norming constants.  
For the reflection coefficient as well as the eigenvalues near 0 in the spectral plane, we 
employ  instead an  older theory that  has
been developed in great detail by Olver.  Our analysis is motivated
by the need to understand  the semiclassical behaviour of the focusing cubic NLS equation 
with initial data $A\exp\{iS/\epsilon\}$,  in view of the well-known fact discovered 
by Zakharov and Shabat that the spectral analysis of the Dirac operator enables  
the solution of the NLS equation via inverse scattering theory.
\endabstract

\maketitle

\tableofcontents

\section{Introduction}
\label{intro}

\subsection{Motivation}
\label{intro-moto}

Consider the \textit{semiclassical limit} ($\epsilon\downarrow0$) of the solution to the initial 
value problem of the one-dimensional \textit{nonlinear Schr\"odinger equation} with cubic 
nonlinearity 
\begin{equation}
\label{ivp-nls}
\left\{
\begin{array}{l}
i\epsilon\partial_t\psi+
\frac{\epsilon^2}{2}\partial_x^2\psi+
|\psi|^2\psi=
0, \quad(x,t)\in\R\times\R\\
\psi(x,0)=A(x)\exp\{iS(x)/\epsilon\},\quad x\in\R
\end{array}
\right.
\end{equation}
where $A$ and $S$ are real valued integrable functions defined on the real line.

According to the seminal discovery in \cite{zs},  this initial value problem can be studied
via the so-called \textit{inverse scattering method}.  In order for one to do so,  the first  
ingredient is the $direct$ spectral and scattering analysis of the associated \textit{Dirac} 
(or Zakharov-Shabat) operator $\mathfrak{D}_\epsilon$ given by
\begin{equation*}
\mathfrak{D}_\epsilon=
\begin{bmatrix}
-\frac{\epsilon}{i}\frac{d}{dx} & -iA(x)\exp\{iS(x)/\epsilon\}\\
-iA(x)\exp\{-iS(x)/\epsilon\} & \frac{\epsilon}{i}\frac{d}{dx}
\end{bmatrix}.
\end{equation*}

Our main goal in this paper is to provide a rigorous investigation of the semiclassical 
behavior of the \textit{scattering data} (reflection coefficient,  eigenvalues and their 
associated norming constants) for the operator $\mathfrak{D}_\epsilon $ and apply 
our conclusions to the semiclassical investigation of the solutions to the focusing 
NLS equation.

We are interested in $A(x)$ and $S(x)$ that are {\it real analytic}  functions 
which can be extended holomorphically to at least a region in the whole complex plane
\footnote{For the \textit{inverse scattering method} to  be applicable we will also need that  $S'$
is integrable for real $x$.}. Although a good deal of our analysis will be fairly general, we will eventually 
focus on a very particular choice of $A(x)$ and $S(x)$ so that we have a concrete  configuration of the 
geometry of turning points and Stokes lines. Our  model will be the  $A(x)=S(x)=\sech(2x)$ 
since this has been the first case considered in detail: a very careful numerical analysis of 
the eigenvalue problem was done by Bronski \cite{bron} in $1996$ and a first investigation of 
the formal WKB theory, involving  numerical computations of turning points and Stokes lines, 
was conducted by Miller \cite{mil} in $2001$.  Another model has been studied numerically 
recently in \cite{kcv}; there,  the authors use $A(x)= 30/(1+x^4)$ and $S(x)=15x^2$.

If the parameter $\epsilon$ is not too small, 
then of course the solution is described by  a \say{nonlinear} superposition
of \say{breathers}  (corresponding to purely imaginary eigenvalues of $\mathfrak{D}_\epsilon$), 
traveling solitons  (corresponding to all other  eigenvalues of $\mathfrak{D}_\epsilon$)
and a \say{background radiation contribution}  (corresponding to the continuous spectrum of 
$\mathfrak{D}_\epsilon$). This is in a sense the content of the inverse scattering method.

On the other hand, as $\epsilon\downarrow0$, the solution attains a special form
which in particular regions attains a highly oscillatory behavior.
Often in such problems one expects the validity of the  so-called \textit{finite gap ansatz} 
for the  semiclassical  asymptotics. 
\begin{fga}
Let $(x_0, t_0)$ be a generic  point with  $x_0 \in \mathbb{R}$ and $t_0 > 0$.
The solution $\psi$ of (\ref{ivp-nls}) is asymptotically ($\epsilon\downarrow0$)
described  (locally) as a slowly modulated $G$ phase wavetrain.  More precisely, 
setting $x=x_0+\epsilon  \hat{x}$ and $t=t_0+\epsilon  \hat{t}$ (so that $x_0, t_0$ are 
``slow" variables while $\hat{x}, \hat{t}$ are ``fast" variables), there exist parameters
(all of which depend on the slow variables $x_0$, $t_0$ but not on $\hat{x}, \hat{t}$)
\begin{itemize}
\item
$a, U_0, k_0, w_0$ in $\C$ and
\item
${\bf U}=[U_1, \dots,U_G]^T$, 
${\bf k}=[k_1, \dots, k_G]^T$, 
${\bf w}=[w_1, \dots, w_G]^T$, 
${\bf Y}=[Y_1, \dots, Y_G]^T$, 
${\bf Z}=[Z_1, \dots, Z_G]^T$ 
in $\C^G$
\end{itemize}
such that as $\epsilon\downarrow0$, 
\be\nn
\psi(x,t)= \psi(x_0+ \epsilon  \hat{x}, t_0+\epsilon\hat{t}\hspace{2pt}) 
\ee
has the following leading order asymptotics 
\begin{multline}
\label{asymptotics}
\psi(x,t) \sim
a(x_0, t_0) 
\exp\Big\{
i\Big(\tfrac{U_0(x_0, t_0)}{\epsilon}+k_0(x_0, t_0)\hat{x}-w_0(x_0, t_0)\hat{t}\Big)
\Big\}\\
\cdot
\frac{\Theta\bigg({\bf Y}(x_0, t_0)+i\Big(\frac{{\bf U}(x_0, t_0)}{\epsilon} +
{\bf k}(x_0, t_0) \hat{x}-{\bf w}(x_0, t_0)\hat{t}
\Big)\bigg)}
{
\Theta\bigg({\bf Z}(x_0, t_0)+i\Big(\frac{{\bf U}(x_0, t_0)}{\epsilon} +
{\bf k}(x_0, t_0) \hat{x}-{\bf w}(x_0, t_0)\hat{t}
\Big)\bigg)}.
\end{multline}
\end{fga}

All parameters here are  defined in terms of an underlying
\textit{Riemann surface} $X$ which depends solely on $x_0, t_0$.   The moduli of 
$X$ vary slowly with $x, t$, i.e. they depend on  $x_0, t_0$ but not on
$\epsilon, \hat{x}, \hat{t}$.  $\Theta$ is the G-dimensional \textit{Jacobi theta 
function} associated with $X$.  The genus of $X$ can vary with $x_0, t_0$.  
In fact, the $(x,t)$-plane is  divided into open regions in each of which 
G is constant. 

On the boundaries of such regions (sometimes called ``caustics"; 
they are unions of analytic arcs),  some degeneracies appear in the mathematical 
analysis (we may have ``pinching" of the surfaces $X$ for example).

The above formula (\ref{asymptotics}) gives asymptotics which is  pointwise in $x,t$. 
We refer to \cite{kmm}  and \cite{kr}, for the justification of the above ``finite gap" 
formula in the case of special ``bell-like"  initial data $A(x)$ [and $S(x)=0$], as well as 
the  exact formulae for the parameters and the definition of the theta functions. The formula 
is actually uniformly valid in compact $(x,t)$-sets not containing points on the caustics.
The mathematical theory leading to the asymptotic formula is the well-known non-linear 
steepest descent method. Near the caustics,  boundary layers separating regions of different  
genus appear. For an analysis of the somewhat more delicate behavior (especially for higher 
order terms in $\epsilon$) near the first caustic one can consult \cite{bt}.
  
The following two pictures (prepared for us by \textit{Nikos Efremidis})
exemplify this behavior in two separate cases. In both cases $A(x)=2\sech x$; but
in the first picture $S$ is identically 0 while in the second one $S(x)=\sech x$. Here
$\epsilon =0.1$ and the time is actually rescaled (divided by 0.1).
We only show plots for $|\psi|^2$.

One observes the following:
\begin{itemize}
\item[(i)]
There is no qualitative difference between the two cases, at least for not very large times.
\item[(ii)]
The behavior of the solution is quite different in  three distinct regions in the $(x, t)$-plane.
\item[(iii)]
In the first region (for smaller times t) things are fairly ``smooth". There are no oscillations.  
In the intermediate region fast oscillations appear. The curve separating these two regions 
appears to be fairly continuous.  There is also a third region where the nature of the fast  
oscillations changes.  Again the intermediate and third regions are separated by a seemingly  
continuous curve.
\end{itemize}

\begin{figure}[htbp]
\begin{center}
\includegraphics[width=100mm]{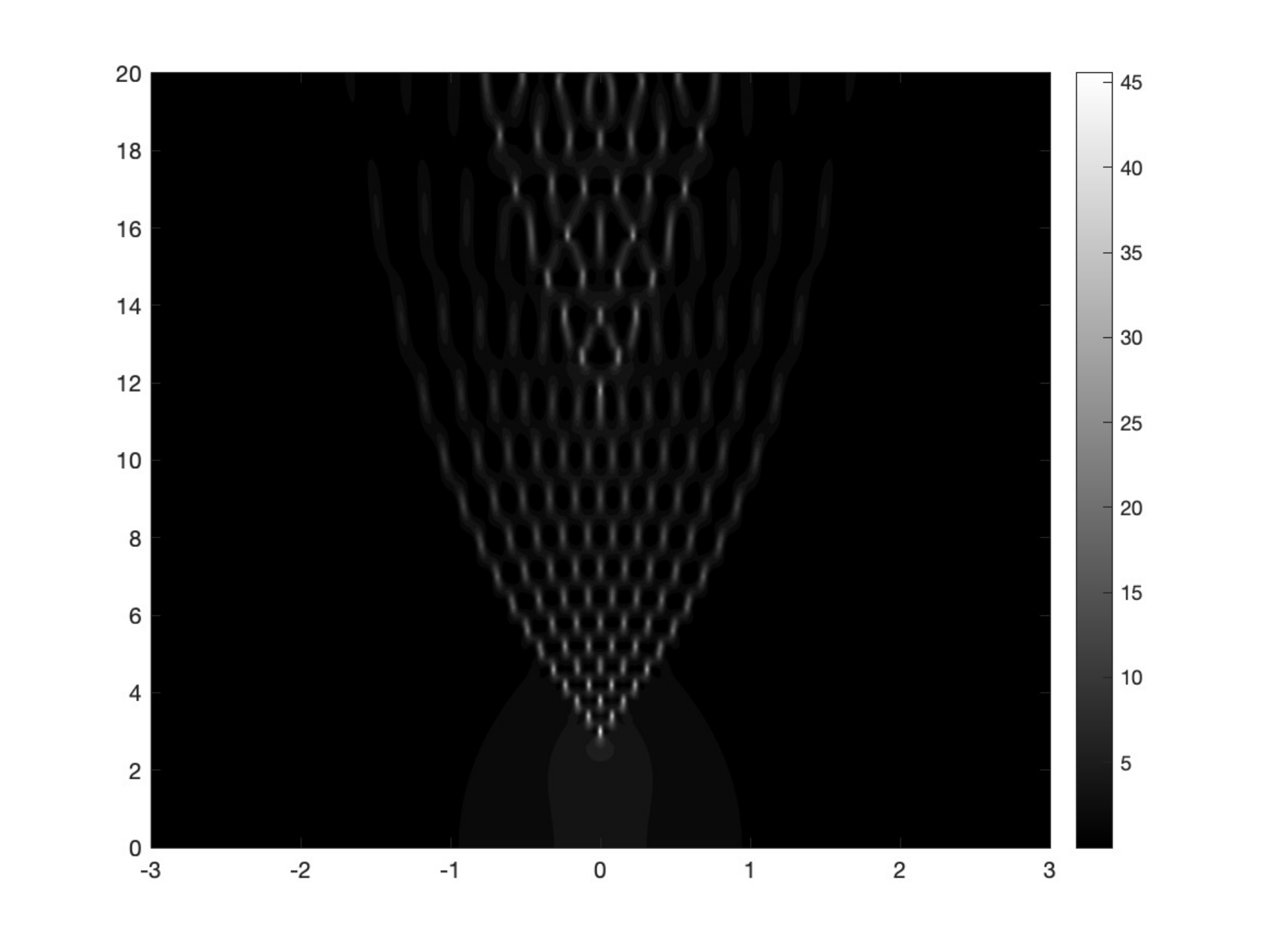}
\end{center}
\caption{
A solution of the initial value problem (\ref{ivp-nls}) for $\epsilon=0.1$ with initial 
data $\psi(x,0)=2\sech x$ [i.e.  $A(x)=2\sech x$ and $S(x)=0$].  In this figure 
$|\psi|^2(x,t)$ is being plotted.  The horizontal axis represents $x$ while $t$ runs 
on the vertical axis (as shown on the left).  The bar on the right of the plot shows 
a graduated scale for the values of $|\psi|^2$ ranging from deep black 
(where $|\psi|^2$ is small) to bright white (where $|\psi|^2$ is big).}
\label{efrem-zero-phase}
\end{figure}

\begin{figure}[htbp]
\begin{center}
\includegraphics[width=100mm]{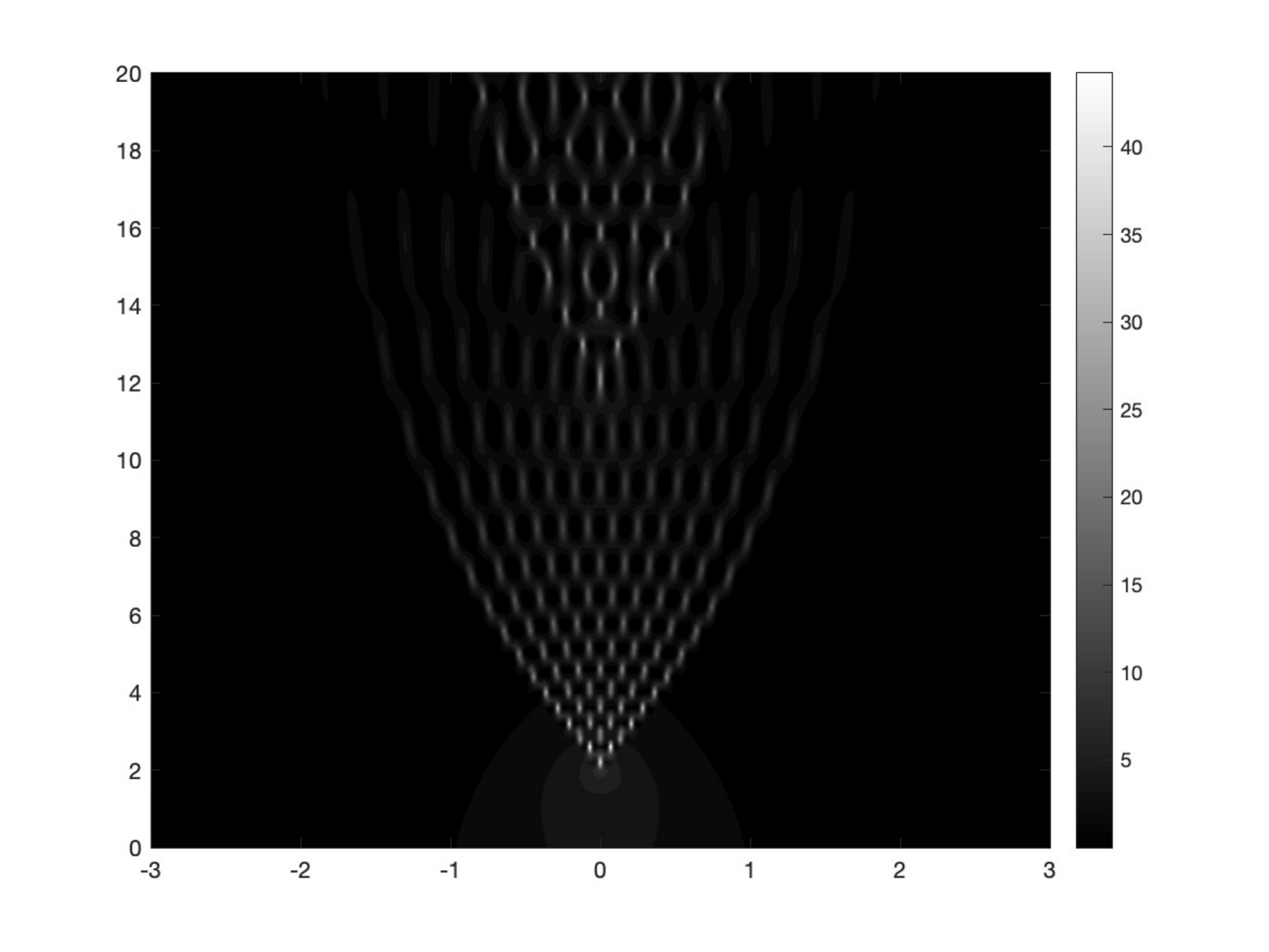}
\end{center}
\caption{
A solution of the initial value problem (\ref{ivp-nls}) for $\epsilon=0.1$ with initial data 
$\psi(x,0)=2(\sech x)\exp\big\{\tfrac{\sech x}{0.1}\big\}$ 
[i.e.  $A(x)=2\sech x$ and $S(x)=\sech x$].  
The figure  shows the plot of $|\psi|^2(x,t)$.  On the horizontal axis lies 
$x$ while $t$ runs on the vertical axis (as shown on the left).  
Also on the right of the plot there is a bar showing a graduated scale
for the values of $|\psi|^2$ ranging from deep black (where $|\psi|^2$ is small) 
to bright white (where $|\psi|^2$ is big).}
\label{efrem-non-zero-phase}
\end{figure}

In \cite{kmm} we have been able to identify the first region in Figure \ref{efrem-zero-phase} 
with the genus 0 region. The intermediate region is the genus 2 region  and then comes the  
genus 4 region.  The separating curves are the ``caustics".  In principle, there can be a very 
large number of such caustics.  But it is also worth pointing out that the $2\sech x$ data,  
being a nonlinear superposition of many (of order $\epsilon^{-1}$) breathers, is actually  
periodic in time with period $\asympt(\epsilon^{-1})$.  Of course, this is not seen in our 
picture,  since what happens at such large times is not shown.

If we were to run our numerics in Figure \ref{efrem-non-zero-phase} for larger times,  
we would also observe many separated traveling solitons at different non-zero speeds,  
since (as we shall see later) the initial data give rise to many eigenvalues with non-zero 
real part.

The question of the semiclassical approximation of the scattering data has a 
deeper significance in view of the instabilities of the problem which apper in 
many levels.  In fact even in the non-semiclassical regime, the {\it focusing} NLS 
is the main model for the so-called ``modulational instability" (see \cite{bf} and \cite{bm}),  
although for positive fixed $ \epsilon $ the initial value problem is well-posed.
Semiclassically the  instabilities become more pronounced.  One way to see this, 
is related to the underlying ellipticity of the formal semiclassical limit.  To be more 
specific,  consider the well-known \textit{Madelung transformation} (see \cite{m})
\begin{equation*}
\begin{cases}
\ro= |\psi|^2\\
\mu= \epsilon\im (\psi^*\psi_x)
\end{cases}
\end{equation*}
where $\psi^*$ denotes the complex conjugate of $\psi$.
Then the initial value problem (\ref{ivp-nls}) becomes
\begin{equation*}
\begin{cases}
\ro_t  +\mu_x = 0\\
\mu_t + \Big(\frac{\mu^2}{\ro} + \frac{\ro^2}{2}\Big)_x = 
\frac{\epsilon^2}{4} \partial_x[\ro (\log \ro)_{xx}]\\
\ro(x,0)=A^2(x)\\
 \mu(x,0)=A^2(x)S'(x).
\end{cases}
\end{equation*}
The formal limit, as $\epsilon\downarrow0$,  is
\begin{equation*}
\begin{cases}
\ro_t  +\mu_x = 0\\
\mu_t + \Big(\frac{\mu^2}{\ro} + \frac{\ro^2}{2}\Big)_x = 0\\
\ro(x,0)=A^2(x)\\
 \mu(x,0)=A^2(x)S'(x).
\end{cases}
\end{equation*}
This is an initial value problem for an elliptic system of equations; and so one expects 
that small perturbations of the initial data (independent of $\epsilon$) can lead to large 
changes in the solution at any given time.

Instabilities also appear, independently, at the spectral analysis of the related 
non-self-adjoint Dirac operator: indeed, small changes of the potential are expected 
to result in relatively large changes in the discrete spectrum;  this is not true when the 
phase $S=0$ but it is true otherwise. We refer to \cite{bron2} for a numerical investigation 
of this fact in the  special case $A(x)=S(x)=sech(2x)$ and to \cite{den} for the discussion of the more general 
phenomenon in the theory of pseudo-differential operators. Instabilities are evident also 
at the related \textit{equilibrium measure problem} (it is a ``max-min" problem; see \cite{kr}),  
the related \textit{Whitham equations} (they are also elliptic) and even in the numerical 
studies of the problem.  What was initially called the modulational instability is just the tip of the iceberg!
  
One might even question whether the whole project is worthwhile studying in detail,  
at least as a valid physical model, in view of all these instabilities.  To this,  a first response is 
that the semiclassical analysis turns out in practice to be relevant even for not so small 
values of the semi-classical parameter $\epsilon$ (for example in applications to nonlinear 
optics).  From  a mathematical point of view,  it is 
a fascinating  instance of a difficult unstable 
problem that can be approximated by solvable  stable problems; this approximation is very 
singular and non-trivial and its study involves important connections to different areas of 
mathematics (namely \textit{PDE theory},  \textit{spectral theory of non-self-adjoint operators},  
\textit{WKB analysis},  \textit{potential theory} (cf. \cite{kr}),  \textit{Riemann surface theory},   
\textit{hydrodynamic instability theory}).

Going back to the pictures presented above, in the first (genus 0) region there are no 
fast oscillations and actually strong semiclassical limits exist for both $\rho$ and $\mu$.  
They actually satisfy the formally limiting PDE.  In the other (higher genus) regions, violent 
oscillations of frequency order $\epsilon^{-1}$ appear and a strong pointwise limit does not 
exist.  But to the extent that we can perform a full asymptotic analysis, both at the direct  
scatterring and the inverse scattering stage,  we can show that there is at least a weak limit 
and we are able to provide complete asymptotic formulae (as above).

The semiclassical analysis  of the NLS solution $\psi$ in \cite{kmm} undertook  the 
asymptotic ($\epsilon\downarrow0$) analysis of the inverse scattering problem, to 
which Zakharov and Shabat have reduced the solution of the equation; in particular,  
we worked on the formulation of that problem as a \textit{Riemann-Hilbert factorisation 
problem} and we applied and extended ideas and calculations going back to the seminal 
work of \textit{Deift, Venakides  and Zhou} (\cite{dz}, \cite{DVZ}).  But no careful 
semiclassical analysis of the $direct$ scattering problem had been undertaken until 
recently.  Instead, an ad hoc approximation of the eigenvalues by their 
\textit{Bohr-Sommerfeld approximants} was used as a starting point (and the reflection  
coefficient was set identically to 0, in the same formal spirit).

In \cite{fujii+kamvi} the rigorous semiclassical analysis of the scattering data of the 
related Dirac (or Zakharov-Shabat) operator was completed in the case where $A$ 
is real analytic,  integrable,  positive, symmetric,  with only one local maximum (and where 
for simplicity the second derivative of $A$ is non-zero). We applied the so-called 
\textit{exact WKB theory}, which will also be applied in the present work.

In \cite{h+k} the analyticity assumption was replaced by a mild smoothness assumption 
and a different method was employed,  going back to \textit{Langer} and \textit{Olver} 
\cite{olver1975}. In a sequel \cite{h+k2},  the general case with several local maxima and 
minima was also completed, under the assumption that potentials are smooth and positive. 
In all the above cases the initial phase $S(x)$ is identically zero. Of course, even with an 
initial phase zero, the Madelung system above shows that a non-trivial phase will appear  
immediately for any small time $t$. But it is also reasonable to postulate a non-zero initial 
phase especially in cases where the discrete spectrum is no more imaginary and the 
instability due to the non-self-adjointness of the Dirac operator is instrumental.

The exact WKB method was first developed for the \textit{Schr\"odinger operator},  
but here we apply it to the Dirac operator that is associated to the 
\textit{focusing NLS equation}. The method goes back to works of \textit{Ecalle} 
\cite{e} and \textit{Voros} \cite{v} but here we argue along the lines of the papers of 
\textit{G\'erard-Grigis} \cite{gg} and \textit{Fujii\'e-Lasser-N\'ed\'elec} \cite{fln}.  
Rather than relying on the usual formal WKB method which relies on asymptotic 
series that are in general divergent,  we use a ``resummation" of the series and in 
fact construct ``exact solutions" in terms of convergent series,  thus resolving a 
problem of ``asymptotics beyond all orders".

For the study of the eigenvalues with small imaginary part,  the exact WKB method seems 
to break down and Olver's  \cite{olver1975} is not directly applicable since we don't know 
a priori that the eigenvalues are purely imaginary (as in \cite{h+k}). Fortunately the somewhat  
forgotten paper \cite{olver1978} which generalizes his  earlier work is proved to be useful 
here.

\subsection{Organization of the paper}
 
The plan of this paper is the following. In the next section we  investigate the eigenvalues 
of the Dirac operator that in the semiclassical limit lie away from the real axis. 
In \S \ref{gen-strat},  we present the main ideas that  will be used in the following 
sections, including the definition of \textit{turning points}, \textit{progressive paths}, 
\textit{Stokes lines}, \textit{admissible contours} and \textit{asymptotic spectral arcs}. 
In \S\ref{exact-wkb}, we introduce the theory of the exact (resummed and  converging) 
WKB solutions and state a basic theorem about convergence and semiclassical asymptotics. 
Then in \S \ref{exact-wkb-connection-1-tp} we prove a theorem that shows how different 
solutions are connected in a neighborhood of a simple turning point. We use it in 
\S \ref{bs-at-two-turn-pts},    to give a rigorous justification of the Bohr-Sommerfeld 
asymptotic conditions for the location of the eigenvalues that lie away from the bifurcation 
point (see Figure \ref{cut-the-L}).  In \S\ref{near-the-bf} we refine the exact WKB analysis in 
a small neighborhood of a bifurcation point.  In the last section of this paragraph, 
namely \S \ref{geometry}, we apply all the results above to the particular case  
$A(x)=S(x)=\sech(2x)$, where we make heavy use of the numerical results and figures 
of \cite{mil}.

We are not able at this point to extend the exact WKB method  for the eigenvalues near $0$.
In paragraph \ref{evs-near-zero}, we use instead Olver's theory as it was extended in 
\cite{olver1978}. We start in \S \ref{preparations-near-0} with some preparatory material and 
move to \S \ref{sols-1-turn-pt} where we encounter the behavior of solutions near a simple
turning point (compare this with \S \ref{exact-wkb-connection-1-tp}). Next, comes 
\S \ref{asymptt-estim-for-error} where we present asymptotics for the \textit{error-terms} in our 
solutions.  The connection of the above solutions -in the way of Olver- is achieved in 
\S \ref{olver-connection}.  Using the asymptotics for the error-terms, we arrive 
at the asymptotic form of these connection formulas in \S \ref{connection-asymptotics}. 
We then assemble all the previous results, and in \S \ref{application-of-connection} we provide 
a Bohr-Sommerfeld quantization condition for the location of the eigenvalues that lie near zero. 

In paragraph \ref{norming-constants} we are interested in the behavior of the corresponding 
norming constants to the eigenvalues discussed previously. Then in paragraph \ref{reflection}, 
we use Olver's method as exploited in \cite{h+k}, to study the behavior of the reflection coefficient; 
we prove that it is  small away from the point 0 (in the spectral real line) in \S \ref{away-0-refle}. 
Also, in \S \ref{near-0-refle}, we give asymptotics for the reflection coefficient nearer to zero. 

Finally, in paragraph \ref{application-nls},  we present the application of the WKB results to 
the focusing NLS problem and explain how the analysis of \cite{kmm}, \cite{kr} extends to  our present case
 in view of these results.  

Since the approximation technique we are using for the near-zero eigenvalues relies on 
modified Bessel functions, we have included in the appendix a paragraph that contains all 
the material that we shall need.

\subsection{Notation}
\label{notation}

Before we start our main exposition, we specify  some notation used throughout 
our work.

\begin{itemize}
\item
The symbol $\N_0$ is used for the set $\N\cup\{0\}$.
\item
The Riemann sphere is denoted by $\overline{\C}=\C\cup\{\infty\}$.
\item
The (topological) closure of a set $S\subseteq\overline{C}$ shall be denoted by 
$\clos(S)\subseteq\overline{C}$.
\item
We use the notation $\R^\pm=\{\,x\in\R\mid\pm x>0\,\}$ for the sets of 
positive/negative numbers.
\item
Complex conjugation is denoted with a star superscript, \say{$*$}; 
i.e. $z^*$ is the complex conjugate of $z$.
\item
The transpose of a matrix $M$ is denoted by $M^T$.
\item
For the upper half-plane we write $\mathbb{H}^+=\{\,z\in\C\mid\Im z>0\,\}$.
\item
For a set $\Sigma\subseteq\R$, when we write $i\Sigma$ we mean 
the set $\{i\kappa\mid\kappa\in\Sigma\}\subset\C$. Also, for 
$z_1, z_2\in\C$ the set $[z_1,z_2]\subset\C$ denotes the closed line segment 
starting at $z_1$ and ending at $z_2$.
\item
The set of holomorphic functions from a domain $A$ to a domain $B$ shall be denoted 
by $\mathscr{H}(A;B)$. For the holomorphic functions from $A$ to $\C$, we simply write 
$\mathscr{H}(A)$.
\item
The symbol $\sigma(a\rightsquigarrow b)$ is used to denote a
curve in $\C$,  parametrized by $\sigma$,  starting at $a$ and ending
at $b$.
\item
Let ${\mathcal W}[\textbf{f},\textbf{g}]$ be defined as the determinant
of the $2\times2$ matrix $[\textbf{f}\hspace{3pt}\textbf{g}]$ consisting of
the two column vectors $\textbf{f}$, $\textbf{g}\in\C^2$.
\item
The wronskian of the pair $\{f,g\}$ of functions is represented by $W[f,g]$.
\item
For a continuous function $f$ in $\R$, the notation $\int f(t)dt$ denotes any of its 
primitives. 
\item
If $z\in\C$, we denote by $\text{ph}z\in(-\pi,\pi]$ its phase.
\item
The abbreviations LHS and RHS stand for left-hand side and right-hand side 
respectively.
\item
EV and EF stand for eigenvalues and eigenfunctions respectively.
\item
The acronym MBF means modified Bessel function.
\end{itemize}

\section{Eigenvalues Away From The Real Axis}
\label{evs}

\subsection{General strategy}
\label{gen-strat}
This paragraph  presents a general scheme (see \cite{mil}, and also \cite{fr} 
for the case of Schr\"odinger operators) for finding 
the semiclassical EVs of the Dirac (or Zakharov-Shabat) operator
\begin{equation}
\label{dirac}
\mathfrak{D}_\epsilon=
\begin{bmatrix}
-\frac{\epsilon}{i}\frac{d}{dx} & \omega(x,\epsilon)\\
-\omega^{*}(x,\epsilon) & \frac{\epsilon}{i}\frac{d}{dx}
\end{bmatrix}
\end{equation}
where $\omega$ is the complex potential function
\be\nn
\omega(x,\epsilon)=-iA(x)\exp\Big\{i\frac{S(x)}{\epsilon}\Big\}
\ee
for some real-valued, analytic  functions $A(x)$ and $S(x)$ 
defined on the real line 
[$\omega^{*}(x,\epsilon)$ represents the complex conjugate of 
$\omega(x,\epsilon)$] such that $A$, $S$ and $S'$ are integrable. As previously 
stated in the introduction, we shall be interested in the spectral (generalized eigenvalue) 
problem
\begin{equation}
\label{ev-problem}
\mathfrak{D}_\epsilon\textbf{u}=\lambda\textbf{u} 
\end{equation} 
where $\textbf{u}=[u_1\hspace{3pt}u_2]^T$ is a function from $\R$ to $\C^2$ 
and $\lambda\in\mathbb{C}$ plays the role of the spectral parameter.  

The operator 
$\mathfrak{D}_\epsilon$ is non-self-adjoint and the eigenvalues are distributed 
in the complex plane. But in the semiclassical limit,  their location is restricted 
close to the \textit{numerical range} $\mathcal R$ of the semiclassical symbol.

We shall see [see the reasoning that follows, eventually leading to \eqref{dirac-reduced}] 
that the eigenvalues of $\mathfrak{D}_\epsilon$ coincide with those of the operator
\be
\label{inter-operator}
\begin{bmatrix}
-\frac\epsilon i\frac d{dx}-\frac 12 S'(x) & -iA(x)\\
-iA(x) & \frac\epsilon i\frac d{dx}-\frac 12 S'(x)
\end{bmatrix}.
\ee
We have the following definition.
\begin{definition}
The \textbf{numerical range} $\mathcal R$ of the symbol
$$
\begin{bmatrix}
-\xi-\frac 12 S'(x) & -iA(x)\\
-iA(x) & \xi-\frac 12 S'(x)
\end{bmatrix}
$$
of the matrix/operator in (\ref{inter-operator}) is defined to be the union of the 
set of values of the eigenvalues $-\frac 12S'(x)\pm\sqrt{\xi-A(x)^2}$ of this matrix 
when $(x,\xi)$ varies in the whole phase space $\R^2$. 
\end{definition}
We see that
\begin{align}
\nn
\mathcal R
&=
\R\cup {\mathcal R}_0\quad\text{where}\quad\\ 
\label{r0}
{\mathcal R}_0
&=
-\frac 12\Big[\inf_{x\in\R} S'(x), \sup_{x\in\R}S'(x)\Big]\times
\Big[-\sup_{x\in\R}|A(x)|, \sup_{x\in\R}|A(x)|\Big].
\end{align}
The following result concerning the semiclassical eigenvalues is well-known; see e.g. \cite{den}. 
\begin{proposition}
For any compact set $K\subset\R^2$ with $K\cap {\mathcal R}_0=\emptyset$, 
there exists $\epsilon_0>0$ such that there is no eigenvalue in $K$ for 
$0<\epsilon<\epsilon_0$.
\end{proposition}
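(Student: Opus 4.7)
The plan is to reduce the problem to a statement of uniform symbolic ellipticity on $\R^2$ and then apply a standard semiclassical parametrix argument. Since the conjugation $\mathbf{u}=\mathrm{diag}(e^{iS(x)/2\epsilon},e^{-iS(x)/2\epsilon})\tilde{\mathbf{u}}$ is unitary on $L^2(\R;\C^2)$ and intertwines $\mathfrak{D}_\epsilon$ with the operator in \eqref{inter-operator}, one may work directly with the reduced operator $\tilde{\mathfrak{D}}_\epsilon$, whose matrix principal symbol is
\[
P(x,\xi)=\begin{bmatrix}-\xi-\tfrac12 S'(x) & -iA(x)\\ -iA(x) & \xi-\tfrac12 S'(x)\end{bmatrix}.
\]

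The key symbolic estimate is that for any compact $K\subset\C\setminus\R$ with $K\cap\mathcal{R}_0=\emptyset$, the symbol $P(x,\xi)-\lambda$ is invertible uniformly on $\R^2\times K$. A direct computation gives $\det(P-\lambda)=(\tfrac12 S'(x)+\lambda)^2+A(x)^2-\xi^2$. Writing $\lambda=\alpha+i\beta$ with $\beta\neq 0$, the imaginary part of this determinant is $2\beta(\tfrac12 S'(x)+\alpha)$, so its vanishing at some real $(x,\xi)$ would force $\tfrac12 S'(x)=-\alpha$; substituting this into the real part then yields $\xi^2=A(x)^2-\beta^2\ge 0$, in particular $|\beta|\le|A(x)|$, placing $\lambda$ inside $\mathcal{R}_0$. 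Hence $K\cap\mathcal{R}_0=\emptyset$ forces $\det(P-\lambda)\neq 0$ everywhere on $\R^2\times K$; continuity, compactness of $K$, and the decay of $A,S'$ at infinity upgrade this pointwise non-vanishing to a uniform lower bound $\inf|\det(P-\lambda)|\ge c_0>0$, whence $\sup\|(P-\lambda)^{-1}\|\le C<\infty$.

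With this ellipticity in hand, I would build a semiclassical parametrix by Weyl quantization: $R_\epsilon(\lambda):=\mathrm{Op}^W_\epsilon\bigl((P-\lambda)^{-1}\bigr)$ satisfies, via the Moyal composition formula,
\[
R_\epsilon(\lambda)(\tilde{\mathfrak{D}}_\epsilon-\lambda)=I+\epsilon\,Q_\epsilon(\lambda),\qquad \|Q_\epsilon(\lambda)\|_{L^2\to L^2}\le C,
\]
uniformly for $\lambda\in K$. For $0<\epsilon<\epsilon_0$ small enough, $I+\epsilon Q_\epsilon(\lambda)$ is invertible on $L^2$ by Neumann series, so $\tilde{\mathfrak{D}}_\epsilon-\lambda$ is injective on its natural domain; therefore $\lambda$ cannot be an eigenvalue of $\mathfrak{D}_\epsilon$. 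This is precisely the content invoked from \cite{den}.

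The chief technical obstacle is the parametrix itself: one must place $(P-\lambda)^{-1}$ in a semiclassical symbol class with derivative estimates uniform in $\lambda\in K$, and verify that the Moyal expansion yields an $O(\epsilon)$ operator-norm remainder. Because $A$ and $S'$ are analytic with all derivatives bounded on $\R$, $(P-\lambda)^{-1}$ belongs to the Kohn--Nirenberg class $S(1;\R^2)$ uniformly in $\lambda\in K$ and the standard semiclassical calculus applies directly. A hands-on alternative, avoiding any general symbol machinery, is to partition $\R_x$ into a large compact piece, handled by classical calculus, and the exterior $|x|\ge R$, where $\tilde{\mathfrak{D}}_\epsilon-\lambda$ is a small perturbation of the diagonal free problem $\mathrm{diag}(-\epsilon D_x,\epsilon D_x)-\lambda$, whose resolvent is elementary for $\lambda$ off the real axis.
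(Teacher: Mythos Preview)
The paper gives no proof of its own here; it simply cites \cite{den}. Your sketch is precisely the standard pseudospectral/semiclassical-ellipticity argument underlying that reference: verify that $(P(x,\xi)-\lambda)^{-1}$ is a bounded matrix symbol uniformly on $\R^2\times K$, Weyl-quantize it to a parametrix, and invert the $O(\epsilon)$ Neumann remainder. So your approach \emph{is} the intended proof, only spelled out rather than deferred to the literature.

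One small point worth noting. The paper states the result for $K\subset\R^2\cong\C$ with $K\cap\mathcal{R}_0=\emptyset$, which in principle allows $K$ to meet the real axis, whereas you silently restrict to $K\subset\C\setminus\R$. Your ellipticity argument genuinely breaks down on $\R$: for real $\lambda$ the determinant $(\tfrac12 S'(x)+\lambda)^2+A(x)^2-\xi^2$ always vanishes for some real $\xi$, so $(P-\lambda)^{-1}$ is not a bounded symbol. The real case therefore needs a separate (non-semiclassical) remark: for each fixed $\epsilon>0$ the Zakharov--Shabat operator with integrable $A,S'$ has only continuous spectrum on $\R$, since the Jost solutions at real $\lambda\neq 0$ behave like $e^{\pm i\lambda x/\epsilon}$ at infinity and are never in $L^2$. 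With that one-line addendum your argument covers the full statement as written.
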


It is clear that (\ref{ev-problem}) can be written equivalently as a first order 
system of differential equations,  namely 
\begin{equation}
\label{miller-ev}
\frac{\epsilon}{i}\frac{d}{dx}\textbf{u}(x,\lambda,\epsilon)=
K(x,\lambda,\epsilon) \textbf{u}(x,\lambda,\epsilon) 
\end{equation} 
where 
\begin{equation}
\label{matrixN} 
K(x,\lambda,\epsilon)= 
\begin{bmatrix} 
-\lambda & \omega(x,\epsilon)\\ 
\omega^{*}(x,\epsilon) & \lambda \end{bmatrix}.  
\end{equation} 

Applying first the transformation 
\begin{equation*} 
\textbf{u}(x,\lambda,\epsilon)= 
\begin{bmatrix} 
\exp\big\{i\frac{S(x)}{2\epsilon}\big\} & 0\\ 
0 & \exp\big\{-i\frac{S(x)}{2\epsilon}\big\}
\end{bmatrix} 
\tilde{\textbf{v}}(x,\lambda,\epsilon) 
\end{equation*} 
where $\tilde{\textbf{v}}=[\tilde{v}_1\hspace{3pt}\tilde{v}_2]^T$,  takes the system in 
(\ref{miller-ev}) to a new form 
\begin{equation}
\label{dirac-reduced}
\frac{\epsilon}{i}\frac{d}{dx}\tilde{\textbf{v}}(x,\lambda,\epsilon)=
\begin{bmatrix} 
-\lambda-\frac{1}{2}S'(x) & -iA(x)\\ 
iA(x) & \lambda+\frac{1}{2}S'(x)
\end{bmatrix} 
\tilde{\textbf{v}}(x,\lambda,\epsilon) 
\end{equation} 
where prime denotes differentiation with respect to $x$.  Next, we apply the mapping
\begin{equation*}
\tilde{\textbf{v}}(x,\lambda,\epsilon)=
\begin{bmatrix}
1 & 1\\
-1 & 1
\end{bmatrix}
\textbf{v}(x,\lambda,\epsilon)
\end{equation*}
where
$\textbf{v}=[v_1\hspace{3pt}v_2]^T$, to finally express the initial system as 
\begin{equation}
\label{miller-final}
\frac{\epsilon}{i}\frac{d}{dx}\textbf{v}(x,\lambda,\epsilon)
=M(x,\lambda)\textbf{v}(x,\lambda,\epsilon)
\end{equation}
where
\be
\label{mu-matrix}
M(x,\lambda)=
\begin{bmatrix}
0 & g_+(x,\lambda)\\
-g_-(x,\lambda) & 0
\end{bmatrix}
\ee
\begin{equation}
\label{gigi}
\text{and}\quad g_{\pm}(x,\lambda)=
\mp[\lambda+\frac{1}{2}S'(x)\pm iA(x)].
\end{equation}

The important function to study  is
\begin{align}
\label{turn-pt}
V_0(x,\lambda)
&=
\det M(x,\lambda)\nn\\
&=
g_{-}(x,\lambda)g_{+}(x,\lambda)\nn\\
&=
-[\lambda+\tfrac{1}{2}S'(x)]^2-A^{2}(x).
\end{align}
The zeros of this function play an important role. So we state the following definition.
\begin{definition}
\label{definition-turn-pt}
For a fixed value of $\lambda\in\mathbb{C}$,  the zeros of $V_0(\cdot,\lambda)$
in $\mathbb{C}$ are called  \textbf{turning points} of (\ref{miller-final}).
\end{definition}

In our application, the potential function 
$V_0(\cdot,\lambda)$ in (\ref{turn-pt}), i.e.
\be\nn
V_0(x,\lambda)=
-[\lambda+\tfrac{1}{2}S'(x)]^2-A^{2}(x)
\ee
is complex-valued for $\lambda\in\C\setminus\R$ and the turning points are in 
general complex. Assuming that the functions $A(x)$ and $S(x)$ extend analytically 
to some fixed complex neighborhood $\Omega_0$ of the real axis, we may consider the asymptotics 
of solutions of (\ref{miller-ev}) [eventually of (\ref{miller-final})] on some contour 
(in the complex $x$-plane), other than the real $x$-axis, connecting $-\infty$ to $+\infty$ 
and forcing that contour to pass through at least one pair of complex turning points.  

Let us define
$$
z(x,\lambda,\alpha)=i\int_{\gamma(\alpha\rightsquigarrow x)}
\sqrt{-V_0(t,\lambda)}dt
$$
for a fixed point $\alpha\in \Omega_0$.
\begin{definition}
\label{progress-path}
A path $\gamma$ in $\Omega_0$ along which $\re z$
is 
strictly monotone will be called a \textbf{progressive path}. 
\end{definition}
\begin{remark}
This notion does not depend on the branch of the square root nor the base point $\alpha$.
This is equivalent to  $\gamma$ been  {\it transversal} to the Stokes curves  defined later in Definition \ref{stokes-curves-gg}.
When the branch of the square root and the orientation of the path are specified, we also say 
a path is $+$-progressive (resp. $-$-progressive) if $\re z$
is strictly increasing (resp. decreasing). 
\end{remark}

It is important to notice that, in our setting where $S'$ and $A$ tend to 0 at infinity, 
$\sqrt{-V_0(t,\lambda)}$ tends to $\pm \lambda$.
Hence if $\im \lambda$ is non-zero, then the real axis is progressive near infinity. 

The level curves of $\Re[z(x,\lambda,\alpha)]$ in the complex $x$-plane are very important 
for what follows. Indeed, $(\pm)$-progressive curves are always transversal to those.  For that, 
they deserve a definition.
\begin{definition}
\label{stokes-curves-gg}
The level curves of $\Re[z(x,\lambda,\alpha)]$ in the complex
$x$-plane are called the \textbf{Stokes lines} (or \textbf{Stokes curves}) of 
the system (\ref{miller-final}).  
\end{definition}

The geometric configuration of the Stokes curves is crucial in the investigation of 
the domain of validity of the asymptotic expansion of the WKB solutions. Notice that 
from a simple turning point [i.e.  a simple zero of the function $V_0(\cdot,\lambda)$ 
in (\ref{turn-pt})], exactly three Stokes lines emanate.  At such a point,  the angles 
between two Stokes curves are all $2\pi/3$.

Fix $\lambda\in\C\setminus \R$. With the above intuition in mind,  we have the 
following definition.
\begin{definition}
\label{x-appropriate-admissible-curve}
Let  $\{x_-(\lambda)$,  $x_+(\lambda)\}$ be a pair of simple complex roots of 
$V_0(x,\lambda)=0$.  A contour $C=C^-\cup C^0\cup C^+$ in $\Omega_0$
shall be called \textbf{admissible} if the following four conditions hold true:
\begin{itemize}
\item[(i)] 
$V_0$ is holomorphic in $x$, in the region of the
complex $x$-plane enclosed by $C$ and the real $x$-axis (this is needed to ensure
that the approximate eigenfunctions can be continued back to the real $x$-axis).
\item[(ii)]
The contour $C^-$ is a progressive path from $x_-(\lambda)$ to $-\infty$ which 
coincides with the real axis for  $-\re x\gg 1$.
\item[(iii)]
The contour $C^+$ is a progressive path from $x_+(\lambda)$ to $+\infty$ which 
coincides with the real axis for  $\re x\gg 1$.
\item[(iv)]
The contour $C^0$ is a path from $x_-(\lambda)$ to $x_+(\lambda)$ along which 
$\re z(x,\lambda, x_-(\lambda))$ vanishes identically; in other words, $C^0$ is a 
Stokes line.
\end{itemize}
\end{definition}

\begin{remark}
The term \say{admissible} also appears in \cite{mil} but here, we impose 
a stronger requirement on $C^-$ and $C^+$. This stronger definition is 
directly applicable to the exact WKB method of the next section, and was 
first used for the study of quantum resonances of the Schr\"odinger operator 
in \cite{fr}.
\end{remark}


Item (iv) in the list above can be reinterpreted as a differential equation for the path 
$C^0$ in the complex $x$-plane.  Indeed, if $x=u+iv$, where $u,v\in\mathbb{R}$, then 
a field of curves is defined by the differential relation
\begin{equation}
\label{de-real-path}
\Re\bigg\{i\sqrt{-V_0(u+iv,\lambda)}(du+idv)\bigg\}=0.
\end{equation}

The idea guiding the definition above is that the set of $\lambda$ for which
such admissible contours exist  is (hopefully) the locus of points in the complex plane that 
attracts the actual eigenvalues as $\epsilon \to 0$. This cannot be true in general,
as we shall see soon, but an appropriately extented generalisation of this statement 
probably is. See the last remark \ref{finite-gap-remark} of this section.

Suppose now that for a pair of complex turning points $\{x_-(\lambda),x_+(\lambda)\}$ 
we have the condition
\begin{equation}
\label{lambda-relation}
\Re\bigg\{i
\int_{C^0(x_-(\lambda)\rightsquigarrow x_+(\lambda))}
\sqrt{-V_0(t,\lambda)}dt
\bigg\}
=0
\end{equation}
and let  $\lambda$ vary. Given a pair of turning points depending on $\lambda$ that
are distinct throughout a domain in the complex $\lambda$-plane,
relation (\ref{lambda-relation}) itself determines a curve in the complex
$\lambda$-plane. If $\lambda$ is on one of these curves, the standard
WKB procedure can be expected to apply to determine whether $\lambda$ is
in fact an $o(1)$ distance away from an eigenvalue (of course, this is 
subject to the curve avoiding any singularities 
and the existence of appropriate progressive paths $C^-$, $C^+$).

Therefore, as $\epsilon\downarrow0$, we expect that the discrete eigenvalues 
of (\ref{dirac}) will accumulate on the union of curves in the complex 
$\lambda$-plane described by formula (\ref{lambda-relation}),
with the union being taken over pairs of complex turning points.
Hence we are led to the following.
\begin{definition}
\label{asympt-spec-arc}
The curves in the complex $\lambda$-plane consisting of $\lambda$-points 
that give rise to admissible contours $C$ (on the $x$-plane) will be called 
\textbf{asymptotic spectral arcs}.
\end{definition}

\begin{definition}
\label{asymptotic-spectrum}
The set that is defined as the union of all the asymptotic spectral arcs in the complex 
$\lambda$-plane will be called the \textbf{asymptotic spectrum} of our Dirac operator. 	
\end{definition}

\begin{remark}
It follows from the arguments above that this asymptotic spectrum is a union  
of analytic arcs; if for each $\lambda$ there is only a finite number of turning points, 
then the asymptotic spectrum is a finite union of asymptotic spectral arcs. 
\end{remark}

Applying the exact WKB theory stated below in a neighborhood of the contour $C$ 
(admissible contour), we shall rigorously obtain the eigenvalue approximation by 
$\lambda$'s satisfying
\begin{equation}
\label{bsqc}
\frac{1}{\pi\epsilon}
\Im\bigg\{i
\int_{C^0(x_-(\lambda)\rightsquigarrow x_+(\lambda))}
\sqrt{-V_0(t,\lambda)}dt
\bigg\}
-\frac{1}{2}
\in\mathbb{Z}
\end{equation}
which can be interpreted as a \textit{Bohr-Sommerfeld quantization rule}.
Such $\lambda$'s are on the asymptotic spectral arc corresponding to the
condition (\ref{lambda-relation}) for the turning points $x_-(\lambda)$ and 
$x_+(\lambda)$.

We remark that, in general, there may well be eigenvalues which are not associated 
with such an asymptotic spectral arc as defined so far. This is illustrated by the 
example below.

\begin{example}
Consider  the simplest case where $S(x)$ is linear, i.e.
$S'(x)=c\in \R$ is constant.  In this case, the rectangular region ${\mathcal R}_0$ 
corresponds to the segment 
$${\mathcal R}_0=\Big\{-\frac{c}{2}\Big\}\times
\Big[-\sup_{x\in\R}|A|, \sup_{x\in\R}|A|\Big].$$

Additionally,  if $A(x)$ is ``bell-shaped", i.e.  if it decays at $\pm\infty$ and 
$xA'(x)<0$ in $\R^\pm$, then for any $\lambda$ lying in either of the 
segments $[-\frac c2,-\frac c2+i\sup A]$ or $[-\frac c2,-\frac c2-i\sup A]$, there 
exist exactly two real, simple turning points $x_-(\lambda)$,  $x_+(\lambda)$ 
connected by a Stokes line $C^0=[x_-(\lambda), x_+(\lambda)]\subset \R$, and 
$C_-=(-\infty, x_-(\lambda))$,  $C_+=(x_+(\lambda), +\infty)$ are progressive.  
Hence $C=\R$ is an admissible contour and the vertical segments 
$[-\frac c2,-\frac c2+i\sup A]$ and $[-\frac c2,-\frac c2-i\sup A]$ are asymptotic 
spectral arcs. 

Now, if we keep the linearity of $S$ but consider $A(x)$ to be double humped, 
say $A(x)=[(x^2-1)^2+1]^{-1}$, the complex intervals $[-\frac c2, \frac{-c+i}2]$,   
$[-\frac c2, \frac{-c-i}2]$ are asymptotic spectral arcs just as above. In particular, 
when $\lambda=\frac{-c\pm i}2$, there appears a double turning point at the origin 
$x=0$ on the Stokes line $C^0=[-\sqrt 2,\sqrt 2]$.  However, for 
$\lambda\neq\frac{-c\pm i}2$ in $[\frac{-c+i}2, -\frac c2+i]$ or $[\frac{-c-i}2, -\frac c2-i]$, 
$C^0$ splits into two Stokes lines, and such $\lambda$ does not belong to the 
asymptotic spectral arcs according to the above definition (although there do exist 
eigenvalues near such $\lambda$, see \cite{h+w}). 
\end{example}

But there is no reason why we should not allow two or more Stokes lines apearing in an admissible
path. We thus propose the following general definition. We note 
however that for the specific example we eventually focus on, $A(x)=S(x)=\sech(2x)$, 
there is at most one Stokes line appearing.

\begin{definition}
\label{general-admissible-curve}
Consider $N\in\N$. In general, we call a contour $C$ \textbf{admissible} if the following 
conditions hold.
\begin{itemize}
\item
$V_0$ is holomorphic in $x$, in the region of the
complex $x$-plane enclosed by $C$ and the real $x$-axis.
\item
$C$ coincides with $\R$ outside a compact set.
\item
$C$ contains  a finite number of pairs of  simple turning points 
$x_-^i(\lambda)$, $x_+^i(\lambda)$, where $i=1, ... , N$, with
$$
V_0(x^i_\pm(\lambda),\lambda)=0,
$$
\item
$C$ consists of a progressive curve $C^-$ connecting $- \infty$ and $x_-^1(\lambda)$,
a progressive curve $C^+$ connecting  $x_+^N(\lambda)$ and $+ \infty$,  progressive 
curves connecting $x^i_+(\lambda)$ and $x^{i+1}_-(\lambda)$  ($i=1,..., N-1$) and 
Stokes lines connecting $x^i_-(\lambda)$ and $x^i_+(\lambda)$, $i=1, ... , N$.
\end{itemize}
\end{definition}

Even this  more general definition cannot be completely comprehensive as a 
characterization of eigenvalues, since it does not take into account double 
truning points, which can exist, at least non-generically, as we will eventually see.

\subsection{The exact WKB method}
\label{exact-wkb}
In this and next subsection \ref{exact-wkb-connection-1-tp}, we review the local 
theory of the exact WKB method applied to our Zakharov-Shabat operator in a 
fixed open set in $\C$.

Let us fix $\lambda$, let $\Omega$ be a simply connected subdomain of the 
$x$-complex plane free from turning points, and take a fixed point 
$\alpha\in\Omega$. We define a \textit{phase map}
\begin{align}
z(\cdot,\lambda,\alpha):\Omega
&\rightarrow
\mathbb{C}\quad\text{with}\nn\\
\label{phase}
z(x,\lambda,\alpha)=i\int_{\gamma(\alpha\rightsquigarrow x)}
&
\sqrt{-V_0(t,\lambda)}dt
\end{align}
where of course the path of integration is irrelevant and we choose the 
branch of $\sqrt{-V_0(x,\lambda)}$ which makes it positive for large $x>0$.  
Observe that $z$ is uniquely defined in $\Omega$ and it maps 
$\Omega$ conformally to $z(\Omega,\lambda,\alpha)$.

We look for solutions of (\ref{miller-final}) having the form
\begin{equation}
\textbf{v}(x,\lambda,\epsilon,\alpha)=
\exp\bigg\{\pm\frac{z(x,\lambda,\alpha)}{\epsilon}\bigg\}
\tilde{\textbf{w}}^\pm[z(x,\lambda,\alpha)]
\end{equation}
for
$\tilde{\textbf{w}}^\pm=[\tilde{w}^{\pm}_1\hspace{3pt}\tilde{w}^{\pm}_2]^T$.
Substituting this in (\ref{miller-final}), we get the system
\begin{equation}\label{ansatz-1-sys}
\frac{\epsilon}{i}\frac{d}{dz}\tilde{\textbf{w}}^{\pm}(z)
=
\begin{bmatrix}
\pm i & H(z)^{-2}\\
-H(z)^{2} & \pm i
\end{bmatrix}
\tilde{\textbf{w}}^{\pm}(z)
\end{equation}
where [upon recalling (\ref{gigi})] the function $H$ is given by
\begin{equation*}
H(z)=\bigg[\frac{g_-(x,\lambda)}{g_+(x,\lambda)}\bigg]^{1/4}.
\end{equation*}
Finally,  we set
\begin{equation*}
P_{\pm}(z)=
\begin{bmatrix}
H(z)^{-1} & H(z)^{-1} \\
iH(z) & -iH(z)
\end{bmatrix}
\begin{bmatrix}
0 & 1\\
1 & 0
\end{bmatrix}^{\frac{1\pm1}{2}}
\end{equation*}
and apply the transformation
\be\nn
\tilde{\textbf{w}}^{\pm}(z)=P_{\pm}(z)\textbf{w}^{\pm}(z)
\ee
for $\textbf{w}^\pm=[w^{\pm}_{\rm even}\hspace{3pt}w^{\pm}_{\rm odd}]^T$.  
Then the system in (\ref{ansatz-1-sys}) is written as
\begin{equation}\label{ansatz-final-sys}
\frac{\epsilon}{i}\frac{d}{dz}\textbf{w}^{\pm}(z)
=
\begin{bmatrix}
0 & \mathcal{H}(z)\\
\mathcal{H}(z) & \mp\frac{2}{\epsilon}
\end{bmatrix}
\textbf{w}^{\pm}(z)
\end{equation}
in which we consider $\mathcal{H}$ to be defined by
\begin{equation}
\mathcal{H}(z)=\frac{\dot{H}(z)}{H(z)}
\end{equation}
and where the dot denotes differentiation with respect to $z$. Easily, one can
arrive at
\begin{equation*}
\mathcal{H}(z)=\frac{d}{dz}\log[H(z)]=
\frac{1}{4}
\frac{A(x)S''(x)-2\lambda A'(x)-A'(x)S'(x)}
{\big\{[\lambda+\tfrac{1}{2}S'(x)]^2+A^{2}(x)\big\}^{3/2}}.
\end{equation*}

In the usual WKB theory,  the vector valued \textit{symbols} $\textbf{w}^{\pm}$ are 
constructed as a power series in the parameter $\epsilon$,  which is in general divergent.
Here we would like to introduce the reader to the so-called {\it exact} WKB method (along 
the lines of G\'erard-Grigis \cite{gg} and Fujii\'e-Lasser-N\'ed\'elec \cite{fln}).  This method 
consists in the resummation of this divergent series in the following way.  By taking a point 
$x_0\in\Omega$,  setting 
\be\nn
z_0\equiv z_0(\lambda)=z(x_0,\lambda,\alpha)
\ee  
and postulating the \textit{series ansatz}
\begin{equation}
\label{formal-series}
\textbf{w}^{\pm}=
\sum_{n=0}^{\infty}\textbf{w}^{\pm}_{n}=
\sum_{n=0}^{\infty}
\begin{bmatrix}
w^{\pm}_{2n}\\
w^{\pm}_{2n-1}
\end{bmatrix}\equiv
\begin{bmatrix}
w^{\pm}_{\rm even}\\
w^{\pm}_{\rm odd}
\end{bmatrix}
\end{equation}
we see from (\ref{ansatz-final-sys}) that the functions $w^{\pm}_{n}, ~-1\leq n\in\mathbb{Z},$ 
can  be defined inductively by
\begin{equation*}
w_{-1}^\pm\equiv 0,\quad w_0^\pm\equiv 1
\end{equation*}
and for $1\leq n\in\mathbb{Z}$
\begin{equation}\label{recurrence}
\left\{
\begin{array}{rl}
\frac{d}{dz}w_{2n}^\pm(z) &= \mathcal{H}(z)w_{2n-1}^\pm(z) \\
(\frac{d}{dz}\pm\frac{2}{\epsilon})w_{2n-1}^\pm(z) &= \mathcal{H}(z)w_{2n-2}^\pm(z) \\
w_n^\pm(z_0) & =0.
\end{array}
\right.
\end{equation}

These recurrence equations uniquely determine (at least in a neighborhood
of $x_0$) the sequence of scalar functions $\{w_n^\pm(z,\epsilon,z_0)\}_{n=-1}^\infty$ 
and hence the sequence of vector-valued functions 
$\{\textbf{w}_n^\pm(z,\epsilon,z_0)\}_{n=0}^\infty$.
We can write the relations (\ref{recurrence}) in an integral form if we introduce
the following integral operators $\mathcal{I}_\pm$ and $\mathcal{J}$ taking
functions from $\mathscr{H}(z(\Omega,\lambda,\alpha))$ to functions in
$\mathscr{H}(z(\Omega,\lambda,\alpha))$ by
\begin{equation}
\label{I}
\mathcal{I}_\pm[f](z)=
\int_{\Gamma} e^{\pm2(\zeta-z)/\epsilon}\mathcal{H}(\zeta)f(\zeta)d\zeta
\end{equation}
\begin{equation}
\label{J}
\mathcal{J}[f](z)=
\int_{\Gamma}\mathcal{H}(\zeta)f(\zeta)d\zeta
\end{equation}
where $\Gamma\equiv\Gamma(z_0\rightsquigarrow z)$ is the image by $z$
of a path $\gamma\equiv\gamma(x_0\rightsquigarrow x)$ in $\Omega$
connecting $x_0$, $x$; in other words $\Gamma=z(\gamma)$ (for notation 
cf.  \S \ref{notation}). Then (\ref{recurrence}) for $n\geq1$ becomes
\begin{equation}
\label{integral-recurrence}
\left\{
\begin{array}{rl}
w_{2n}^\pm &= \mathcal{J}[w_{2n-1}^\pm] \\
w_{2n-1}^\pm &= \mathcal{I}_{\pm}[w_{2n-2}^\pm].
\end{array}
\right.
\end{equation}

Going back to the initial system (\ref{miller-ev}),
we have constructed (formal) WKB solutions of the form
\begin{multline}
\label{wkb-sols}
\textbf{u}^{\pm}(x,\lambda,\epsilon,\alpha,x_0)=\\
\exp\Big\{\pm\frac{z(x,\lambda,\alpha)}{\epsilon}\Big\}
Q(x,\lambda,\epsilon)
\begin{bmatrix}
0 & 1\\
1 & 0
\end{bmatrix}^{\frac{1\pm1}{2}}
\textbf{w}^{\pm}[z(x,\lambda,\alpha),\epsilon,z_0]
\end{multline}
where $Q(\cdot,\lambda,\epsilon)$ is a matrix-valued function with value
\be\nn
Q(x,\lambda,\epsilon)=
\begin{bmatrix}
\exp\{-iS(x)/(2\epsilon)\} & 0\\
0 & \exp\big\{iS(x)/(2\epsilon)\}
\end{bmatrix}
\begin{bmatrix}
1 & 1\\
-1 & 1
\end{bmatrix}
\begin{bmatrix}
H(z)^{-1} & H(z)^{-1} \\
iH(z) & -iH(z)
\end{bmatrix}.
\ee

Here,  we added the superscript $\pm$ to the solution
$\textbf{u}=[u_1\hspace{3pt}u_2]^T$ -consequently arriving at the
notation $\textbf{u}^{\pm}=[u^{\pm}_1\hspace{3pt}u^{\pm}_2]$- 
to distinguish between the two different cases  in the previous
discussion; also we  keep in mind that the WKB solutions
in (\ref{wkb-sols}) depend on a base point $\alpha$ for the phase
map (\ref{phase}) and a base point $x_0$ for the symbols
$\textbf{w}^{\pm}$. 

Recall Definition \ref{progress-path} and write
\be\nn
\Omega_\pm=
\{\, x\in\Omega\mid
\exists
\hspace{3pt}(\pm)\text{-progressive path}\hspace{3pt}
\gamma(x_0\rightsquigarrow x)\subset\Omega\,\}.
\ee
We have  the following basic theorem; for the proof,  we refer to \cite{fln}.
\begin{theorem}
\label{formal-to-rigorous}
The exact WKB solutions in (\ref{wkb-sols}) satisfy the following properties.
\begin{enumerate}
\item
The formal series in (\ref{formal-series})  is absolutely convergent in a
neighborhood of $x_0$.\\
\item
For each $N\in\mathbb{N}$, we have
\begin{align*}
\mathbf{w}^{\pm}-\sum_{n=0}^{N-1}\mathbf{w}^{\pm}_{n} 
& =
\mathcal{O}(\epsilon^N)
\quad\text{as}\quad\epsilon\downarrow0\\
w^\pm_{\rm even}-\sum_{n=0}^{N-1}w_{2n}^\pm 
& =
\mathcal{O}(\epsilon^N)
\quad\text{as}\quad\epsilon\downarrow0\\
w^\pm_{\rm odd}-\sum_{n=0}^{N-1}w_{2n-1}^\pm 
& =
\mathcal{O}(\epsilon^N)
\quad\text{as}\quad\epsilon\downarrow0
\end{align*}
uniformly in any compact subset of $\Omega_\pm$.  In particular,
there we have
\begin{align*}
w^\pm_{\rm even} 
& =
1+\mathcal{O}(\epsilon)
\quad\text{as}\quad\epsilon\downarrow0\\
\quad w^\pm_{\rm odd} 
& =
\mathcal{O}(\epsilon)
\quad\text{as}\quad\epsilon\downarrow0.
\end{align*}
\item
Any two exact WKB solutions with different base points
for the symbol satisfy
\begin{align}
\label{wronsky+-}
{\mathcal W}[{\bf u}^+(x,\lambda,\epsilon,\alpha,x_0),
{\bf u}^-(x,\lambda,\epsilon,\alpha,x_1)]
&=
4i w_{\rm even}^+(z_1,\epsilon,z_0)\\
\label{wronsky++}
{\mathcal W}[{\bf u}^+(x,\lambda,\epsilon,\alpha,x_0),
{\bf u}^+(x,\lambda,\epsilon,\alpha,x_1)]
&=
-4i e^{2z_1/\epsilon}w_{\rm odd}^+(z_1,\epsilon,z_0)
\end{align}
where $x_0$,  $x_1$ are two different points in $\Omega$ and
$z_j=z(x_j,\lambda,\alpha)$ for $j=0,1$.
\end{enumerate}
\end{theorem}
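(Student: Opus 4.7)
My plan is to establish the three claims successively, working directly from the integral reformulation (\ref{integral-recurrence}) of the recurrence.

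For (1), I fix a small neighborhood $U \subset \Omega$ of $x_0$ on which $|\mathcal{H}|$ is bounded by some $M$ and any two points are joined by a path of length at most $L$. On $U$, the factor $e^{\pm 2(\zeta - z)/\epsilon}$ in (\ref{I}) is bounded by a constant $C_\epsilon$ depending on $\epsilon$ (no progressivity is needed at this stage). Iterating the integral recurrence and exploiting the nested-integral structure yields by induction the majorant $|w_n^\pm(z, \epsilon, z_0)| \leq (C_\epsilon M L)^n / n!$, and summing gives absolute and uniform convergence on $U$.

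For (2), the essential observation is that on a $(\pm)$-progressive path one has $|e^{\pm 2(\zeta - z)/\epsilon}| \leq 1$, and a single integration by parts using $e^{\pm 2(\zeta - z)/\epsilon} = \pm \tfrac{\epsilon}{2} \tfrac{d}{d\zeta} e^{\pm 2(\zeta - z)/\epsilon}$ rewrites (\ref{I}) as
\begin{equation*}
\mathcal{I}_\pm[f](z) = \pm \tfrac{\epsilon}{2}\mathcal{H}(z)f(z) \mp \tfrac{\epsilon}{2}e^{\mp 2(z-z_0)/\epsilon}\mathcal{H}(z_0)f(z_0) \mp \tfrac{\epsilon}{2}\int_\Gamma e^{\pm 2(\zeta-z)/\epsilon}(\mathcal{H}f)'(\zeta)d\zeta,
\end{equation*}
whose boundary and remaining integral terms are $\mathcal{O}(1)$ uniformly in $\epsilon$ on such a path; hence $\mathcal{I}_\pm[f] = \mathcal{O}(\epsilon)$. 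A straightforward induction along the interleaved recurrence $w_{2n-1}^\pm = \mathcal{I}_\pm[w_{2n-2}^\pm]$, $w_{2n}^\pm = \mathcal{J}[w_{2n-1}^\pm]$ then gives $w_{2n}^\pm, w_{2n-1}^\pm = \mathcal{O}(\epsilon^n)$ uniformly on compact subsets of $\Omega_\pm$, since $\mathcal{J}$ does not lose any power of $\epsilon$. Summing yields the truncated expansion with error $\mathcal{O}(\epsilon^N)$, and the assertions $w_{\rm even}^\pm = 1 + \mathcal{O}(\epsilon)$, $w_{\rm odd}^\pm = \mathcal{O}(\epsilon)$ are the case $N = 1$.

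For (3), both $\mathbf{u}^\pm$ solve the system (\ref{miller-ev}), whose coefficient matrix $K$ has zero trace, so the Wronskian is $x$-independent by Liouville's formula; I evaluate it at $x = x_1$. Since $w_{-1}^\pm \equiv 0$, $w_0^\pm \equiv 1$ and $w_n^\pm(z_0) = 0$ for $n \geq 1$, one has $\mathbf{w}^\pm(z_1, \epsilon, z_1) = (1, 0)^T$, whence
\begin{align*}
\mathbf{u}^-(x_1, \alpha, x_1) &= e^{-z_1/\epsilon} Q(x_1)(1, 0)^T,\\
\mathbf{u}^+(x_1, \alpha, x_1) &= e^{z_1/\epsilon} Q(x_1)(0, 1)^T,\\
\mathbf{u}^+(x_1, \alpha, x_0) &= e^{z_1/\epsilon} Q(x_1)\bigl(w_{\rm odd}^+(z_1,\epsilon,z_0),\, w_{\rm even}^+(z_1,\epsilon,z_0)\bigr)^T,
\end{align*}
where the swap matrix appearing in the $+$ case of (\ref{wkb-sols}) produces the transposed entries in $\mathbf{u}^+$. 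A direct $2 \times 2$ determinant computation using $\det Q(x_1) = -4i$ yields (\ref{wronsky+-}) and (\ref{wronsky++}) immediately. The main obstacle is step (2): one must verify that the $\epsilon$-gain from each integration by parts genuinely propagates through the interleaved operators $\mathcal{I}_\pm, \mathcal{J}$ without accumulating inverse powers of $\epsilon$, and that the resulting estimates remain uniform on arbitrary compact subsets of $\Omega_\pm$ rather than only near $x_0$; steps (1) and (3) are, in comparison, bookkeeping and a direct linear-algebraic computation.
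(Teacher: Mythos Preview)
The paper does not prove this theorem; it simply states it and refers to \cite{fln} (and implicitly to \cite{gg}) for the proof. Your sketch follows precisely the strategy of those references: factorial bounds from nested integrals for convergence, integration by parts on progressive paths for the $\epsilon$-gain, and evaluation of the constant Wronskian at $x=x_1$ using $\mathbf w^\pm(z_1,\epsilon,z_1)=(1,0)^T$ together with $\det Q=-4i$. Parts (1) and (3) are correct as written.

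The one point worth tightening in (2) is the passage from ``each $w_{2n}^\pm,w_{2n-1}^\pm$ is individually $\mathcal O(\epsilon^n)$'' to ``the tail $\sum_{n\ge N}$ is $\mathcal O(\epsilon^N)$''. Your integration-by-parts step gives an $\epsilon$ factor but the implied constants could in principle grow with $n$, which would spoil summability. The way \cite{gg,fln} close this is to prove a single combined bound of the type
\[
|w_{2n}^\pm(z,\epsilon,z_0)|,\ |w_{2n-1}^\pm(z,\epsilon,z_0)|\ \le\ \frac{(C\epsilon)^n}{n!}\,\Bigl(\mathcal V_{z_0,z}[\mathcal H]\Bigr)^n
\]
along $(\pm)$-progressive paths, obtained by inserting the integration-by-parts identity directly into the inductive estimate rather than treating the $\epsilon$-gain and the factorial decay separately. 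This simultaneously yields convergence uniform in $\epsilon$ on compact subsets of $\Omega_\pm$ and the $\mathcal O(\epsilon^N)$ remainder. You have correctly identified this as the crux; it is not a gap in your approach, only a detail your sketch defers.
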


\subsection{Connection around a simple turning point}
\label{exact-wkb-connection-1-tp}

We continue in this section by presenting a local connection formula for exact WKB 
solutions near a simple turning point. Let $\alpha$ be a simple turning point, i.e.
\begin{align*}
\det M(\alpha)&=
g_+(\alpha)g_-(\alpha)=0\\
(\det M)'(\alpha)&=
g_+(\alpha)g'_-(\alpha)+g'_+(\alpha)g_-(\alpha)\neq 0
\end{align*}
which means that $\alpha$ is a simple zero of one and only one of either $g_+(x)$ or 
$g_-(x)$. Three Stokes lines $l_0$, $l_1$ and $l_2$, numbered in the anti-clockwise 
sense, emanate from $\alpha$ and  devide a neighborhood $\omega$ of $\alpha$ into 
three regions $\omega_0$ bounded by $l_1$ and $l_2$,  $\omega_1$ bounded by $l_2$ 
and $l_0$ and $\omega_2$ bounded by $l_0$ and $l_1$. In each of these regions, 
we are going to define exact WKB solutions ${\bf u}_0$, ${\bf u}_1$ and ${\bf u}_2$ 
respectively in such a way that its asymptotic behavior is known near $\alpha$ by 
Theorem \ref{formal-to-rigorous}. 

We have first to specify the branch of the multi-valued function
$z(x,\lambda,\alpha)$ defined by \eqref{phase}.
Let us take a base point $x_j$  in $\omega_j$ $(j=0,1,2)$.
We put a branch cut on the Stokes line $l_1$, and take the branch of the function 
$\sqrt{\det M}$ so that
$\re \,z(x,\lambda,\alpha)$ increases from $x_0$ towards the turning point $\alpha$. 
Then consequently $\re \,z(x,\lambda,\alpha)$ decreases from $x_1$ towards  $\alpha$ 
and increases from $x_2$ towards  $\alpha$.

We define three exact WKB solutions:
\begin{align}
{\bf u}_0(x,\epsilon)={\bf u}^+(x,\lambda,\epsilon,\alpha,x_0), \\
{\bf u}_1(x,\epsilon)={\bf u}^-(x,\lambda,\epsilon,\alpha,x_1), \\
{\bf u}_2(x,\epsilon)={\bf u}^+(x,\lambda,\epsilon,\alpha,x_2).
\end{align}
These solutions are constructed near $x_0$, $x_1$, $x_2$ respectively but can be 
extended analytically to $\omega$. They cannot be linearly independent since 
the vector space of the solutions is of dimension two. More precisely, we have the 
following linear relation.
\begin{proposition}
\label{dep}
Let ${\bf u}_0$, ${\bf u}_1$, ${\bf u}_2$ be three solutions to the differential equation 
\eqref{ev-problem} in a region $\omega$. Then the following identity holds:
\be
\label{Widentity}
{\mathcal W}[{\bf u}_1,{\bf u}_2]{\bf u}_0+
{\mathcal W}[{\bf u}_2,{\bf u}_0]{\bf u}_1+
{\mathcal W}[{\bf u}_0,{\bf u}_1]{\bf u}_2=0.
\ee
\end{proposition}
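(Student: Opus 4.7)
The proposition is a pointwise linear-algebraic identity about three vectors in $\C^2$; the fact that the $\mathbf{u}_j$ solve the differential equation \eqref{ev-problem} is not needed for the identity itself (although, as a side consequence of Abel's formula, each of the three wronskians is then independent of $x$). The plan is therefore to forget the ODE and prove the identity as a statement about an arbitrary triple of column vectors $\mathbf{u}_0,\mathbf{u}_1,\mathbf{u}_2\in\C^2$.

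First I would dispose of the generic case where $\mathcal{W}[\mathbf{u}_1,\mathbf{u}_2]\neq 0$, i.e.\ $\{\mathbf{u}_1,\mathbf{u}_2\}$ is a basis of $\C^2$. Expand
\[
\mathbf{u}_0 = a\,\mathbf{u}_1 + b\,\mathbf{u}_2
\]
for uniquely determined scalars $a,b\in\C$. Taking $\mathcal{W}[\,\cdot\,,\mathbf{u}_2]$ on both sides and using bilinearity and antisymmetry gives $\mathcal{W}[\mathbf{u}_0,\mathbf{u}_2] = a\,\mathcal{W}[\mathbf{u}_1,\mathbf{u}_2]$, i.e.\ $a\,\mathcal{W}[\mathbf{u}_1,\mathbf{u}_2] = -\mathcal{W}[\mathbf{u}_2,\mathbf{u}_0]$. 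Similarly, $\mathcal{W}[\,\cdot\,,\mathbf{u}_1]$ yields $b\,\mathcal{W}[\mathbf{u}_1,\mathbf{u}_2] = \mathcal{W}[\mathbf{u}_0,\mathbf{u}_1]$. Multiplying the expansion of $\mathbf{u}_0$ by $\mathcal{W}[\mathbf{u}_1,\mathbf{u}_2]$ and substituting these two relations produces exactly \eqref{Widentity}.

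A tidier alternative, which I would probably use in the write-up, is to notice that the claim is equivalent to the vanishing of the $3\times 3$ determinant
\[
\det\begin{bmatrix} u_{0,1} & u_{1,1} & u_{2,1} \\ u_{0,2} & u_{1,2} & u_{2,2} \\ u_{0,k} & u_{1,k} & u_{2,k} \end{bmatrix} = 0 \qquad (k=1,2),
\]
where $\mathbf{u}_j=[u_{j,1}\;u_{j,2}]^T$; the matrix has a repeated row, so its determinant is zero, and expanding along the last row and reading the coefficients gives the first and second components of \eqref{Widentity} respectively. This simultaneously handles the degenerate case $\mathcal{W}[\mathbf{u}_1,\mathbf{u}_2]=0$ without a separate argument.

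There is no real obstacle here; the only mild point to mention is that, although the statement is proved pointwise, one should check consistency: if $\mathbf{u}_0,\mathbf{u}_1,\mathbf{u}_2$ all solve the same first-order linear system, then each of the three wronskians $\mathcal{W}[\mathbf{u}_i,\mathbf{u}_j]$ is constant on $\omega$ (since the coefficient matrix $M$ in \eqref{miller-final} has trace zero, Abel's identity gives constancy), so \eqref{Widentity} is a \emph{uniform} linear relation on $\omega$, as needed for subsequent applications to connection formulas.
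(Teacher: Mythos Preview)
Your proof is correct, and your first argument (expand $\mathbf{u}_0$ in the basis $\{\mathbf{u}_1,\mathbf{u}_2\}$ when $\mathcal{W}[\mathbf{u}_1,\mathbf{u}_2]\neq 0$ and read off the coefficients via wronskians) is essentially dual to the paper's: the paper instead takes $\mathcal{W}[\,\cdot\,,\mathbf{u}_1]$ and $\mathcal{W}[\,\cdot\,,\mathbf{u}_2]$ of the entire left-hand side and observes both vanish, so the left-hand side is zero since $\{\mathbf{u}_1,\mathbf{u}_2\}$ spans $\C^2$. Both arguments split off the degenerate case separately.

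Your second route, recognising each component of \eqref{Widentity} as the Laplace expansion of a $3\times 3$ determinant with a repeated row, is genuinely different from the paper's and has the advantage of being uniform: no case distinction on whether some $\mathcal{W}[\mathbf{u}_i,\mathbf{u}_j]$ vanishes is needed. Your remark that the identity is purely linear-algebraic (the ODE is irrelevant to its validity) and that Abel's identity then makes the coefficients constant in $x$ is also a useful clarification that the paper leaves implicit.
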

\begin{proof}
If the three determinants ${\mathcal W}[{\bf u}_1,{\bf u}_2]$, ${\mathcal W}[{\bf u}_2,{\bf u}_0]$ 
and ${\mathcal W}[{\bf u}_0,{\bf u}_1]$ are all 0, the identity holds obviously.
Suppose that at least one of those does not vanish, say ${\mathcal W}[{\bf u}_1,{\bf u}_2]\ne 0$. 
Then by taking the determinant of the LHS of \eqref{Widentity} with ${\bf u}_1$ we find 
$$
{\mathcal W}[{\bf u}_1,{\bf u}_2]{\mathcal W}[{\bf u}_0,{\bf u}_1]+
{\mathcal W}[{\bf u}_0,{\bf u}_1]{\mathcal W}[{\bf u}_2,{\bf u}_1]
$$
which is obviously zero. In the same way, we obtain that the determinant of the LHS of 
\eqref{Widentity} with ${\bf u}_2$ is zero. Since the pair of solutions $\{{\bf u}_1, {\bf u}_2\}$ 
makes a basis of the solution space, this means that the LHS of \eqref{Widentity} should 
be zero. 
\end{proof}

Let us come back to our exact WKB solutions ${\bf u}_0$, ${\bf u}_1$ and ${\bf u}_2$ 
defined above and compute the asymptotic behavior of the determinants appearing in 
this proposition.

\begin{proposition}
\label{3.567}
As $\epsilon\downarrow0$, the following asymptotics hold true
\begin{align}
\label{01}
&{\mathcal W}[{\bf u}_0,{\bf u}_1]=2i(1+\CO(\epsilon)),
\\
\label{12}
&{\mathcal W}[{\bf u}_1,{\bf u}_2]=-2i(1+\CO (\epsilon)),
\\
\label{20}
&{\mathcal W}[{\bf u}_2,{\bf u}_0]=\mp 2(1+\CO(\epsilon)),
\end{align}
where in (\ref{20}) the upper sign (minus) is to be chosen in the case of $\alpha$ being 
a zero of  $g_+$ and the lower sign (plus) in the case that $\alpha$ is a zero of $g_-$. In 
particular, any two of the solutions ${\bf u}_0$, ${\bf u}_1$ and ${\bf u}_2$ are linearly 
independent for sufficiently small $\epsilon$.
\end{proposition}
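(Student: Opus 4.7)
The three asymptotic identities follow from the Wronskian formulas \eqref{wronsky+-}, \eqref{wronsky++} of Theorem \ref{formal-to-rigorous}(3), combined with the symbol asymptotics of Theorem \ref{formal-to-rigorous}(2). The plan is first to use these identities to write
\[
{\mathcal W}[{\bf u}_0,{\bf u}_1] \,\propto\, w_{\rm even}^+(z_1,\epsilon,z_0),\qquad
{\mathcal W}[{\bf u}_1,{\bf u}_2] \,\propto\, w_{\rm even}^+(z_1,\epsilon,z_2),
\]
\[
{\mathcal W}[{\bf u}_2,{\bf u}_0] \,\propto\, e^{2z_0/\epsilon}\,w_{\rm odd}^+(z_0,\epsilon,z_2),
\]
with the explicit prefactors read off from \eqref{wronsky+-} and \eqref{wronsky++}, and then to extract the leading behaviour of each right-hand side.

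For the mixed $+/-$ Wronskians the analysis is light: the task reduces to verifying the uniform bound $w_{\rm even}^+=1+\CO(\epsilon)$ from Theorem \ref{formal-to-rigorous}(2), which in turn requires that a $+$-progressive path connect the symbol base point to the evaluation point inside $\omega$. This is immediate from the branch choice: $\re z$ strictly increases from $x_0$ and from $x_2$ toward $\alpha$, and because $\re z$ strictly decreases from $x_1$ toward $\alpha$, it strictly increases from $\alpha$ to $x_1$. A path skirting $\alpha$ on the side of the cut $l_1$ opposite the one we started from is thus $+$-progressive, which delivers \eqref{01} and \eqref{12}.

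The same-type Wronskian \eqref{20} is the heart of the proof. Using the integral recurrence \eqref{integral-recurrence} and absorbing $e^{2z_0/\epsilon}$ inside the integrand yields
\[
e^{2z_0/\epsilon}\,w_{\rm odd}^+(z_0,\epsilon,z_2)
=\int_{\Gamma} e^{2\zeta/\epsilon}\,\mathcal{H}(\zeta)\,d\zeta + \CO(\epsilon),
\]
where $\Gamma$ is the $z$-image of any path from $x_2$ to $x_0$. Since the cut lies along $l_1$, this path necessarily crosses $\omega_1$, where $\re\zeta>0$ and $e^{2\zeta/\epsilon}$ is exponentially large, while $\re z_0, \re z_2<0$ at the endpoints. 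The decisive observation is that $\mathcal{H}(\zeta)$ has a simple pole at $\zeta=0$ with residue $\mp 1/6$, the sign being determined by whether $\alpha$ is a zero of $g_+$ or $g_-$; this follows from $H\sim\mathrm{const}\,(x-\alpha)^{\mp 1/4}$ together with $z\sim\mathrm{const}\,(x-\alpha)^{3/2}$, which give $\log H\sim \mp\tfrac{1}{6}\log\zeta$ near the turning point.

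To extract the leading constant I would deform $\Gamma$ to a contour lying in $\{\re\zeta\le 0\}$ except for a small half-loop around $\zeta=0$. On the deformed portion $|e^{2\zeta/\epsilon}|\le 1$ and $\mathcal H$ is integrable away from the origin, so that piece contributes only $\CO(\epsilon)$; the half-loop contribution is (half) the residue of $e^{2\zeta/\epsilon}\mathcal{H}(\zeta)$ at $0$, a specific non-zero constant with sign $\pm$, which then produces the stated leading coefficient in \eqref{20}. The principal obstacle is this last step: rigorously justifying the contour deformation through the singular point of $\mathcal{H}$ and verifying that the higher-order iterated integrals appearing in the full series for $w_{\rm odd}^+$ contribute only $\CO(\epsilon)$; equivalently, the constant can be pinned down by matching to the Airy normal form at a simple turning point, in the style of G\'erard--Grigis and Fujii\'e--Lasser--N\'ed\'elec. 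Once the three asymptotics are in hand, linear independence of any pair of ${\bf u}_0,{\bf u}_1,{\bf u}_2$ for small $\epsilon$ is immediate from the non-vanishing of their limiting Wronskians.
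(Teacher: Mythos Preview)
Your treatment of \eqref{01} and \eqref{12} is correct and coincides with the paper's: apply the $+/-$ Wronskian formula \eqref{wronsky+-} and use the existence of a $+$-progressive path from $x_0$ (resp.\ $x_2$) to $x_1$, skirting $\alpha$, to invoke Theorem~\ref{formal-to-rigorous}(ii).

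For \eqref{20} the paper does \emph{not} attempt a direct asymptotic analysis of $e^{2z_0/\epsilon}w_{\rm odd}^+(z_0,\epsilon,z_2)$ as you do. Instead it observes that since the branch cut lies on $l_1$ between $x_0$ and $x_2$, one can analytically continue from $x_0$ across the cut and rewrite ${\bf u}_2$ on that second sheet. Concretely, if $\hat x$ denotes the point $x$ reached after winding once around $\alpha$, then $z(\hat x)=-z(x)$, $H(\hat x)=\mp i H(x)$ (according as $\alpha$ is a zero of $g_\pm$), and $w^\pm(z,\epsilon,z_2)=w^\mp(\hat z,\epsilon,\hat z_2)$; the last identity is checked directly from the transport equations \eqref{recurrence}. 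These three facts combine to give
\[
{\bf u}_2(x,\epsilon)=\mp i\,{\bf u}^-(x,\lambda,\epsilon,\alpha,\hat x_2),
\]
and now ${\mathcal W}[{\bf u}_0,{\bf u}_2]$ is a $+/-$ Wronskian to which \eqref{wronsky+-} applies. A $+$-progressive path from $x_0$ to $\hat x_2$ exists, so $w_{\rm even}^+(\hat z_2,\epsilon,z_0)=1+\CO(\epsilon)$ and \eqref{20} follows with the correct sign and constant read off from the $\mp i$ monodromy factor.

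Your route via the $+/+$ formula \eqref{wronsky++} is in principle workable but, as you yourself flag, incomplete. The half-loop heuristic is not quite right as stated: the integral $\int_\Gamma e^{2\zeta/\epsilon}\mathcal H(\zeta)\,d\zeta$ with endpoints at finite $\re\zeta<0$ is not dominated by a naked half-residue $\pi i\cdot(\mp 1/6)$; after rescaling $\zeta=\epsilon s/2$ the contribution of the pole is an exponential-integral type quantity whose constant term must be extracted carefully, and the regular part of $\mathcal H$ contributes at the same order. Making this precise would essentially reproduce the Airy normal-form matching of G\'erard--Grigis/Fujii\'e--Lasser--N\'ed\'elec, which is substantially more work than the paper's monodromy trick. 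The advantage of your approach is conceptual transparency about where the $\mp$ sign originates (the residue $\mp 1/6$ of $\mathcal H$); the advantage of the paper's is that it reduces \eqref{20} to the already-settled $+/-$ case in three lines.
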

\begin{proof}
Using Theorem \ref{formal-to-rigorous} (iii), we find
$$
{\mathcal W}[{\bf u}_1,{\bf u}_2]=-2iw_{\rm even}^+(z_1,\epsilon,z_2)
$$
where $z_j=z(x_j,\lambda,\alpha)$.
Since there exists a $(+)$-progressive path from $x_2$ to $x_1$, one has 
$w_{\rm even}^+(z_1,\epsilon,z_2)=1+\CO (\epsilon)$ by Theorem \ref{formal-to-rigorous} (ii), 
and hence we obtain \eqref{12}. Similarly we have
$$
{\mathcal W}[{\bf u}_0,{\bf u}_1]=2iw_{\rm even}^+(z_1,\epsilon,z_0)=2i(1+\CO(\epsilon)).
$$

For the computation of the determinant  ${\mathcal W}[{\bf u}_2,{\bf u}_0]$, we have to be 
careful since there is a branch cut between $x_2$ and $x_0$. Before applying the 
\say{wronskian formula}, we rewrite ${\bf u}_2$ on the Riemann sheet continued from $x_0$ 
passing across the branch. Let $x$ be a point near $x_2$ and $\hat x$ the same point as $x$ 
but on that Riemann sheet. Then we have
\begin{align*}
&z(x,\lambda,\alpha)=-z(\hat x,\lambda,\alpha),\\
&H(x)=\mp iH(\hat x),\,\text{ if }\alpha \text{ is a zero of }g_\pm\quad\text{and}\\
&w^\pm (z,\epsilon,z_2)=w^\mp (z,\epsilon,z_2).
\end{align*}
Observe that $\arg (x-\alpha)=\arg(\hat x-\alpha)+2\pi$ and hence the first identity is obvious 
since the integrand of $z$ contains a multi-valued function of type $\sqrt{x-\alpha}$. 
The second assertion holds due to the fact that $H$ is the product of $g_+^{-1/4}$ and 
$g_-^{1/4}$. To verify the third identity, let $w_n^\pm(z)$, $n\ge 1$, be a family of solutions 
to the transport equations \eqref{recurrence} satisfying the initial condition $w_n^\pm (z_2)=0$ 
at $z_2=z(x_2)$, and set $f_n^\pm(\hat z)=w_n^\pm(z)$, $n\ge 1$. 
Inspecting \eqref{recurrence} and using $d/dz=-d/d\hat z$, it is easy to check that
\begin{align*}
\left (\frac d{d\hat z}\mp\frac 2\epsilon\right )f_{2n+1}^\pm(\hat z)=
\frac{dH(\hat z)/d\hat z}{H(\hat z)}f_{2n}^\pm (\hat z), \\
\frac d{d\hat z}f_{2n+2}^\pm(\hat z)=
\frac{dH(\hat z)/d\hat z}{H(\hat z)}f_{2n+1}^\pm (\hat z)
\end{align*}
with $f_n^\pm (\hat z_2)=0$ at $\hat z_2=z(\hat x_2)$. By uniqueness of solutions, it follows 
that $w_n^\pm(z)=w_n^\mp(\hat z)$ with $w_n^\mp(\hat z_2)=0$, which, in view of 
\eqref{formal-series}, implies that 
$$
{\bf w}^\pm (z,\epsilon,z_2)=
\sum_{n=0}^\infty\left (
\begin{array}{c}
w_{2n}^\pm(z,z_2) \\
w_{2n+1}^\pm(z,z_2)
\end{array}
\right )
=\sum_{n=0}^\infty\left (
\begin{array}{c}
w_{2n}^\mp(\hat z,\hat z_2) \\
w_{2n+1}^\mp(\hat z,\hat z_2)
\end{array}
\right )
={\bf w}^\mp(\hat z,\epsilon,\hat z_2).
$$
Thus, with this point $\hat x_2$ as base point of the symbol, the solution ${\bf u}_2(x,\epsilon)$ 
is written as
$$
{\bf u}_2(x,\epsilon)=\mp i {\bf u}^-(x,\lambda,\epsilon,\alpha, \hat x_2),\,
\text{ if }\alpha \text{ is a zero of }g_\pm.
$$

We now apply Theorem \ref{formal-to-rigorous} (iii) to compute the desired determinant, 
using the above expression. Since there exists a progressive curve from $x_0$ to $\hat x_2$, 
we have
$$
{\mathcal W}[{\bf u}_0,{\bf u}_2]=\pm i(\det Q)w_{\rm even}^+(\hat z_2,\epsilon,z_0)=
\pm 2(1+\CO(\epsilon))
$$
when $\alpha$ is a zero of $g_\pm$.
\end{proof}

\subsection{Eigenvalues near a generic point on the asymptotic spectral arc}
\label{bs-at-two-turn-pts}

In this section, we return to the eigenvalue problem \eqref{ev-problem} for the 
Zakharov-Shabat operator $\mathfrak{D}_\epsilon$. 
We derive the quantization condition for the eigenvalues in a small 
neighborhood of a fixed $\lambda_0$, independent of $\epsilon$, 
belonging to an asymptotic spectral arc under the following generic condition.

\vspace{0.5cm}

\begin{minipage}{0.1\textwidth}
{\bf (H1):}
\end{minipage}
\hspace{0.01\textwidth}
\begin{minipage}{0.8\textwidth}
$C^0$ contains no other turning point than $x_-(\lambda)$ and $x_+(\lambda)$.
\end{minipage}

\vspace{0.5cm}
 
Let $C(\lambda_0)$ be the admissible contour 
$C(\lambda_0)=C^-(\lambda_0)\cup C^0(\lambda_0)\cup C^+(\lambda_0)$ 
corresponding to $\lambda_0$ and denote by 
$\alpha_0=\alpha(\lambda_0)=x_-(\lambda_0)$,  
$\beta_0=\beta(\lambda_0)=x_+(\lambda_0)$ 
the simple turning points at the extremities of $C^0(\lambda_0)$.  For $\lambda$ in 
a small enough neighborhood of $\lambda_0$, the simple turning points 
$\alpha=\alpha(\lambda)=x_-(\lambda)$ and $\beta=\beta(\lambda)=x_+(\lambda)$ 
are still defined and analytic.

From point $\alpha$ there are  three Stokes lines emanating,  namely 
$l_0^{\alpha}$, $l_1^{\alpha}$, $l_2^{\alpha}$ (in anti-clockwise order); similarly from 
$\beta$ there emerge three Stokes lines which we denote by 
$l_0^\beta$, $l_1^\beta$, $l_2^\beta$ (in clockwise order).  For the configuration we refer 
to Figure \ref{stokes1}.  Suppose $\lambda=\lambda_0$ and that $l_0^{\alpha}$ and 
$l_0^{\beta}$ coincide with $C^0(\lambda)$. The five Stokes lines 
$C^0(\lambda)$,  $l_1^{\alpha}$, $l_2^{\alpha}$,  $l_1^{\beta}$ and 
$l_2^{\beta}$ divide a neighborhood $\omega$ of the admissible contour into 
four regions which we denote as follows
\begin{itemize}
\item
$\omega_{\rm left}^{\alpha}$ is bounded by $l_1^{\alpha}$,  $l_2^{\alpha}$
\item
$\omega_{\rm right}^{\beta}$ is bounded by $l_1^{\beta}$,  $l_2^{\beta}$
\item
$\omega_{\rm up}^{\alpha,\beta}$ is bounded by $C^0(\lambda)$, 
$l_1^{\alpha}$,  $l_1^{\beta}$ and finally
\item
$\omega_{\rm down}^{\alpha,\beta}$ is bounded by $C^0(\lambda)$, 
$l_2^{\alpha}$,  $l_2^{\beta}$. 
\end{itemize}
Let us take a point in each of these regions: $x_0^{\alpha}$ in the region 
$\omega_{\rm left}^{\alpha}$,  $x_0^{\beta}$ in the region $\omega_{\rm right}^{\beta}$,  
$x_1$ in the region $\omega_{\rm down}^{\alpha,\beta}$ and $x_2$ in the 
region $\omega_{\rm up}^{\alpha,\beta}$.

\begin{figure}[htbp]
\centering
\includegraphics[bb=-100 0 710 240, width=13cm]{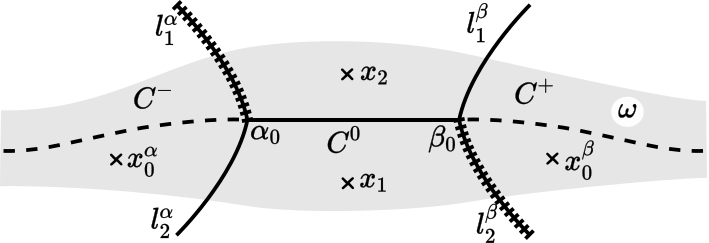}
\caption{Stokes lines near a simple turning point.}
\label{stokes1}
\end{figure}

As in \S \ref{exact-wkb-connection-1-tp}, we have to consider branch cuts; we choose 
one on the Stokes line $l_1^{\alpha}$ and one on $l_2^{\beta}$. Assuming that $\re z(x)$ 
increases towards $\alpha$ in $\omega_{\rm left}^{\alpha}$,  we define exact WKB 
solutions as follows.
\begin{align*}
{\bf u}_0^\alpha(x,\epsilon)&={\bf u}^+(x,\lambda,\epsilon,\alpha,x_0^\alpha), \quad
{\bf u}_0^\beta(x,\epsilon)={\bf u}^-(x,\lambda,\epsilon,\beta,x_0^\beta)\\
{\bf u}_1^\alpha(x,\epsilon)&={\bf u}^-(x,\lambda,\epsilon,\alpha,x_1), \quad 
{\bf u}_1^\beta(x,\epsilon)={\bf u}^-(x,\lambda,\epsilon,\beta,x_1)\\
{\bf u}_2^\alpha(x,\epsilon)&={\bf u}^+(x,\lambda,\epsilon,\alpha,x_2), \quad
{\bf u}_2^\beta(x,\epsilon)={\bf u}^+(x,\lambda,\epsilon,\beta,x_2)
\end{align*}
Remark that all these exact WKB solutions are chosen so that their asymptotic 
expansions in Theorem \ref{formal-to-rigorous} (ii) are valid near the designated 
turning points. In particular,  ${\bf u}_0^\alpha(x,\epsilon)$ is a decaying solution 
along $C^-(\lambda_0)$ and ${\bf u}_0^\beta(x,\epsilon)$ is a decaying solution 
along $C^+(\lambda_0)$. This implies the following lemma.
\begin{lemma}
\label{linear-depend-lemma-of-u}
For any $\lambda_0$ on the asymptotic spectral arcs, there exists a complex 
neighbourhood $U=U(\lambda_0)$ of $\lambda_0$ such that a $\lambda\in U$ 
is an eigenvalue of $\mathfrak{D}_\epsilon$ if and only if 
${\bf u}_0^\alpha(x,\epsilon)$ and ${\bf u}_0^\beta(x,\epsilon)$ are linearly 
dependent.
\end{lemma}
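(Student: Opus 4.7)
The plan is to reduce the eigenvalue condition to the coincidence of two one‑dimensional subspaces of ``Jost'' solutions of \eqref{miller-final} that decay at $\pm\infty$, and then to identify these subspaces with the spans of the analytic continuations of ${\bf u}_0^\alpha$ and ${\bf u}_0^\beta$ to the real axis. The pointwise invertible transformations in \S\ref{gen-strat} identify \eqref{ev-problem} with \eqref{miller-final} and preserve $L^2(\R;\C^2)$, so $\lambda$ is an eigenvalue of $\mathfrak{D}_\epsilon$ iff \eqref{miller-final} admits a nontrivial $L^2$ solution. I shrink $U=U(\lambda_0)$ so that $\im\lambda\ne 0$ throughout $U$. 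Since $A$ and $S'$ are integrable and vanish at infinity, a standard Levinson‑type perturbation argument shows that the two‑dimensional solution space of \eqref{miller-final} splits near $\pm\infty$ into one exponentially growing and one exponentially decaying solution with dominant exponent $\sim\pm i\lambda x/\epsilon$. Let $\mathcal{L}_-$ (resp.\ $\mathcal{L}_+$) denote the one‑dimensional subspace of global solutions that are square‑integrable near $-\infty$ (resp.\ $+\infty$); then $\lambda\in U$ is an eigenvalue iff $\mathcal{L}_-=\mathcal{L}_+$.

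Next I would identify these subspaces with the WKB solutions of the lemma. By admissibility item (i), $V_0(\cdot,\lambda)$, and hence the matrix $M(\cdot,\lambda)$ of \eqref{miller-final}, is holomorphic in the (simply connected) open region $D$ enclosed by $C$ and $\R$. Cauchy's theory for linear ODEs with holomorphic coefficients on a simply connected domain furnishes a unique holomorphic continuation of any germ of a solution to all of $D$; applied to ${\bf u}_0^\alpha$, defined on a tubular neighborhood of $C$, this produces a holomorphic solution on a region covering both $C$ and the portion of $\R$ underlying $D$. For $x\ll 0$ the contour $C^-$ coincides with $\R$, so this continuation of ${\bf u}_0^\alpha$ restricted to $\R$ agrees with ${\bf u}_0^\alpha$ itself for $x\ll 0$, and it decays there (as noted by the authors just before the lemma, and in accordance with Theorem \ref{formal-to-rigorous}(ii) applied to the progressive path $C^-$). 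Consequently the restriction of ${\bf u}_0^\alpha$ to $\R$ is a nonzero element of $\mathcal{L}_-$ and therefore spans it. The analogous argument for ${\bf u}_0^\beta$ along $C^+$ shows that its restriction to $\R$ spans $\mathcal{L}_+$.

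Combining the two steps, $\mathcal{L}_-=\mathcal{L}_+$ iff the restrictions of ${\bf u}_0^\alpha$ and ${\bf u}_0^\beta$ are proportional, i.e.\ they are linearly dependent as solutions of \eqref{miller-final}. The main obstacle lies in the second step: carefully justifying the analytic continuation of the WKB solutions through $D$ back to the real axis and verifying that the decay along the deformed contour $C$ transfers to decay on $\R$ itself. Admissibility item (i) is the crucial input here, as it provides a singularity‑free holomorphic coefficient matrix on the simply connected $D$, so that no monodromy obstructs the continuation and the ``deformed Jost'' solutions literally coincide with the true Jost solutions on $\R$. A milder technical issue is uniformity in $\lambda\in U$: one chooses $U$ small enough that $\alpha(\lambda)$ and $\beta(\lambda)$ remain distinct simple turning points, the progressiveness of $C^\pm(\lambda)$ is preserved, and the exponential splitting at $\pm\infty$ remains stable, all of which are open conditions at $\lambda=\lambda_0$.
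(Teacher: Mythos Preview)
Your proposal is correct and follows precisely the approach the paper has in mind: the paper does not spell out a proof but simply remarks, just before the lemma, that ${\bf u}_0^\alpha$ decays along $C^-(\lambda_0)$ and ${\bf u}_0^\beta$ decays along $C^+(\lambda_0)$, and says ``This implies the following lemma.'' Your write-up is a faithful elaboration of that implication, using exactly the ingredient the paper flags for this purpose (admissibility item (i), whose parenthetical comment ``this is needed to ensure that the approximate eigenfunctions can be continued back to the real $x$-axis'' is exactly your analytic-continuation step).
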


Applying Proposition \ref{dep} near $\alpha$ and $\beta$, we have
\begin{align}
\label{wronsk1}
{\mathcal W}[{\bf u}_1^\alpha,{\bf u}_2^\alpha]{\bf u}_0^\alpha+
{\mathcal W}[{\bf u}_2^\alpha,{\bf u}_0^\alpha]{\bf u}_1^\alpha+
{\mathcal W}[{\bf u}_0^\alpha,{\bf u}_1^\alpha]{\bf u}_2^\alpha=0, \\
\label{wronsk2}
{\mathcal W}[{\bf u}_1^\beta,{\bf u}_2^\beta]{\bf u}_0^\beta+
{\mathcal W}[{\bf u}_2^\beta,{\bf u}_0^\beta]{\bf u}_1^\beta+
{\mathcal W}[{\bf u}_0^\beta,{\bf u}_1^\beta]{\bf u}_2^\beta=0.
\end{align}
On the other hand, there is an obvious relation between ${\bf u}_j^\alpha$ and 
${\bf u}_j^\beta$ for $j=1,2$:
\be
\label{diagrel}
{\bf u}_1^\beta=
e^{z(\beta,\lambda,\alpha)/\epsilon}{\bf u}_1^\alpha,\quad 
{\bf u}_2^\beta=
e^{-z(\beta,\lambda,\alpha)/\epsilon}{\bf u}_2^\alpha
\ee
where $z(\beta,\lambda,\alpha)$ is the action integral between $\alpha$ and $\beta$
\be
\label{actionintegral}
z(\beta,\lambda,\alpha)=
i\int_{\gamma(\alpha\rightsquigarrow \beta)}
\sqrt{-V_0(t,\lambda)}dt.
\ee
It follows that ${\bf u}_0^\alpha(x,\epsilon)$ and ${\bf u}_0^\beta(x,\epsilon)$ 
are linearly dependent if and only if
$$
{\mathcal W}[{\bf u}_2^\alpha,u_0^\alpha]
{\mathcal W}[{\bf u}_0^\beta,{\bf u}_1^\beta]=
{\mathcal W}[{\bf u}_0^\alpha,{\bf u}_1^\alpha]
{\mathcal W}[{\bf u}_2^\beta,{\bf u}_0^\beta]
e^{2z(\beta,\lambda,\alpha)/\epsilon}.
$$
Recall the functions $g_{\pm}(x,\lambda)=\mp[\lambda+\frac{1}{2}S'(x)\pm iA(x)]$.  
For a pair of simple turning points $\alpha$, $\beta$, we define an index 
$\delta (\alpha,\beta)$ by
\be
\label{dab}
\delta(\alpha,\beta)=
\left\{
\begin{array}{cll}
-1 &\text{if}\quad 
g_+(\alpha,\lambda)=g_+(\beta,\lambda)=0 &\text {or}\quad 
g_-(\alpha,\lambda)=g_-(\beta,\lambda)=0,\\
1 &\text{if}\quad 
g_+(\alpha,\lambda)=g_-(\beta,\lambda)=0 &\text {or}\quad
g_-(\alpha,\lambda)=g_+(\beta,\lambda)=0.
\end{array}
\right.
\ee
Hence we arrive at the following result. 
\begin{theorem}
\label{BS}
For any $\lambda_0$ on the asymptotic spectral arcs satisfying condition 
{\bf (H1)}, there exists a complex neighborhood $U$ of $\lambda_0$ such that 
$\lambda\in U$ is an eigenvalue of  $\mathfrak{D}_\epsilon$  if and only if
$$
m(\epsilon)e^{2z(\beta,\lambda,\alpha)/\epsilon}=1,
\quad\text{with }
m(\epsilon)=
\frac{
{\mathcal W}[{\bf u}_0^\alpha,{\bf u}_1^\alpha]{\mathcal W}[{\bf u}_2^\beta,{\bf u}_0^\beta]}
{{\mathcal W}[{\bf u}_2^\alpha,{\bf u}_0^\alpha]{\mathcal W}[{\bf u}_0^\beta,{\bf u}_1^\beta]
}.
$$
As $\epsilon\downarrow0$, $m(\epsilon)$ has the asymptotic behavior
$$
m(\epsilon)=\delta(\alpha_0,\beta_0)+\CO(\epsilon).
$$
\end{theorem}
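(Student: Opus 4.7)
The plan is to split the theorem into two independent pieces: the iff characterization $m(\epsilon)e^{2z(\beta,\lambda,\alpha)/\epsilon}=1$ and the leading-order identification $m(\epsilon)=\delta(\alpha_0,\beta_0)+\CO(\epsilon)$.

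For the first piece, almost all the work is already carried out in the paragraph preceding the theorem. By Lemma~\ref{linear-depend-lemma-of-u}, $\lambda\in U$ is an eigenvalue if and only if ${\bf u}_0^\alpha$ and ${\bf u}_0^\beta$ are linearly dependent. Proposition~\ref{3.567}, applied at $\alpha$, says in particular that $\{{\bf u}_1^\alpha,{\bf u}_2^\alpha\}$ is a basis of the solution space for $\epsilon$ small. The wronskian identities \eqref{wronsk1}, \eqref{wronsk2} expand ${\bf u}_0^\alpha$ and (via the diagonal relation \eqref{diagrel}) ${\bf u}_0^\beta$ in that basis, and their linear dependence is exactly the vanishing of the resulting $2\times 2$ coefficient determinant. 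Rearranging that vanishing condition gives the displayed identity that appears just above the theorem statement, from which $m(\epsilon)e^{2z(\beta,\lambda,\alpha)/\epsilon}=1$ follows after division. That $m(\epsilon)$ is actually nonzero (so this is a genuine quantization rule rather than a vacuous identity) will fall out of the asymptotic computation.

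For the asymptotics of $m(\epsilon)$ I would apply Proposition~\ref{3.567} at both turning points and track the signs. At $\alpha$ the configuration is precisely the one treated in that proposition---Stokes lines labeled anti-clockwise starting from $l_0^\alpha\subset C^0(\lambda_0)$, branch cut on $l_1^\alpha$, triple $({\bf u}_0^\alpha,{\bf u}_1^\alpha,{\bf u}_2^\alpha)=({\bf u}^+,{\bf u}^-,{\bf u}^+)$---and one reads off directly
\[
{\mathcal W}[{\bf u}_0^\alpha,{\bf u}_1^\alpha]=2i\bigl(1+\CO(\epsilon)\bigr),\qquad {\mathcal W}[{\bf u}_2^\alpha,{\bf u}_0^\alpha]=\mp 2\bigl(1+\CO(\epsilon)\bigr),
\]
with upper (resp.\ lower) sign when $\alpha$ is a zero of $g_+$ (resp.\ $g_-$). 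At $\beta$ two features are reversed relative to $\alpha$: the Stokes line labeling is clockwise rather than anti-clockwise, which forces $\re z(\cdot,\lambda,\alpha)$ to \emph{decrease} from $x_0^\beta$ towards $\beta$ and thereby swaps the $\pm$ labels of the exact WKB solutions (explaining why ${\bf u}_0^\beta={\bf u}^-$, ${\bf u}_1^\beta={\bf u}^-$, ${\bf u}_2^\beta={\bf u}^+$); and the branch cut is placed on $l_2^\beta$ rather than on $l_1^\beta$, which swaps the identity of the pair needing the Riemann-sheet trick. Rerunning the argument of Proposition~\ref{3.567} under these modifications yields
\[
{\mathcal W}[{\bf u}_0^\beta,{\bf u}_1^\beta]=\mp 2\bigl(1+\CO(\epsilon)\bigr),\qquad {\mathcal W}[{\bf u}_2^\beta,{\bf u}_0^\beta]=2i\bigl(1+\CO(\epsilon)\bigr),
\]
with the sign in the first formula now governed by whether $\beta$ is a zero of $g_+$ or $g_-$. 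Substituting the four estimates into the definition of $m(\epsilon)$ gives $m(\epsilon)=-1/(\sigma_\alpha\sigma_\beta)+\CO(\epsilon)$, where $\sigma_\bullet=+1$ at a zero of $g_+$ and $-1$ at a zero of $g_-$; a four-case check against \eqref{dab} identifies this leading coefficient with $\delta(\alpha_0,\beta_0)$.

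The principal obstacle is the $\beta$ computation. The worrying pair is $\{{\bf u}_0^\beta,{\bf u}_1^\beta\}$: both are ${\bf u}^-$-type, so a direct application of the ${\bf u}^-$ analogue of \eqref{wronsky++} only produces $\CO(\epsilon)e^{-2z_1/\epsilon}$, not the $\CO(1)$ real value needed. The $\CO(1)$ contribution arises only after analytically continuing ${\bf u}_0^\beta$ clockwise across the branch cut on $l_2^\beta$: the same $H(\hat x)=\pm i H(x)$ calculation used at the end of the proof of Proposition~\ref{3.567}, together with the identity $w_n^\pm(z)=w_n^\mp(\hat z)$, converts ${\bf u}_0^\beta$ into a factor $\pm i$ times a ${\bf u}^+$-type solution based at $\hat x_0^\beta$, after which \eqref{wronsky+-} (or Theorem~\ref{formal-to-rigorous}~(iii)) and the existence of a $(+)$-progressive path from $\hat x_0^\beta$ to $x_1$ produce the $\CO(1)$ asymptotics with the claimed sign. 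The careful bookkeeping of this phase (combined with the analogous one at $\alpha$) is what is ultimately responsible for the factor $\delta(\alpha_0,\beta_0)$.
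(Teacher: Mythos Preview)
Your proposal is correct and follows exactly the paper's approach: the iff characterization is assembled from Lemma~\ref{linear-depend-lemma-of-u}, the wronskian identities \eqref{wronsk1}--\eqref{wronsk2} and the diagonal relations \eqref{diagrel}, and the asymptotic identification of $m(\epsilon)$ comes from applying Proposition~\ref{3.567} at $\alpha$ and its mirror image at $\beta$. Your discussion of the $\beta$ configuration---in particular that the clockwise labeling and branch cut on $l_2^\beta$ make $\{{\bf u}_0^\beta,{\bf u}_1^\beta\}$ the pair requiring the Riemann-sheet continuation---is precisely the content hidden behind the paper's one word ``Similarly'', and your sign bookkeeping $m(\epsilon)=-1/(\sigma_\alpha\sigma_\beta)+\CO(\epsilon)=\delta(\alpha_0,\beta_0)+\CO(\epsilon)$ is correct.
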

\begin{proof}
It remains to check the  asymptotic part of the statement.  We already know from 
Proposition \ref{3.567} that as $\epsilon\downarrow0$
\be
\label{asympt-wronsk-1}
{\mathcal W}[{\bf u}_2^\alpha,{\bf u}_0^\alpha]=
\mp 2+\CO(\epsilon),\quad
{\mathcal W}[{\bf u}_0^\alpha,{\bf u}_1^\alpha]=
2i+\CO(\epsilon)
\ee
when $\alpha$ is a zero of $g_\pm$. Similarly, as $\epsilon\downarrow0$ we have
\be
\label{asympt-wronsk-2}
{\mathcal W}[{\bf u}_2^\beta,{\bf u}_0^\beta]=2i+\CO(\epsilon),\quad
{\mathcal W}[{\bf u}_0^\beta,{\bf u}_1^\beta]=\mp 2+\CO(\epsilon)
\ee
when $\beta$ is a zero of $g_\pm$.
These asymptotic formulas immediately give the assertion.
\end{proof}

\begin{remark}(The situation at the endpoints $\lambda_D^{(j)},j=1,2$ of the branches)
It is clear from the proof that the statement of Theorem \ref{BS} is still valid for 
$\lambda_0$ at a limit point of the asymptotic spectral arcs where $x_+(\lambda_0)$ 
and $x_-(\lambda_0)$ coalesce to a double turning point.
\end{remark}

\subsection{What happens near the bifurcation point}
\label{near-the-bf}

Now let us suppose that $\lambda_0$ lies on the asymptotic spectral arcs with 
admissible contour 
$C(\lambda_0)=C^-(\lambda_0)\cup C^0(\lambda_0)\cup C^+(\lambda_0)$ 
and that the Stokes line $C^0(\lambda_0)$ connecting the two simple turning points 
$\alpha_0=\alpha(\lambda_0)$ and $\beta_0=\beta(\lambda_0)$ now also contains 
a third simple turning point between of them,  namely $\gamma_0=\gamma(\lambda_0)$; 
of course,  $\alpha, \beta$ and $\gamma$ are analytic functions defined in a 
neighborhood of $\lambda_0$.

Hence, $C^0(\lambda)$ consists of a Stokes line $l_0^{\alpha}$ connecting 
$\alpha$ and $\gamma$ and a Stokes line $l_0^{\beta}$ connecting $\beta$ and 
$\gamma$: $C^0(\lambda)=l_0^{\alpha}\cup l_0^{\beta}$.  We suppose that
the third Stokes line from $\gamma$ divides the region 
$\omega_{\rm down}^{\alpha,\beta}$ (defined as in section \ref{bs-at-two-turn-pts}) 
into two subregions $\omega_{\rm down}^{\alpha,\gamma}$ and 
$\omega_{\rm down}^{\beta,\gamma}$,  and take a point 
$x_1^\alpha$ in $\omega_{\rm down}^{\alpha,\gamma}$ and $x_1^\beta$ in 
$\omega_{\rm down}^{\beta,\gamma}$ (see Figure \ref{stokes2}). 

\begin{figure}[htbp]
\centering
\includegraphics[bb=-80 0 387 240, width=10cm]{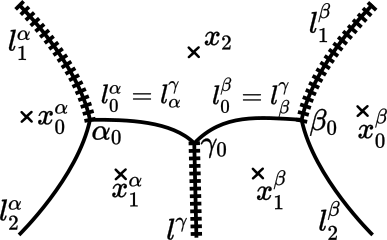}
\caption{Stokes lines near a bifurcation point.}
\label{stokes2}
\end{figure}

Now we let $\lambda$ vary in a small neighborhood of $\lambda_0$. Then the Stokes
lines in $\omega$ change the geometric configuration. We denote by 
$l_j^\alpha(\lambda)$, $l_j^\beta(\lambda)$  for $j=0,1,2$ and 
$l^\gamma(\lambda)$, $l_\alpha^\gamma(\lambda)$, $l_\beta^\gamma(\lambda)$ 
the three Stokes lines emanating from $\alpha(\lambda)$, $\beta(\lambda)$ and 
$\gamma(\lambda)$ respectively such that 
$l_0^{\alpha}(\lambda_0)=l_\alpha^\gamma(\lambda_0)$, 
$l_0^{\beta}(\lambda_0)=l_\beta^\gamma(\lambda_0)$. 
The variation of Stokes geometry as $\lambda$ turns around $\lambda_0$ is 
illustrated in Figure \ref{stokesvariation}. Remark that the Stokes lines stay close to 
those for $\lambda=\lambda_0$ by continuity.

\begin{figure}[htbp]
\centering
\includegraphics[bb=-200 0 1231 857, width=15cm]{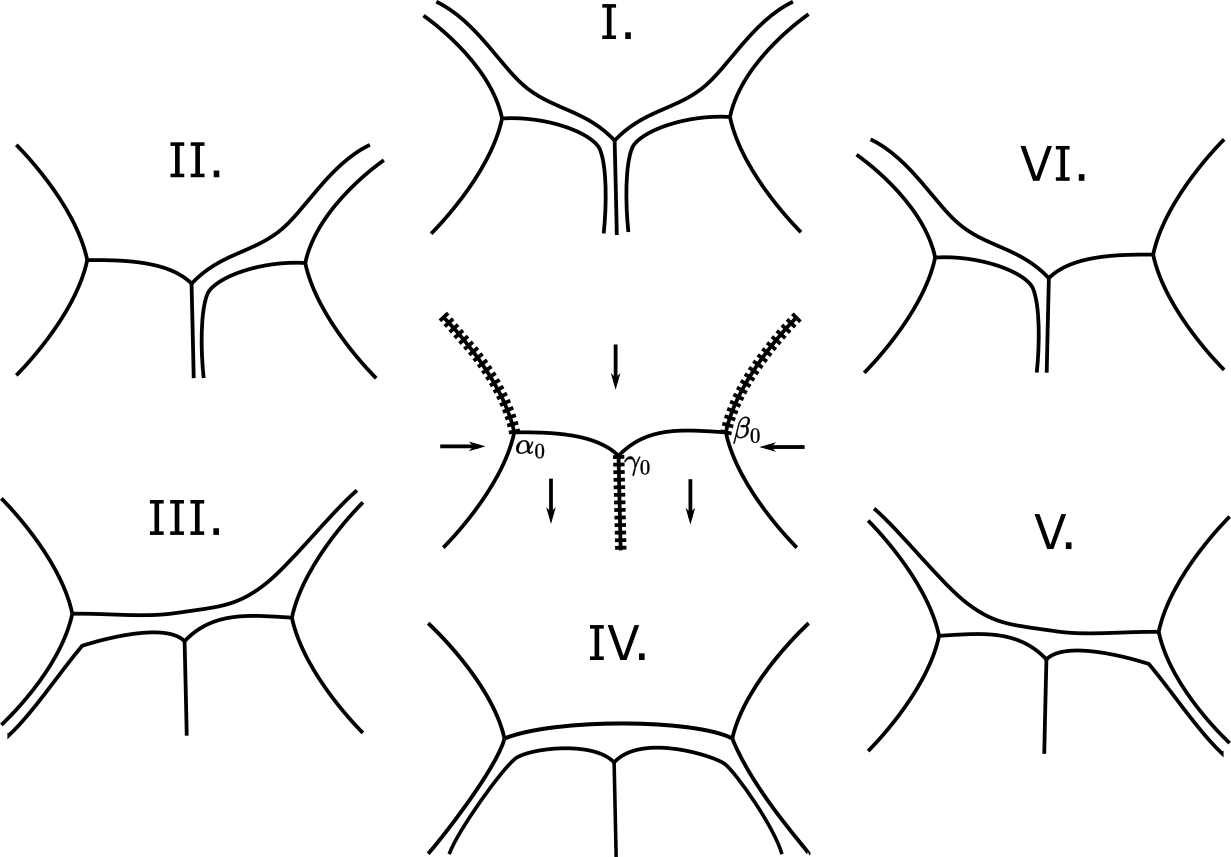}
\caption{Variation of the Stokes geometry around a bifurcation point.}
\label{stokesvariation}
\end{figure}

We put branch cuts along the Stokes lines $l_1^\alpha(\lambda)$, $l_1^\beta(\lambda)$ 
and $l^\gamma(\lambda)$ (we emphasize at this point that our selection for the 
branch cut placed on a Stokes line emerging from $\beta$ in this case is different 
from the previous case where we dealt with only two turning points), and suppose 
that $\re z(x)$ increases towards $\alpha$ in $\omega_{\rm left}^\alpha$.  
We define as before
\begin{align*}
{\bf u}_0^\alpha(x,\epsilon)&={\bf u}^+(x,\lambda,\epsilon,\alpha,x_0^\alpha), \quad
{\bf u}_0^\beta(x,\epsilon)={\bf u}^+(x,\lambda,\epsilon,\beta,x_0^\beta)\\
{\bf u}_1^\alpha(x,\epsilon)&={\bf u}^-(x,\lambda,\epsilon,\alpha,x_1^\alpha), \quad
{\bf u}_1^\beta(x,\epsilon)={\bf u}^-(x,\lambda,\epsilon,\beta,x_1^\beta)\\
{\bf u}_2^\alpha(x,\epsilon)&={\bf u}^+(x,\lambda,\epsilon,\alpha,x_2), \quad
{\bf u}_2^\beta(x,\epsilon)={\bf u}^+(x,\lambda,\epsilon,\beta,x_2)
\end{align*}
\begin{align*}
{\bf u}_1^{\gamma,\alpha}(x,\epsilon)&={\bf u}^-(x,\lambda,\epsilon,\gamma,x_1^\alpha)\\
{\bf u}_1^{\gamma,\beta}(x,\epsilon)&={\bf u}^-(x,\lambda,\epsilon,\gamma,x_1^\beta)\\
{\bf u}_2^\gamma(x,\epsilon)&={\bf u}^+(x,\lambda,\epsilon,\gamma,x_2)
\end{align*}

For each one of the triples $\{{\bf u}_0^\alpha, {\bf u}_1^\alpha, {\bf u}_2^\alpha\}$, 
$\{{\bf u}_2^\gamma, {\bf u}_1^{\gamma, \alpha}, {\bf u}_1^{\gamma,\beta}\}$ and 
$\{{\bf u}_0^\beta, {\bf u}_1^\beta, {\bf u}_2^\beta\}$, we have the following relations
\begin{align}
\label{120a}
{\mathcal W}[{\bf u}_1^\alpha,{\bf u}_2^\alpha]{\bf u}_0^\alpha+
{\mathcal W}[{\bf u}_2^\alpha,{\bf u}_0^\alpha]{\bf u}_1^\alpha+
{\mathcal W}[{\bf u}_0^\alpha,{\bf u}_1^\alpha]{\bf u}_2^\alpha=
0, \\
\label{120c}
{\mathcal W}[{\bf u}_1^{\gamma,\alpha},{\bf u}_1^{\gamma,\beta}]{\bf u}_2^\gamma+
{\mathcal W}[{\bf u}_1^{\gamma,\beta},{\bf u}_2^\gamma]{\bf u}_1^{\gamma,\alpha}+
{\mathcal W}[{\bf u}_2^\gamma,{\bf u}_1^{\gamma,\alpha}]{\bf u}_1^{\gamma,\beta}=
0, \\
\label{120b}
{\mathcal W}[{\bf u}_1^\beta,{\bf u}_2^\beta]{\bf u}_0^\beta+
{\mathcal W}[{\bf u}_2^\beta,{\bf u}_0^\beta]{\bf u}_1^\beta+
{\mathcal W}[{\bf u}_0^\beta,{\bf u}_1^\beta]{\bf u}_2^\beta=
0.
\end{align}
Using the obvious relations
$$
{\bf u}_1^\beta=
e^{z(\beta,\lambda,\gamma)/\epsilon}{\bf u}_1^{\gamma,\beta},\quad
{\bf u}_1^{\gamma,\alpha}=
e^{z(\gamma,\lambda,\alpha)/\epsilon}{\bf u}_1^\alpha
$$
$$
{\bf u}_2^\beta=e^{-z(\beta,\lambda,\gamma)/\epsilon}{\bf u}_2^\gamma,\quad
{\bf u}_2^\gamma=
e^{-z(\gamma,\lambda,\alpha)/\epsilon}{\bf u}_2^\alpha
$$
and 
$$z(\beta,\lambda,\gamma)+z(\gamma,\lambda,\alpha)=z(\beta,\lambda,\alpha)$$ 
we deduce that the sum 
\begin{equation*}
{\mathcal W}[{\bf u}_2^\beta,{\bf u}_0^\beta]{\bf u}_1^\beta+
{\mathcal W}[{\bf u}_0^\beta,{\bf u}_1^\beta]{\bf u}_2^\beta
\end{equation*}
of the second and the third terms of (\ref{120b}) is
\begin{equation*}
{\mathcal W}[{\bf u}_2^\beta,{\bf u}_0^\beta]
e^{z(\beta,\lambda,\gamma)/\epsilon}{\bf u}_1^{\gamma,\beta}+
{\mathcal W}[{\bf u}_0^\beta,{\bf u}_1^\beta]
e^{-z(\beta,\lambda,\gamma)/\epsilon}{\bf u}_2^{\gamma}
\end{equation*}
and this is equal to 
\begin{equation*}
{\mathcal W}[{\bf u}_0^\beta,{\bf u}_1^\beta]
e^{-z(\beta,\lambda,\alpha)/\epsilon}{\bf u}_2^\alpha-
\frac{{\mathcal W}[{\bf u}_2^\beta,{\bf u}_0^\beta]}{{\mathcal W}
[{\bf u}_2^\gamma,{\bf u}_1^{\gamma,\alpha}]}
e^{z(\beta,\lambda,\gamma)/\epsilon}\left 
({\mathcal W}[{\bf u}_1^{\gamma,\alpha},{\bf u}_1^{\gamma,\beta}]{\bf u}_2^\gamma+
{\mathcal W}[{\bf u}_1^{\gamma,\beta},{\bf u}_2^\gamma]{\bf u}_1^{\gamma,\alpha}\right )
\end{equation*}
or 
\begin{multline*}
{\mathcal W}[{\bf u}_0^\beta,{\bf u}_1^\beta]
e^{-z(\beta,\lambda,\alpha)/\epsilon}{\bf u}_2^\alpha\\
\qquad-
\frac{{\mathcal W}[{\bf u}_2^\beta,{\bf u}_0^\beta]}{{\mathcal W}
[{\bf u}_2^\gamma,{\bf u}_1^{\gamma,\alpha}]}e^{z(\beta,\lambda,\gamma)/\epsilon}\left 
({\mathcal W}[{\bf u}_1^{\gamma,\alpha},{\bf u}_1^{\gamma,\beta}]
e^{-z(\gamma,\lambda,\alpha)/\epsilon}{\bf u}_2^\alpha+
{\mathcal W}[{\bf u}_1^{\gamma,\beta},{\bf u}_2^\gamma]
e^{z(\gamma,\lambda,\alpha)/\epsilon}{\bf u}_1^\alpha\right ).
\end{multline*} 
After some calculations, the coefficient of ${\bf u}_2^\alpha$ in this last expression is
$$
{\mathcal W}[{\bf u}_0^\beta,{\bf u}_1^\beta]
e^{-z(\beta,\lambda,\alpha)/\epsilon}-
\frac{{\mathcal W}[{\bf u}_2^\beta,{\bf u}_0^\beta]{\mathcal W}
[{\bf u}_1^{\gamma,\alpha},{\bf u}_1^{\gamma,\beta}]}{{\mathcal W}
[{\bf u}_2^\gamma,{\bf u}_1^{\gamma,\alpha}]}
e^{(z(\beta,\lambda,\gamma)-z(\gamma,\lambda,\alpha))/\epsilon}
$$
while the coefficient of ${\bf u}_1^\alpha$ is
$$
-\frac{{\mathcal W}[{\bf u}_2^\beta,{\bf u}_0^\beta]
{\mathcal W}[{\bf u}_1^{\gamma,\beta},{\bf u}_2^{\gamma}]}{{\mathcal W}
[{\bf u}_2^\gamma,{\bf u}_1^{\gamma,\alpha}]}
e^{z(\beta,\lambda,\alpha)/\epsilon}.
$$
Comparing this with \eqref{120a}, we obtain the following theorem.
\begin{theorem}
Let $\lambda_0$ be a point on the asymptotic spectral arcs such that $C^0(\lambda_0)$  
consists of two Stokes lines; one from $\alpha_0$ to $\gamma_0$ and another one from 
$\beta_0$ to $\gamma_0$ (for a simple turning point $\gamma_0$; cf.  Figure 
\ref{stokes2}). There exists a neighborhood $U$ of $\lambda_0$ such that 
$\lambda\in U$ is an eigenvalue of  $\mathfrak{D}_\epsilon$  if and only if the 
following condition holds
\begin{equation}
\label{QC3}
m_\alpha(\epsilon)e^{2z(\gamma,\lambda,\alpha)/\epsilon}+
m_\beta(\epsilon)e^{2z(\gamma,\lambda,\beta)/\epsilon}=
1.
\end{equation}
Here the functions $m_\alpha(\epsilon)$ and $m_\beta(\epsilon)$ are given by
$$
m_\alpha(\epsilon)=
\frac{{\mathcal W}[{\bf u}_0^\alpha,{\bf u}_1^\alpha]
{\mathcal W}[{\bf u}_1^{\gamma,\beta},{\bf u}_2^{\gamma}]}{{\mathcal W}
[{\bf u}_2^\alpha,{\bf u}_0^\alpha]{\mathcal W}
[{\bf u}_1^{\gamma,\alpha},{\bf u}_1^{\gamma,\beta}]},\quad
m_\beta(\epsilon)=
\frac{{\mathcal W}[{\bf u}_0^\beta,{\bf u}_1^\beta]{\mathcal W}
[{\bf u}_2^\gamma,{\bf u}_1^{\gamma,\alpha}]}{{\mathcal W}
[{\bf u}_2^\beta,{\bf u}_0^\beta]{\mathcal W}
[{\bf u}_1^{\gamma,\alpha},{\bf u}_1^{\gamma,\beta}]}
$$
and as $\epsilon\downarrow0$, they behave like
$$
m_\alpha(\epsilon)=\mp 1+\CO(\epsilon),\quad m_\beta(\epsilon)=\mp 1+\CO(\epsilon)
$$
where the minus/plus sign for $m_\alpha(\epsilon)$ [resp. $m_\beta(\epsilon)$] 
corresponds to the case where $\alpha$ (resp. $\beta$) and $\gamma$ are 
zeros of the same/different $g_+$, $g_-$.
\end{theorem}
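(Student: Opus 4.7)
The strategy is parallel to that of Theorem \ref{BS}, but the bridging between $C^-(\lambda_0)$ and $C^+(\lambda_0)$ now requires three turning points rather than two. The plan is, first, to argue an analog of Lemma \ref{linear-depend-lemma-of-u}: for $\lambda$ in a sufficiently small neighbourhood $U$ of $\lambda_0$, the solutions $\mathbf{u}_0^\alpha$ and $\mathbf{u}_0^\beta$ decay along $C^-(\lambda_0)$ and $C^+(\lambda_0)$ respectively by Theorem \ref{formal-to-rigorous}(ii), and therefore $\lambda$ is an eigenvalue of $\mathfrak{D}_\epsilon$ if and only if they are linearly dependent. The remainder translates this linear-dependence condition into a scalar equation of the required form.

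Next, I would apply Proposition \ref{dep} to each of the three triples $\{\mathbf{u}_0^\alpha,\mathbf{u}_1^\alpha,\mathbf{u}_2^\alpha\}$, $\{\mathbf{u}_2^\gamma,\mathbf{u}_1^{\gamma,\alpha},\mathbf{u}_1^{\gamma,\beta}\}$ and $\{\mathbf{u}_0^\beta,\mathbf{u}_1^\beta,\mathbf{u}_2^\beta\}$, obtaining identities \eqref{120a}, \eqref{120c} and \eqref{120b}. Using the base-point change relations that come from the $\exp\{\pm z/\epsilon\}$ prefactor in \eqref{wkb-sols} (together with uniqueness of the convergent symbol expansion), I would eliminate $\mathbf{u}_1^\beta,\mathbf{u}_2^\beta$ in favour of $\mathbf{u}_1^{\gamma,\beta},\mathbf{u}_2^\gamma$, then use \eqref{120c} to eliminate $\mathbf{u}_1^{\gamma,\beta}$ in terms of $\mathbf{u}_2^\gamma$ and $\mathbf{u}_1^{\gamma,\alpha}$, and finally push these down to $\mathbf{u}_1^\alpha$ and $\mathbf{u}_2^\alpha$. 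This produces an expansion
\begin{equation*}
\mathbf{u}_0^\beta = a(\epsilon,\lambda)\,\mathbf{u}_1^\alpha + b(\epsilon,\lambda)\,\mathbf{u}_2^\alpha,
\end{equation*}
which is essentially the computation already sketched in the excerpt. Comparing with \eqref{120a} and using the cocycle identity $z(\beta,\lambda,\gamma)+z(\gamma,\lambda,\alpha)=z(\beta,\lambda,\alpha)$ to reassemble exponentials, the linear-dependence condition becomes exactly \eqref{QC3} with $m_\alpha,m_\beta$ as stated.

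For the $\epsilon\downarrow 0$ asymptotics of $m_\alpha(\epsilon)$ and $m_\beta(\epsilon)$, I would apply Proposition \ref{3.567} locally at each of $\alpha$, $\beta$ and $\gamma$. The Wronskians $\mathcal{W}[\mathbf{u}_0^\alpha,\mathbf{u}_1^\alpha]$ and $\mathcal{W}[\mathbf{u}_2^\alpha,\mathbf{u}_0^\alpha]$ and their $\beta$-analogues contribute the asymptotics \eqref{01} and \eqref{20}, with the sign fixed by whether $\alpha$ (resp. $\beta$) is a zero of $g_+$ or $g_-$. The $\gamma$-based Wronskians enter $m_\alpha,m_\beta$ through ratios of the form $\mathcal{W}[\mathbf{u}_1^{\gamma,\beta},\mathbf{u}_2^\gamma]/\mathcal{W}[\mathbf{u}_1^{\gamma,\alpha},\mathbf{u}_1^{\gamma,\beta}]$; a short computation in the spirit of Proposition \ref{3.567} at $\gamma$ shows that the type of $\gamma$ cancels in each such ratio, and only the relative type of $\alpha$ versus $\gamma$ (resp. $\beta$ versus $\gamma$) survives, giving the stated $\mp 1+\mathcal{O}(\epsilon)$.

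The main obstacle is the bookkeeping of branch cuts. Here there are three cuts ($l_1^\alpha,\ l_1^\beta,\ l^\gamma$), and the analytic continuation of solutions across them contributes signs that must combine consistently with the exponential base-point change relations. In particular, the local analysis at $\gamma$ now controls both connections (to $\alpha$ and to $\beta$) and must be performed with the branch choice fixed globally along the admissible contour; once these sign assignments are verified once and for all, the remainder of the argument is straightforward algebra.
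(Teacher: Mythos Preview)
Your proposal is correct and follows essentially the same approach as the paper. The derivation of the quantization condition \eqref{QC3} via the three Wronskian identities \eqref{120a}--\eqref{120b}, the base-point change relations, and comparison with \eqref{120a} is exactly the computation carried out in the text immediately preceding the theorem; the paper's proof environment then only records the individual Wronskian asymptotics at $\alpha$, $\beta$, $\gamma$ (each $2i$ or $\pm 2$ according to the type of the turning point) and substitutes, which is what you outline. One minor wording issue: in the ratio $\mathcal{W}[\mathbf{u}_1^{\gamma,\beta},\mathbf{u}_2^\gamma]/\mathcal{W}[\mathbf{u}_1^{\gamma,\alpha},\mathbf{u}_1^{\gamma,\beta}]$ the type of $\gamma$ does \emph{not} cancel (the numerator is $-2i$ independent of type, the denominator is $\pm 2$ depending on type), but as you correctly conclude, once this is multiplied by $\mathcal{W}[\mathbf{u}_0^\alpha,\mathbf{u}_1^\alpha]/\mathcal{W}[\mathbf{u}_2^\alpha,\mathbf{u}_0^\alpha]$ only the \emph{relative} type of $\alpha$ versus $\gamma$ survives in $m_\alpha$.
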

\begin{proof}
For the second part,  we have in fact modulo ${\mathcal O}(\epsilon)$,
$$
{\mathcal W}[{\bf u}_0^\alpha,{\bf u}_1^\alpha]=2i,\quad
{\mathcal W}[{\bf u}_0^\beta,{\bf u}_1^\beta]=2i,\quad
{\mathcal W}[{\bf u}_2^\gamma,{\bf u}_1^{\gamma,\alpha}]=2i,\quad
{\mathcal W}[{\bf u}_2^{\gamma},{\bf u}_1^{\gamma,\beta}]=2i,
$$
$$
{\mathcal W}[{\bf u}_2^\alpha,{\bf u}_0^\alpha]=\mp 2,\quad
{\mathcal W}[{\bf u}_2^\beta,{\bf u}_0^\beta]=\pm 2,\quad
{\mathcal W}[{\bf u}_1^{\gamma,\alpha},{\bf u}_1^{\gamma,\beta}]=\pm 2,
$$
where the upper (resp. lower) sign corresponds to the case where the indicated 
turning point is a zero of $g_+$ (resp.  $g_-$).
\end{proof}

Let us investigate the quantization condition \eqref{QC3} for $\lambda$ close to 
(but different from) $\lambda_0$.  First, consider the actions
\begin{align*}
z_\alpha(\lambda)&=z(\gamma(\lambda),\lambda,\alpha(\lambda))\\
z_\beta(\lambda)&=z(\gamma(\lambda),\lambda,\beta(\lambda)).
\end{align*}
The configurations from I to VI of the Stokes lines in Figure \ref{stokesvariation} correspond 
to $\lambda$ satisfying the following conditions on the actions $z_\alpha$, $z_\beta$:
\begin{description}
\item[I] $\re z_\alpha(\lambda)=\re z_\beta(\lambda)<0$,
\item[II] $\re z_\beta(\lambda)<0=\re z_\alpha(\lambda)$,
\item[III] $\re z_\beta(\lambda)=0<\re z_\alpha(\lambda)$,
\item[IV] $\re z_\alpha(\lambda)=\re z_\beta(\lambda)>0$,
\item[V] $\re z_\alpha(\lambda)=0<\re z_\beta(\lambda)$,
\item[VI] $\re z_\alpha(\lambda)<0=\re z_\beta(\lambda)$,
\end{description}

Each condition defines a curve in the $\lambda$-plane starting from $\lambda_0$ 
(see Figure \ref{lines}).  For $\lambda$ not on any these curves, there is no connection 
between any two turning points, and such $\lambda$  cannot be o(1)-close to an 
eigenvalue for small enough $\epsilon$. 

\begin{figure}[htbp]
\centering
\includegraphics[bb=-100 0 674 576, width=10cm]{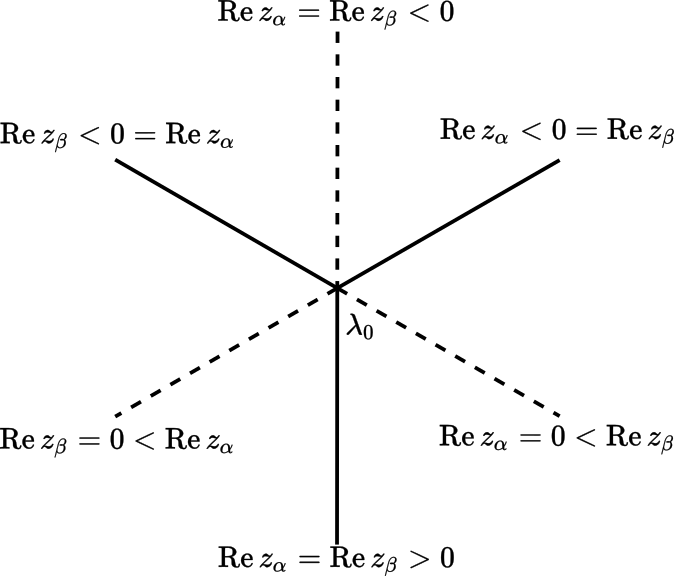}
\caption{Real part of the actions $z_\alpha=z(\gamma,\lambda,\alpha)$ and 
$z_\beta=z(\gamma,\lambda,\beta)$}
\label{lines}
\end{figure}

Consider first conditions I, III or V.  
In fact, the right hand side of the quantization condition (\ref{QC3}) is
exponentially small in case I and exponentially large in cases III and V,
and the quantization condition cannot hold for such $\lambda$'s.

Finally, in the cases II, IV and VI, the quantization condition \eqref{QC3} reduces 
to the usual one of Bohr-Sommerfeld type. We have
\begin{align}
\tilde m_\alpha(\epsilon)e^{2z(\gamma,\lambda,\alpha)/\epsilon}=1
\quad \text{in case II},\\
\tilde m_\beta(\epsilon)e^{2z(\gamma,\lambda,\beta)/\epsilon}=1
\quad \text{in case VI},\\
\tilde m_\gamma(\epsilon)e^{2z(\beta,\lambda,\alpha)/\epsilon}=1
\quad \text{in case IV},
\end{align}
where,  as $\epsilon\downarrow0$,
\begin{align}
\label{ma}
\tilde m_\alpha(\epsilon)=
\frac{m_\alpha(\epsilon)}{1-m_\beta(\epsilon)
e^{2z(\gamma,\lambda,\beta)/\epsilon}}=
\delta(\alpha,\gamma)+\CO(\epsilon), \\
\label{mb}
\tilde m_\beta(\epsilon)=
\frac{m_\beta(\epsilon)}{1-m_\alpha(\epsilon)
e^{2z(\gamma,\lambda,\alpha)/\epsilon}}=
\delta(\beta,\gamma)+\CO(\epsilon), \\
\label{mc}
\tilde m_\gamma(\epsilon)=-
\frac{m_\alpha(\epsilon)}{m_\beta(\epsilon)-
e^{2z(\gamma,\lambda,\beta)/\epsilon}}=
\delta(\alpha,\beta)+\CO(\epsilon).
\end{align}
Notice that the above asymptotics holds a long as the absolute values of the 
actions $z(\gamma,\lambda,\beta)$ in \eqref{ma}, $z(\gamma,\lambda,\alpha)$ 
in \eqref{mb}, $z(\gamma,\lambda,\beta)$ in \eqref{mc} are larger than 
$c\epsilon\log(1/\epsilon)$ for some positive constant $c$, which means 
$|\lambda-\lambda_0|\ge c'\epsilon\log(1/\epsilon)$.

\subsection{The Stokes geometry and quantization in a particular case}
\label{geometry}

In this paragraph, we present the geometric configuration of the very particular  case 
of \cite{mil}
\begin{equation}
\label{our-pot}
A(x)=S(x)=\sech(2x),\quad x\in\mathbb{R}.
\end{equation}
Actually, there is nothing really special about this example; it just happens that it was the 
first case studied numerically by  Bronski in \cite{bron} and subsequently by Miller in 
\cite{mil}. 

We begin with the turning points (see Definition \ref{definition-turn-pt}) of our problem.  
For fixed $\lambda\in\mathbb{C}$,  these are the zeros (in the complex $x$-plane) of 
[cf. (\ref{turn-pt})]
\begin{equation*}
-V_0(x,\lambda)=[\lambda-\tanh(2x)\sech(2x)]^2+\sech^2(2x).
\end{equation*}

In the present case  the potential $V_0(\cdot,\lambda)$ is periodic
with \textit{fundamental period} $i\pi$. For this reason, we define
the \textit{fundamental strip}
\begin{equation}
\label{strip}
\mathcal{S}=\{x\in\mathbb{C}|-\tfrac{\pi}{2}<\im x\leq\tfrac{\pi}{2}\}
\end{equation}
of the complex $x$-plane and restrict our attention only to what happens
in $\mathcal{S}$. For all $\lambda\in\mathbb{C}$, the potential
$V_0(\cdot,\lambda)$ has in $\mathcal{S}$ two fourth order $x$-poles located
at $x=\pm i\pi/4$. On the other hand, the zeros of  $V_0(\cdot,\lambda)$
are $\lambda$-dependent; there are precisely eight in $\mathcal{S}$
(counting multiplicity).

\begin{figure}[htbp]
\centering
\includegraphics[scale=1]{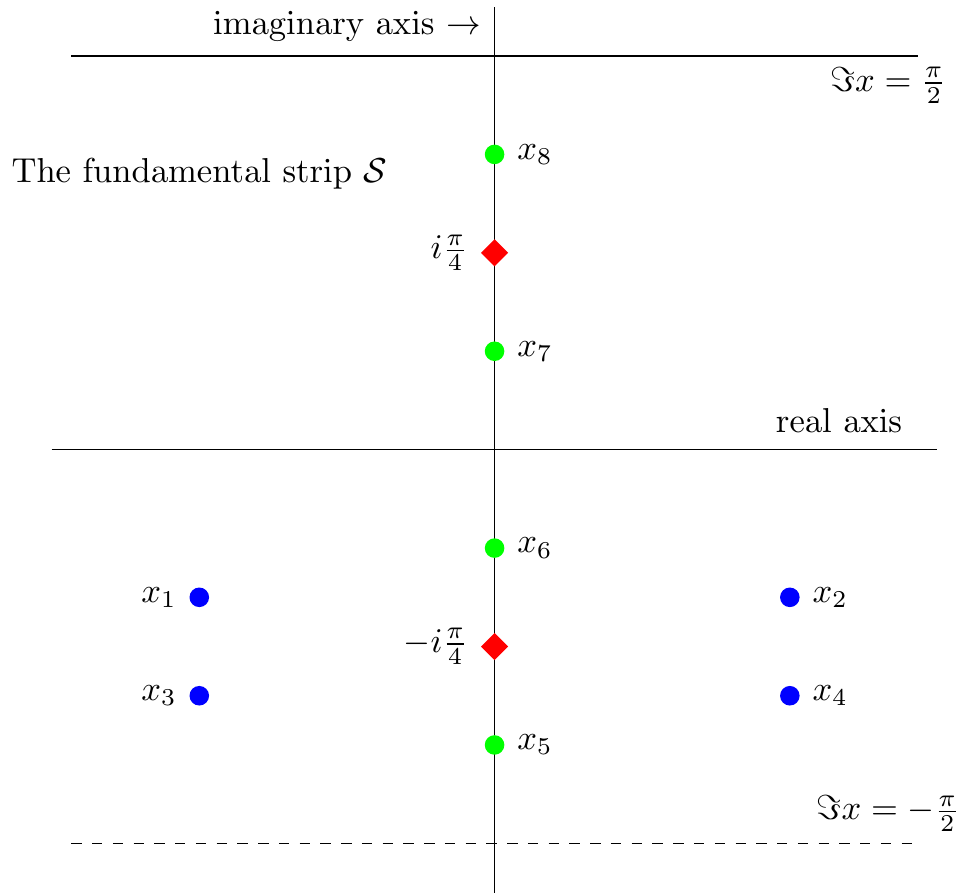}
\caption{A (qualitative) picture showing the turning point position with 
blue \& green circles (in the fundamental strip $\mathcal{S}$ on the 
$x$-complex plane) for a purely imaginary $\lam$ ($=0.2i$). The red 
diamonds denote the poles of the potential.}
\label{tp-for-imag-lam}
\end{figure}

When $\lambda$ is purely imaginary we have eight distinct
(and therefore simple) zeros in $\mathcal{S}$. There are four $x$-zeros
that lie on the imaginary axis, two between the poles and two outside.
The remaining four $x$-zeros (in the fundamental strip)
make up the vertices of a \textit{rectangle}. We label the turning points
in $\mathcal{S}$ as follows (the whole configuration is depicted in
Figure \ref{tp-for-imag-lam}).
\begin{enumerate}
\item[(i)]
The vertices of the rectangle
\begin{itemize}
\item
$x_1(\lambda)$ is
the turning point with negative real part and the greatest imaginary
part.
\item
$x_2(\lambda)=-x_1^*(\lambda)$.
\item
$x_3(\lambda)$ is the turning point with the same real part to $x_1(\lambda)$ and
the smallest imaginary part.
\item
$x_4(\lambda)=-x_3^*(\lambda)$.
\end{itemize}
\item[(ii)]
The four purely imaginary turning points are labeled in order of
increasing imaginary part. More precisely
\begin{itemize}
\item
The turning point between $-i\pi/2$ and $-i\pi/4$ is $x_5(\lambda)$.
\item
Between $-i\pi/4$ and $0$ lies $x_6(\lambda)$.
\item
$x_7(\lambda)$ is to be found between $0$ and
$i\frac{\pi}{4}$.
\item
Finally, the turning point between $i\pi/4$ and $i\pi/2$ is $x_8(\lambda)$.
\end{itemize}
\end{enumerate}
In Figure \ref{tp-for-imag-lam} we show  the general qualitative picture 
which appears in the actual numerics. The ``indexing" process above 
provides us with an unambiguous way for the labeling of the eight turning 
points $x_j(0.2i)$,  $j=1,\dots8$ in $\mathcal{S}$.  

\begin{remark}
\label{which-is-which}
Using this indexing, one can 
show that $x_1$, $x_2$, $x_6$ and $x_8$ are zeros of $g_-$ [see (\ref{gigi})] while 
$x_3$, $x_4$, $x_5$ and $x_7$ are zeros of $g_+$. 
\end{remark}

For $\lam$ not purely imaginary, the nice symmetry of the complex turning 
points  breaks down.  The turning points for non-imaginary values of $\lam$ can be 
uniquely defined by analytically continuing the turning points $x_j(0.2i)$, $j=1,\dots8$
along the L-shaped path (see Figure \ref{analytic-cont-lam}) from $0.2i$ to $\Re\lam+0.2i$ 
(horizontally) and from $\Re\lam+0.2i$ to $\lam$ (vertically).

\begin{figure}[htbp]
\centering
\includegraphics[scale=1]{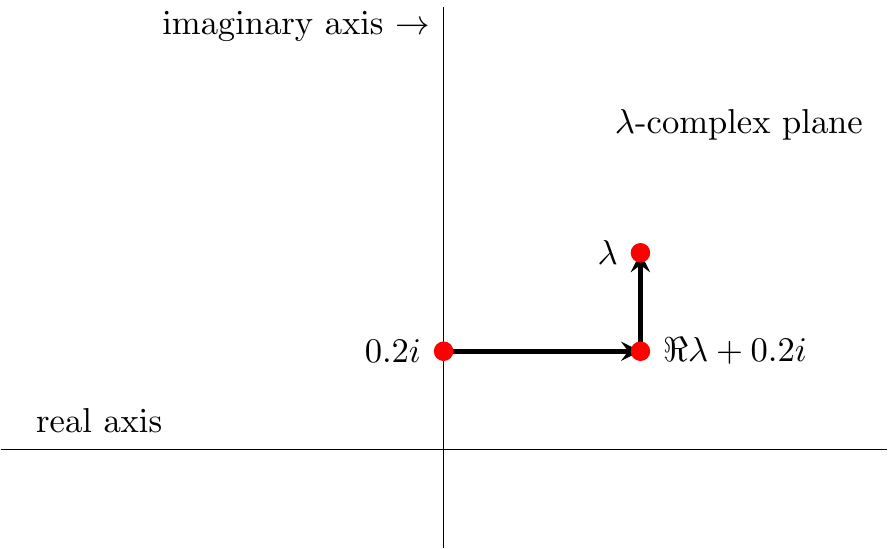}
\caption{The L-shaped analytic continuation in the $\lam$-complex plane
of the turning points $x_j(0.2i)$, $j=1,\dots8$ for a non-purely imaginary
$\lam$.}
\label{analytic-cont-lam}
\end{figure}

For some $\lam\in\C$, $V_0(\cdot,\lam)$ has double zeros which can be found by 
simply solving (with respect to $x$) simultaneously the equations 
$V_0(x,\lam)=\tfrac{d}{dx}V_0(x,\lam)=0$. It is  useful to compute these double zeros
since it turns out that they  are related  to the end-points of the (semiclassical) 
asymptotic spectrum. One easily sees that these double zeros  are present 
only for four values of $\lam$, namely
\be\label{lamda-double}
\lam_D=
i\sigma\sqrt{\frac{1}{2}+\frac{1+i\tau\sqrt{7}}{8}}
\Bigg(1-\frac{1+i\tau\sqrt{7}}{4}\Bigg)
\quad\text{where}\quad\sigma,\tau=\pm1
\ee
and are given by the solutions of the transcendental equations
\footnote{Each one of these four equations provides us with a double
turning point in $\mathcal{S}$.}
\be\nn
\tanh(2x)=\frac{1+i\tau\sqrt{7}}{4i\sigma}
\quad\text{where}\quad\sigma,\tau=\pm1
\ee
We adopt the notation 
$\lam_D^{(j)}$, $j=1,\dots,4$, by requiring that  the point $\lam_D^{(j)}$ is located in 
the $j^{\rm th}$ quadrant in the complex $\lam$-plane. 
Observe that the set of the four $\lam_D$'s in (\ref{lamda-double}) can be written as
$(\lam,-\lam,\lam^*,-\lam^*)$. These four $\lam_D$'s turn out to be
-as we shall eventually discover numerically- the endpoints of the four branches of a
$\mathlarger{\mathlarger{\yud}}$-shaped asymptotic spectrum in the complex $\lam$-plane, 
symmetric to both of the coordinate axes. The \textsf{Y}-shaped set consisting of the points 
in the upper half-plane $\mathbb{H}^+$ is the union of three arcs which intersect 
at a point $i\mu_{\otimes}$ (the numerical experiments in \cite{mil} indicate that 
$\mu_{\otimes}\approx0.28$).   

We  now proceed to investigate for which $\lam\in\C$ there exists an admissible 
contour (as defined in paragraph \S\ref{evs}). For these values of $\lam$ the WKB 
analysis for locating the eigenvalues will  be possible. For each $j,k=1,\dots,8$ 
with $j\neq k$, we now define the complex-valued \textit{action integrals}
\be\label{action-int}
I_{jk}(\lam)=
z(x_k(\lambda),\lambda,x_j(\lambda))=
i\int_{C^0(x_j(\lambda)\rightsquigarrow x_k(\lambda))}
\sqrt{-V_0(t,\lam)}dt.
\ee
We construct well-defined 
branches for these functions by defining signs of the square root for
$\lam=0.2i$ and applying analytic continuation along the same L-shaped
path as was used to define the turning points $x_j(\lam)$, $j=1,\dots,8$
(again refer to Figure \ref{analytic-cont-lam}). For reasons that will be
explained later, we restrict our attention only to $I_{12}$, $I_{16}$ and $I_{26}$.

\begin{figure}[htbp]
\begin{center}
\includegraphics[width=90mm]{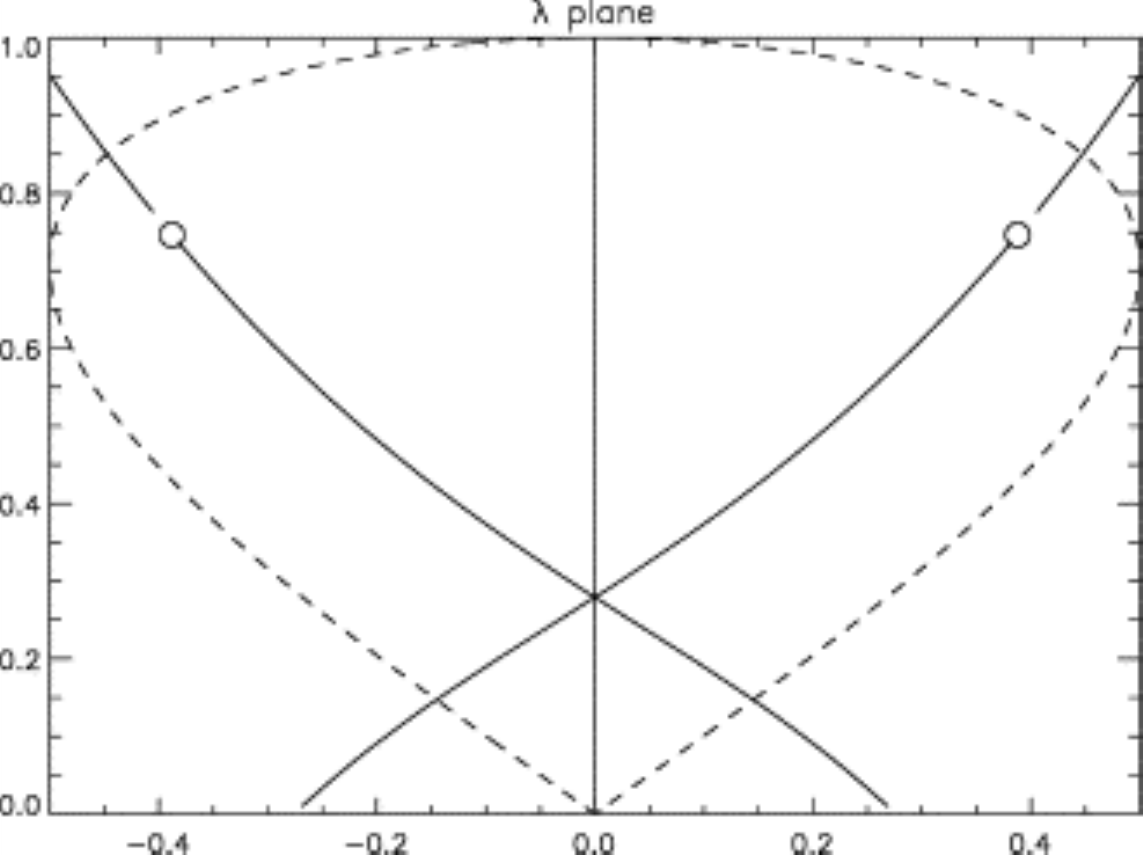}
\caption{
Candidates for the asymptotic spectral arcs on the $\lambda$-upper-half-plane.  
The dashed curve in this figure (which is Fig. 3 in \cite{mil}) is the so called 
\textit{real-turning-point curve} 
(it is the set
$\{\,
\lambda\in\mathbb{C}\mid\exists\hspace{2pt}r\in\mathbb{R}
\hspace{2pt}:\hspace{2pt}V_0(r,\lambda)=0
\,\}$);
it plays no role in our analysis.}
\label{curves-in-lambda-upper-half-plane}
\end{center}
\end{figure}

We confine ourselves to the upper $\lam$-half-plane, since the spectrum of
the Dirac operator in (\ref{dirac}) is symmetric with respect to the real
axis. The three conditions
$\re[I_{12}(\lam)]=0$, $\re[I_{16}(\lam)]=0$ and $\re[I_{26}(\lam)]=0$
[recall (\ref{lambda-relation})] yield three curves in the upper
$\lam$-half-plane (see Figure \ref{curves-in-lambda-upper-half-plane}).  
We label them as follows
\begin{align*}
\Lambda_{12} & =\{\,\lam\in\mathbb{H}^+\mid\re[I_{12}(\lam)]=0\,\},\\
\Lambda_{16} & =\{\,\lam\in\mathbb{H}^+\mid\re[I_{16}(\lam)]=0\,\},\\
\Lambda_{26} & =\{\,\lam\in\mathbb{H}^+\mid\re[I_{26}(\lam)]=0\,\}.
\end{align*}
Actually, the curve $\Lambda_{12}$ coincides with the upper
imaginary semi-axis while the curves $\Lambda_{16}$ and
$\Lambda_{26}$ have negative and positive slopes respectively.
There is  only one point of intersection in the upper
$\lam$-half-plane, which lies on the imaginary axis and is denoted by
$\lam_{\otimes}=i\mu_{\otimes}$. It is easily observed that $\lam_D^{(1)}$
belongs to the curve $\Lambda_{26}$ while $\lam_D^{(2)}=-\lam_D^{(1)\hspace{2pt}*}$ 
belongs to the curve $\Lambda_{16}$.

\begin{figure}[htbp]
\centering
\includegraphics[scale=1]{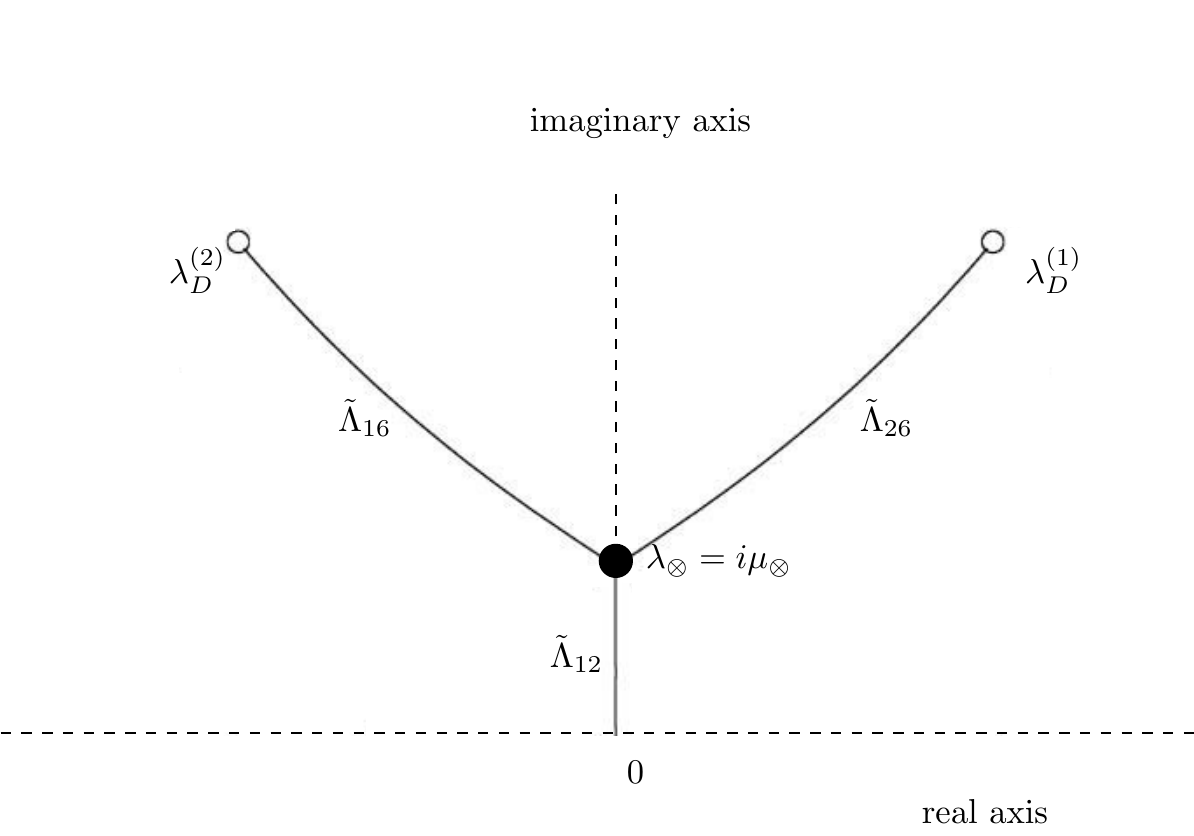}
\caption{The asymptotic spectral arcs on the $\lam$-plane.}
\label{cut-the-L}
\end{figure}

Our understanding is that \cite{mil} only considers $I_{12}$, $I_{16}$ and $I_{26}$ because  
the curves described by the rest of the equations $\Re[I_{jk}(\lam)]=0$ (as well as curves in 
the $\lam$-plane arising from equations corresponding to pairs of turning points in different 
period strips) play no new role in the WKB analysis of the particular potentials (\ref{our-pot}).  
More precisely, even though Stokes curves connecting the respective turning points exist, 
there are no progressive paths connecting to $\pm\infty$. For similar reasons one
eventually  eliminates portions of the curves $\Lambda_{12}$, $\Lambda_{16}$ and
$\Lambda_{26}$. Specifically, we only consider (see Figure \ref{cut-the-L})
\begin{itemize}
\item
$\tilde{\Lambda}_{12}$, the part of ${\Lambda}_{12}$ that lies between $0$ and 
$\lam_{\otimes}$, endpoints included
\item
$\tilde{\Lambda}_{26}$, which   lies between $\lam_{\otimes}$ and
$\lam_D^{(1)}$, endpoints included
\item
$\tilde{\Lambda}_{16}$, which lies between
$\lam_{\otimes}$ and $\lam_D^{(2)}$,  endpoints included.
\end{itemize}
These are the asymptotic spectral arcs defined in Definition \ref{asympt-spec-arc}, 
in our specific case (\ref{our-pot}).  
The union of these three arcs defines a \textsf{Y}-shaped set in the upper $\lam$-half-plane 
which is the part of the asymptotic spectrum that lies on the upper-half-plane. It has been 
numerically  observed in \cite{mil} that the semiclassical spectrum observed by Bronski 
in \cite{bron} coincides with this asymptotic spectrum computed here!

We must point out however, that we have a different argument why there are no eigenvalues
away from this  union of asymptotic spectral arcs. The numerics of \cite{mil} show that for
$\lambda$ away from this  union of asymptotic spectral arcs there is always a progressive path from
$-\infty$ to $+\infty$ which coincides with the real axis for large $|x|$.
From standard WKB theory these $\lambda$'s cannot be eigenvalues 
\footnote{See also Remark \ref{no-evs-remark-kamvi} at end of this section.}.

\begin{figure}[htbp]
\begin{center}
\includegraphics[width=60mm]{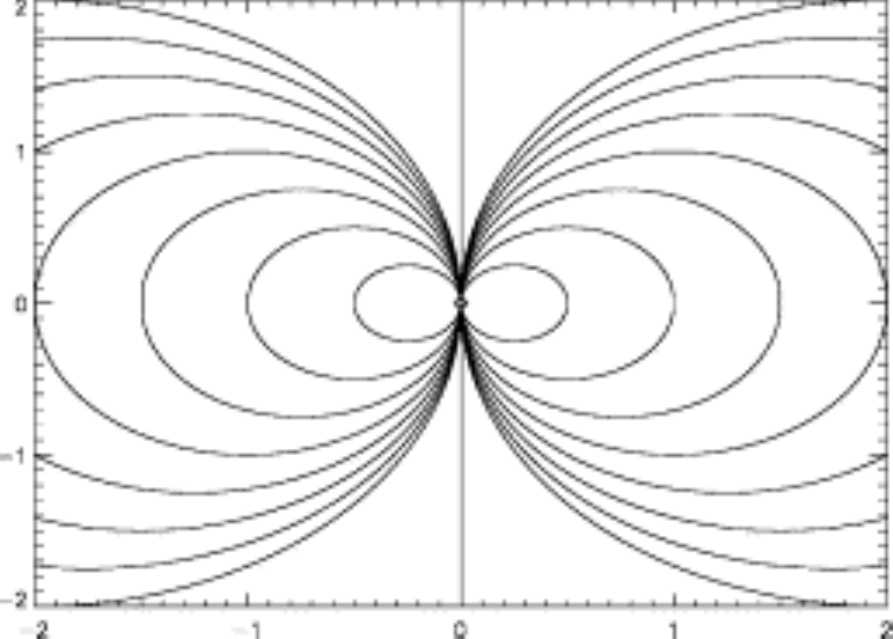}
\end{center}
\caption{The behavior of Stokes lines near a pole in $x$-plane.  
This is Fig. 5 in \cite{mil}.}
\label{x-behavior-near-a-pole}
\end{figure}

We now examine carefully the geometry of the solutions of the differential 
equation (\ref{de-real-path}) in the complex $x$-plane. From each simple turning point 
$x_j(\lam)$ three orbits of (\ref{de-real-path})  are emerging at angles of $2\pi/3$ 
(on the other hand  there is an infinite number of orbits meeting at each forth-order pole; 
see Figure \ref{x-behavior-near-a-pole}). These are Stokes curves of (\ref{de-real-path}). 

\begin{figure}[htbp]
\begin{center}
\includegraphics[width=110mm]{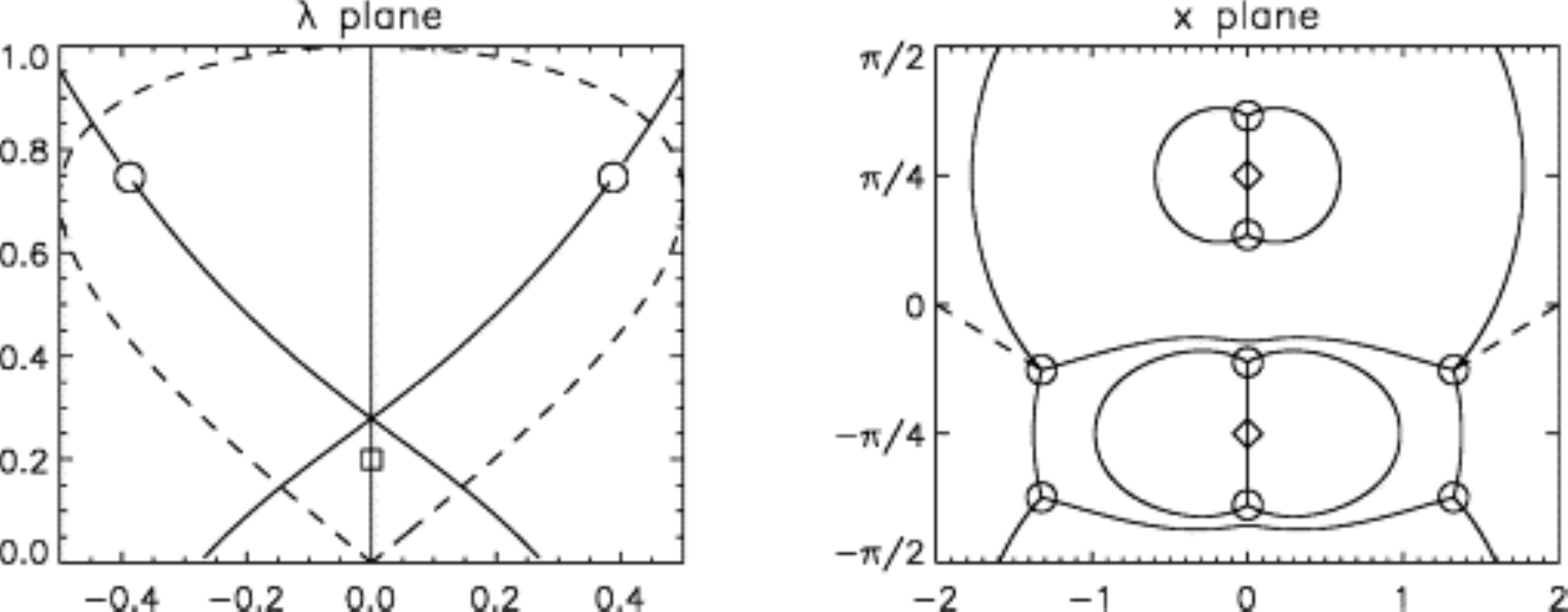}
\end{center}
\caption{On the left, the $\lambda$-plane with a small square indicating point  $\lambda=0.2i\in\tilde \Lambda_{12}$, below the bifurcation point.  On the right, for that  particular of $\lambda$, the fundamental strip $\mathcal{S}$ in the $x$-plane, showing clearly the Stokes lines that emanate from all turning points. This is Fig. 4 in \cite{mil}.}
\label{lambda-02i-and-x-planes}
\end{figure}

For $\lam=0.2i\in \tilde \Lambda_{12}$ (located below $\lam_\otimes$ on the imaginary axis),  
the corresponding situation of the $x$-plane is shown in Figure \ref{lambda-02i-and-x-planes}.  
We observe that there is a Stokes line $C^0$ connecting $x_1(0.2i)$ and $x_2(0.2i)$.  
Furthermore, there exist progressive paths $C^-$, $C^+$ joining $x_1(0.2i)$ to $-\infty$ and  
$x_2(0.2i)$ to $+\infty$ respectively 
\footnote{
Note here that while the Stokes line is uniquely defined, there is some freedom in the 
choice of the two progressive paths.}.

\begin{figure}[htbp]
\begin{center}
\includegraphics[width=110mm]{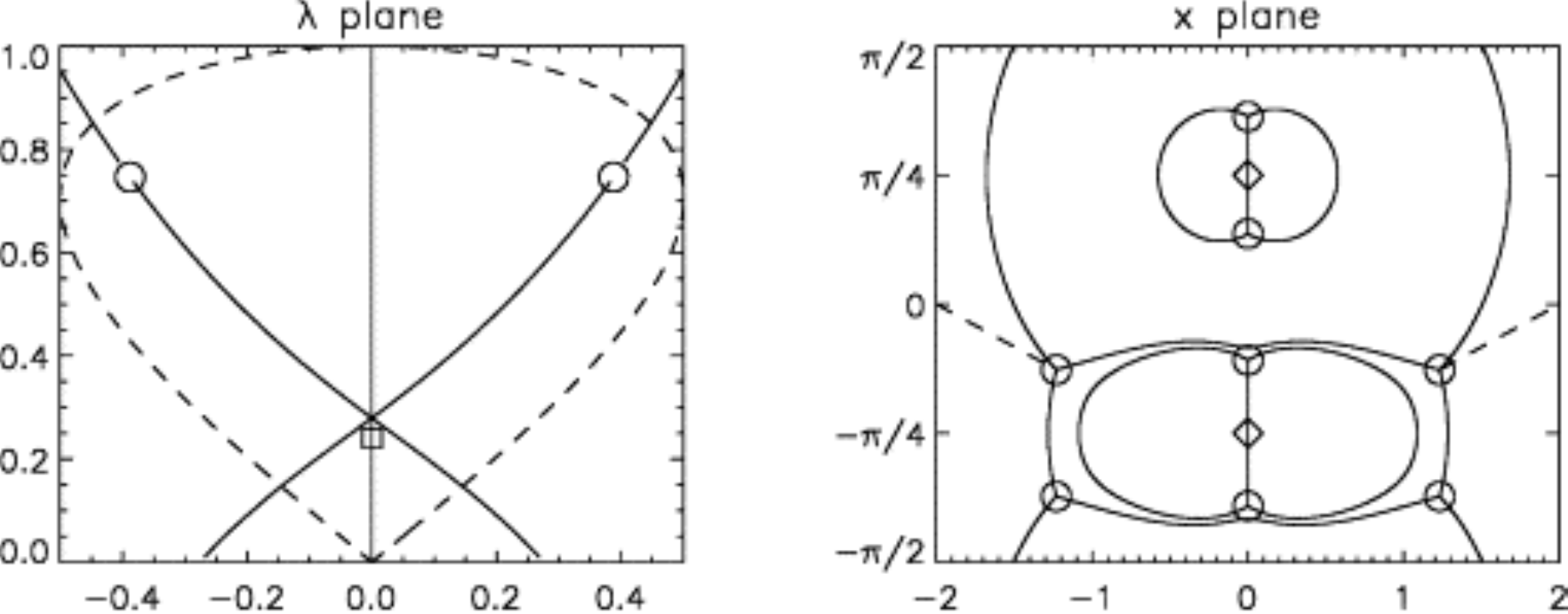}
\end{center}
\caption{On the left, the $\lambda$-plane, with a small square indicating a point  just below close  the bifurcation point $\lam_\otimes\approx0.28i$.  
On the right, for that particular value of $\lam$, the strip $\mathcal{S}$ of the $x$-plane and the Stokes lines in $\mathcal{S}$ that emanate from all turning points. Compare with case IV of Figure \ref{stokesvariation}.
This is Fig. 6 in \cite{mil}.}
\label{lambda-close-bifur-and-x-planes}
\end{figure}

For a value of $\lam$ on the imaginary axis coming very close to  the triple
intersection point $\lam_\otimes$ (but still imaginary and just below $\lam_{\otimes}$),  
the situation on the $x$-plane still looks similar to that we already examined 
(cf.  Figure \ref{lambda-close-bifur-and-x-planes}). The complex turning point $x_6(\lam)$ 
is moving up very close to the Stokes line that continues to connect $x_1(\lam)$ and 
$x_2(\lam)$. This proximity becomes more evident  as $\lam$ approaches the triple intersection 
point $\lam_\otimes$. The progressive paths emerging from $x_1(\lam)$, $x_2(\lam)$ continue 
to exist.

\begin{figure}[htbp]
\begin{center}
\includegraphics[width=110mm]{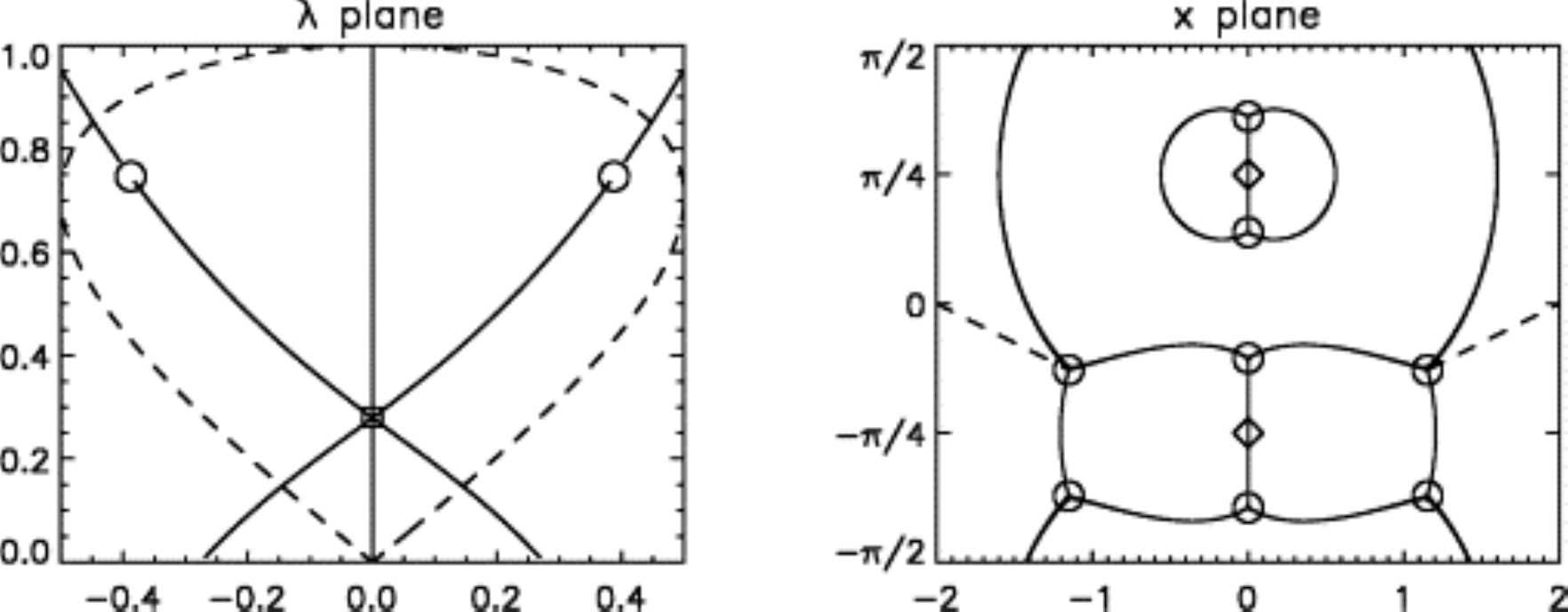}
\end{center}
\caption{On the left, the $\lambda$-plane in the extreme case when 
$\lambda$ is right at the bifurcation point $\lam_\otimes$.  On the right,
the $x$-plane with the Stokes lines in $\mathcal{S}$. Compare with 
Figure \ref{stokes2} or the central picture of Figure \ref{stokesvariation}. 
This is Fig. 7 in \cite{mil}.}
\label{lambda-bifur-and-x-planes}
\end{figure}

When $\lam=\lam_\otimes$ (see Figure \ref{lambda-bifur-and-x-planes}) 
there is no Stokes line connecting $x_1(\lam_\otimes)$ and $x_2(\lam_\otimes)$ 
without passing  through another turning point.  
Parts of the two paths previously connecting $x_6(\lam_\otimes)$ to 
$x_5(\lam_\otimes)$ have now partly  coalesced with the path  connecting 
$x_1(\lam_\otimes)$ and $x_2(\lam_\otimes)$.  Hence
$x_1(\lam_\otimes)$ and $x_2(\lam_\otimes)$ are now connected to 
$x_6(\lam_\otimes)$ via Stokes lines.  But the remaining Stokes line from 
$x_6(\lam_\otimes)$ continues to pass through the pole at $-i \pi/4$.  
The three conditions
$\re[I_{12}(\lam_\otimes)]=0$, $\re[I_{26}(\lam_\otimes)]=0$, $\re[I_{16}(\lam_\otimes)]=0$
are satisfied simultaneously and the corresponding Stokes lines meet at  
$x_6(\lam_\otimes)$ and at a $2\pi/3$ angle from each other. For this special value of 
$\lam$, there is a connected sequence of two Stokes lines and the progressive paths 
continue to exist. 

\begin{figure}[htbp]
\centering
\includegraphics[scale=.3]{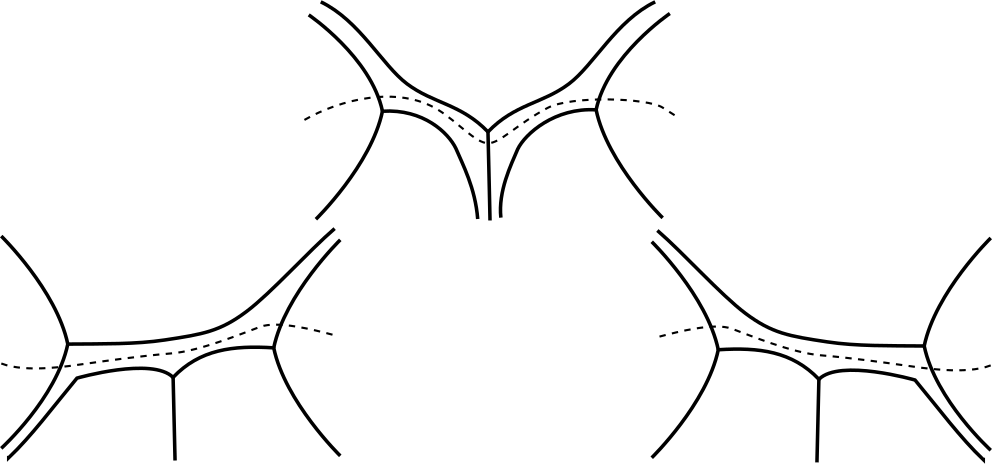}
\caption{Progressive paths (dashed curves) in the $x$-plane when 
$\lambda$ is slightly off the asymptotic spectrum; cf. Figure 5, cases I, III, V.}
\label{progro-paths}
\end{figure}

\begin{figure}[htbp]
\begin{center}
\includegraphics[width=120mm]{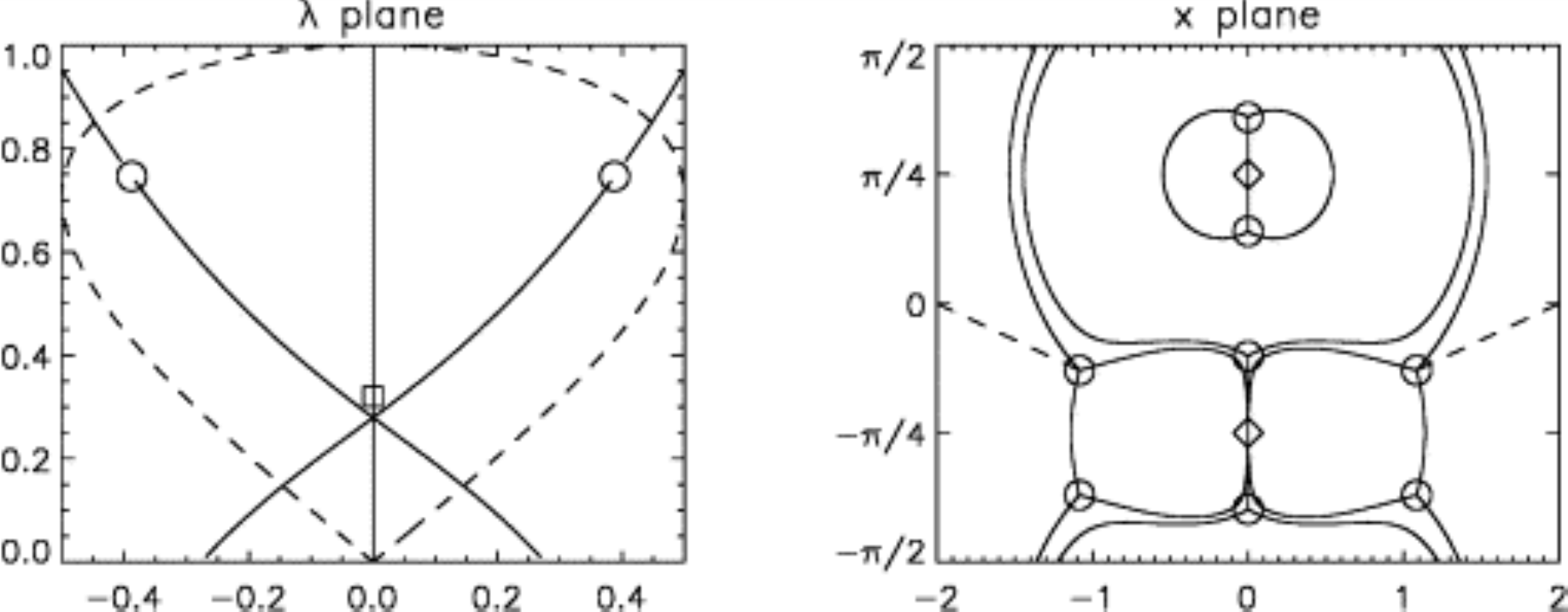}
\end{center}
\caption{The $\lambda$-plane on the left,  for a specific $\lambda\in\Lambda_{12}$ 
just above the bifurcation point $\lam_\otimes$.  On the right,  the $x$-plane with the 
Stokes lines in $\mathcal{S}$. Compare with case I of Figure \ref{stokesvariation}. 
Note that one can draw a progressive path from $-\infty$ to $+\infty$ which passes 
above $x_1$, just below $x_6$ and above $x_2$ (see Figure \ref{progro-paths}, top). 
Hence there is no eigenvalue semiclassically. This is Fig. 8 in \cite{mil}.}
\label{lambda-above-bifur-and-x-planes}
\end{figure}

Having completed the description of what happens when $\lam\in\tilde\Lambda_{12}$, there
are five more remaining directions in the complex $\lam$-plane all meeting
at the triple intersection point $\lam_\otimes$ (cf.  Figure \ref{cut-the-L}).

\begin{figure}[htbp]
\begin{center}
\includegraphics[width=120mm]{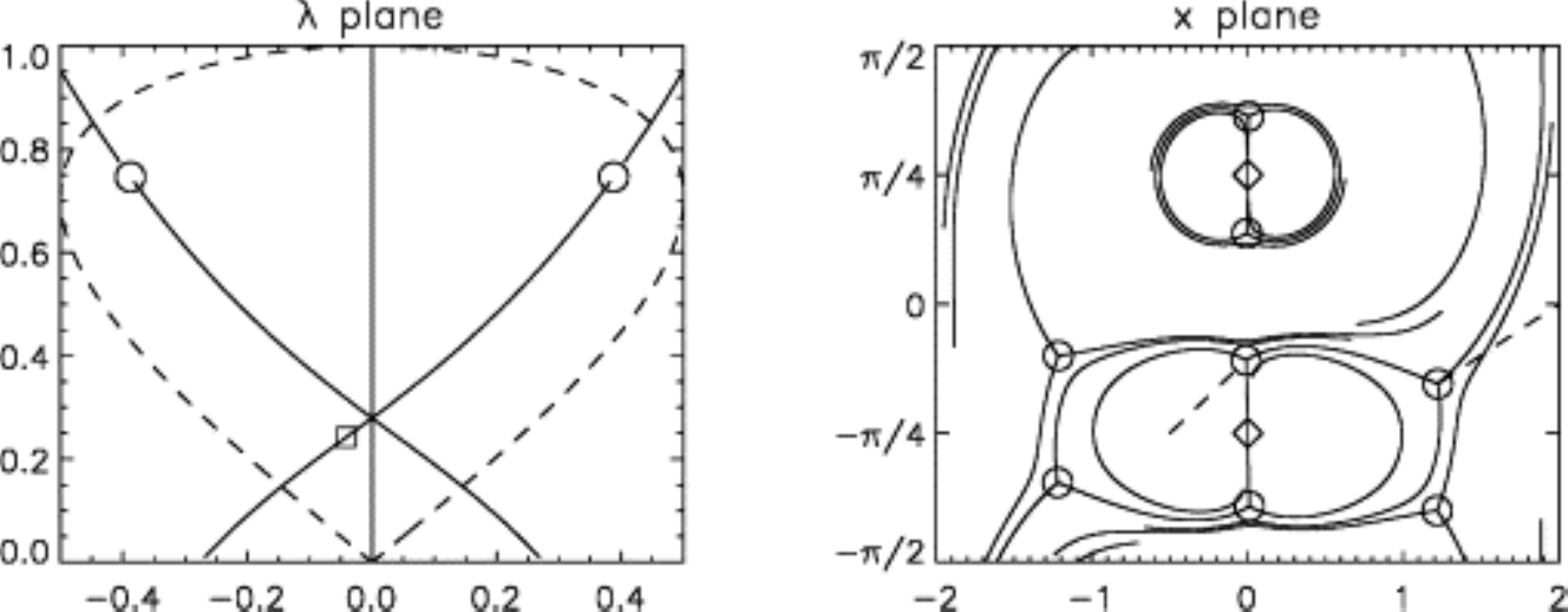}
\end{center}
\caption{The $\lambda$-plane on the left,  for a particular $\lambda\in\Lambda_{26}$ 
just below $\lam_\otimes\approx0.28i$.  On the right,  the $x$-plane with 
the Stokes lines in $\mathcal{S}$. Compare with case III of Figure \ref{stokesvariation}. 
Note that one can draw a progressive path from $-\infty$ to $+\infty$ which passes 
below $x_1$, just above $x_6$ and above $x_2$ (see Figure \ref{progro-paths}, bottom left). 
Hence there is no eigenvalue semiclassically. This is Fig. 9 in \cite{mil}.}
\label{no-wkb-1}
\end{figure}

Now suppose  we continue along the imaginary
axis above $\lam_\otimes$ (see Figure \ref{lambda-above-bifur-and-x-planes}). 
There is no more admissible connection between the  turning points 
$x_1(\lam)$, $x_2(\lam)$. There are paths emanating from $x_1(\lam)$, $x_2(\lam)$ 
(see Definition \ref{x-appropriate-admissible-curve}) meeting at the  double pole of 
$\sqrt{-V_0(\cdot,\lam)}$ so the holomorphic deformation 
is impossible, requiring us to restrict the 
WKB analysis described so far only to the section $\tilde \Lambda_{12}$. 
Note the existence of a progressive path from $-\infty$ to $+\infty$ which passes 
above $x_1$, just below $x_6$ and above $x_2$ (see dashed curve at the top of 
Figure \ref{progro-paths}). Hence there is no eigenvalue semiclassically.

\begin{figure}[htbp]
\begin{center}
\includegraphics[width=120mm]{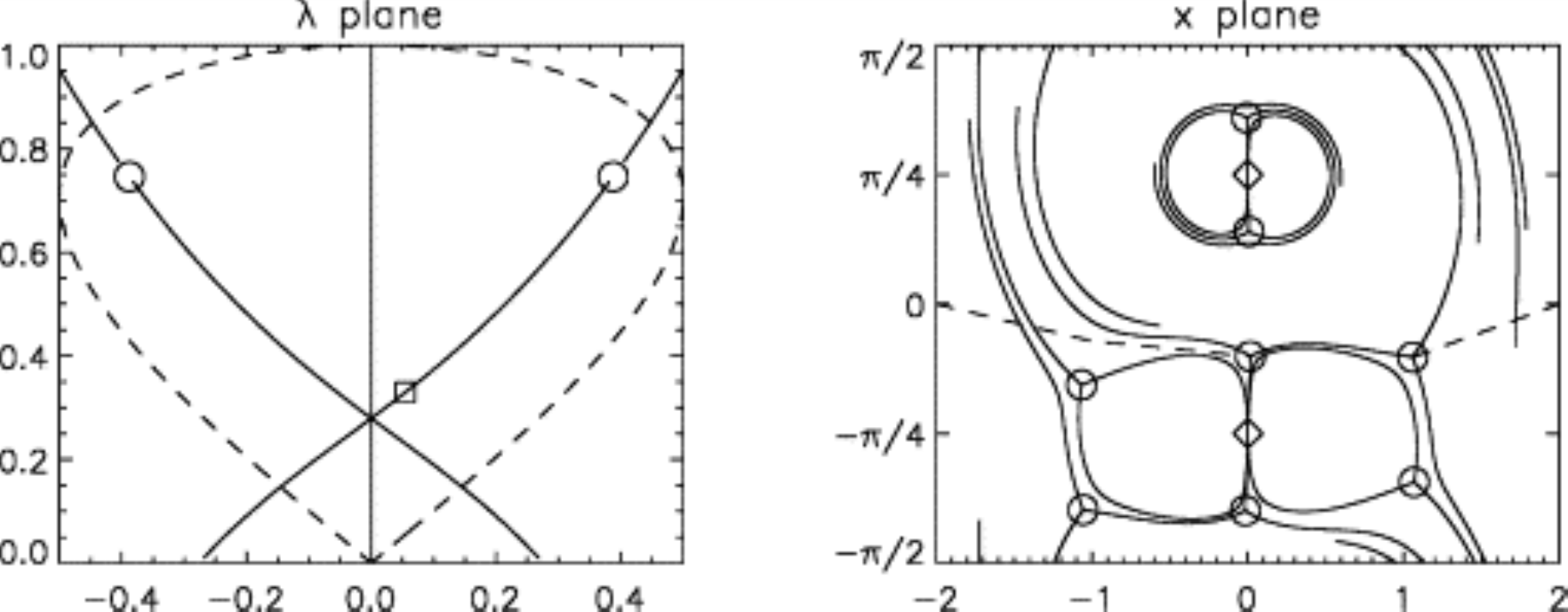}
\end{center}
\caption{The $\lambda$-plane on the left,  for a particular $\lambda\in\tilde{\Lambda}_{26}$ 
just above $\lam_\otimes\approx0.28i$.  On the right,  the $x$-plane with 
the Stokes lines in $\mathcal{S}$. Compare with case VI of Figure \ref{stokesvariation}. 
This is Fig. 10 in \cite{mil}.}
\label{no-wkb-2}
\end{figure}

Same observations for different pairs of turning points show that the WKB analysis 
only works for $I_{12}$, $I_{16}$ and $I_{26}$ (and indeed only for the sections 
$\tilde \Lambda_{12}$, $\tilde \Lambda_{26}$ and $\tilde \Lambda_{16}$).  One can 
examine the rest of the cases by checking Figures \ref{no-wkb-1} through \ref{no-wkb-5}.  

\begin{figure}[htbp]
\begin{center}
\includegraphics[width=120mm]{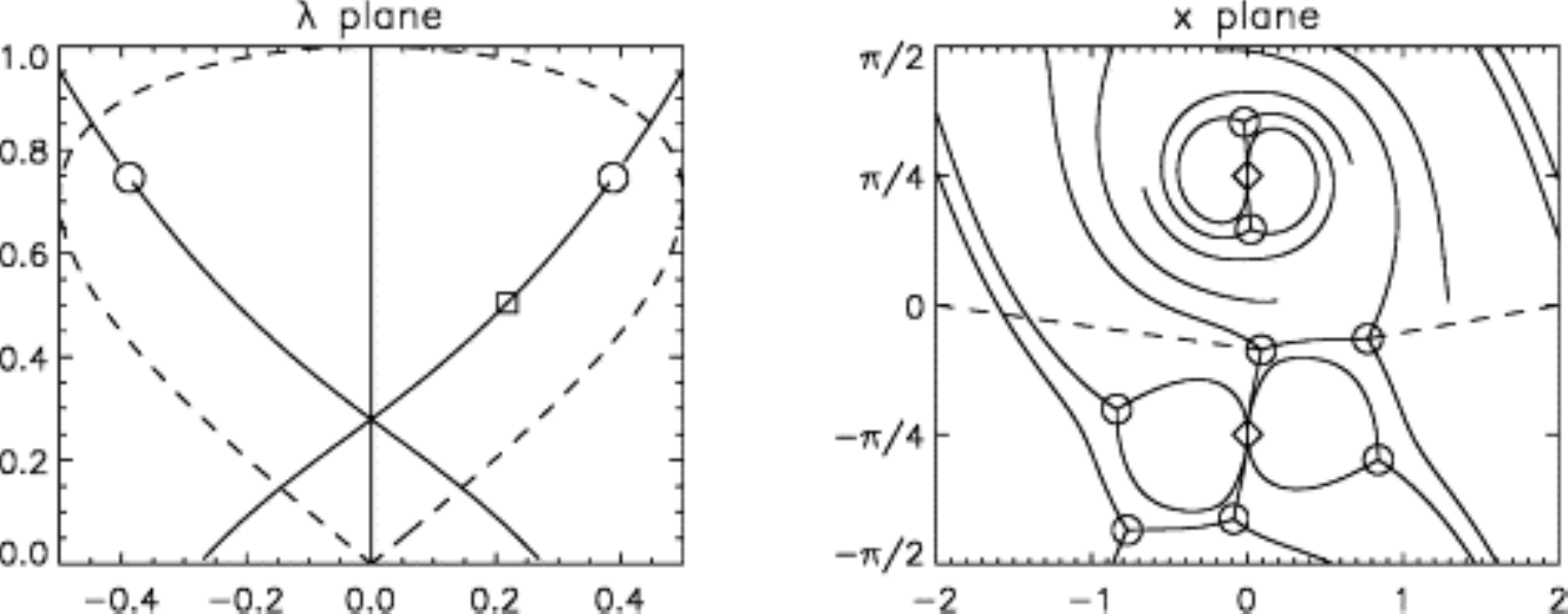}
\end{center}
\caption{The $\lambda$-plane on the left,  for a particular $\lambda\in\tilde{\Lambda}_{26}$ 
with $\Im\lambda$ greater than that shown in Figure \ref{no-wkb-2} but less than 
$\Im\lambda_D^{(1)}$.  On the right,  the $x$-plane with the Stokes lines 
in $\mathcal{S}$.  This is Fig. 11 in \cite{mil}.}
\label{no-wkb-3}
\end{figure}

Finally we give the quantization condition of eigenvalues along 
$\tilde \Lambda_{12}$, $\tilde \Lambda_{26}$ and $\tilde \Lambda_{16}$ away from the  
bifurcation point $\lam_\otimes$, and in a neighborhood of $\lam_\otimes$ using the 
results of the previous subsections.

\begin{figure}[htbp]
\begin{center}
\includegraphics[width=120mm]{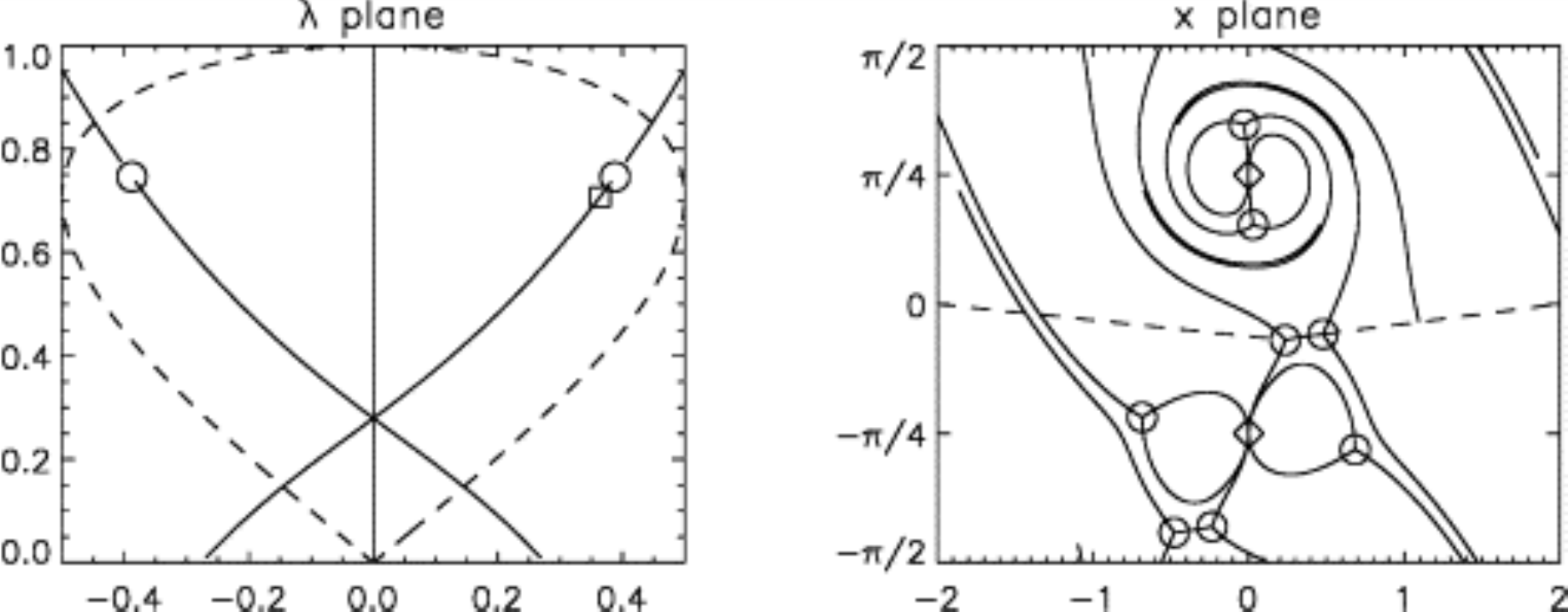}
\end{center}
\caption{On the left, the $\lambda$-plane for a specific $\lambda\in\tilde{\Lambda}_{26}$ 
just below $\lambda_D^{(1)}$.  On the right,  the $x$-plane with the Stokes 
lines in $\mathcal{S}$.  This is Fig. 12 in \cite{mil}.}
\label{no-wkb-4}
\end{figure}

\begin{theorem}
Let $(j,k)$ be $(1,2), (2,6)$ or $(1,6)$. For any $\epsilon$-independent $\delta>0$ and 
$\lambda_0\neq0$ on $\tilde \Lambda_{jk}$, fixed and independent of $\epsilon$, satisfying 
$|\lambda_0-\lambda_\otimes|>\delta$, there exist a complex neighborhood $U$ of $\lambda_0$ 
and a function $m_{jk}(\epsilon)$ with a uniform asymptotic behavior 
$m_{jk}(\epsilon)=-1+\CO(\epsilon)$ in $U$, such that $\lambda\in U$ is an eigenvalue of 
$\mathfrak{D}_\epsilon$  if and only if
$$
m_{jk}(\epsilon)e^{2z(\beta,\lambda,\alpha)/\epsilon}=1.
$$
\end{theorem}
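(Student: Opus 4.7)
The plan is to deduce this quantization result directly from Theorem \ref{BS} once we have verified its hypotheses for the specific potential $A(x)=S(x)=\sech(2x)$ on each of the three branches $\tilde\Lambda_{12}$, $\tilde\Lambda_{26}$, $\tilde\Lambda_{16}$. First I would identify the relevant pair of simple turning points: for $\lambda_0\in\tilde\Lambda_{12}$ away from both $0$ and $\lambda_\otimes$, set $\alpha_0=x_1(\lambda_0)$ and $\beta_0=x_2(\lambda_0)$; similarly $\{\alpha_0,\beta_0\}=\{x_2(\lambda_0),x_6(\lambda_0)\}$ for $\tilde\Lambda_{26}$ and $\{x_1(\lambda_0),x_6(\lambda_0)\}$ for $\tilde\Lambda_{16}$. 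These are simple zeros of $V_0(\cdot,\lambda)$ by the generic behavior established in \S\ref{geometry} (the only coalescences in the fundamental strip occur at the four $\lambda_D^{(j)}$, which are excluded by the condition $|\lambda_0-\lambda_\otimes|>\delta$ together with $\lambda_0\ne 0$; note that the branches stop at $\lambda_D^{(1,2)}$ anyway). By analyticity, the turning points $\alpha(\lambda),\beta(\lambda)$ extend holomorphically to a neighborhood $U$ of $\lambda_0$.

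Next I would verify hypothesis \textbf{(H1)} of Theorem \ref{BS}, namely that the Stokes arc $C^0(\lambda)$ connecting $\alpha(\lambda)$ and $\beta(\lambda)$ contains no other turning point. This is precisely the content of the geometric analysis in \S\ref{geometry} based on Miller's numerical Stokes pictures (Figures \ref{lambda-02i-and-x-planes}, \ref{no-wkb-2}, \ref{no-wkb-3}): along each of the three arcs, a single Stokes line joins the two designated turning points inside the fundamental strip $\mathcal S$, the other turning points stay uniformly away from this line, and progressive paths $C^\pm$ to $\mp\infty$ along the real axis exist. By continuity of the Stokes configuration in $\lambda$, the admissible contour $C(\lambda)=C^-(\lambda)\cup C^0(\lambda)\cup C^+(\lambda)$ persists on a whole neighborhood $U$ of $\lambda_0$. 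With \textbf{(H1)} in hand, Theorem \ref{BS} delivers a complex neighborhood $U$ and the quantization criterion
\be\nn
m(\epsilon)\,e^{2z(\beta,\lambda,\alpha)/\epsilon}=1,\qquad m(\epsilon)=\delta(\alpha_0,\beta_0)+\CO(\epsilon),
\ee
uniformly for $\lambda\in U$.

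It then remains to compute the index $\delta(\alpha_0,\beta_0)$ defined in \eqref{dab}. By Remark \ref{which-is-which}, in the labeling fixed at $\lambda=0.2i$, the turning points $x_1,x_2,x_6,x_8$ are zeros of $g_-$ while $x_3,x_4,x_5,x_7$ are zeros of $g_+$, and these assignments are preserved by the analytic continuation in $\lambda$ used to define $x_j(\lambda)$ since $g_\pm$ vary analytically and the $x_j(\lambda)$ remain simple. In each of the three cases $(j,k)\in\{(1,2),(2,6),(1,6)\}$ both turning points are thus zeros of the same factor $g_-$, so by \eqref{dab} we get $\delta(\alpha_0,\beta_0)=-1$. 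Defining $m_{jk}(\epsilon):=m(\epsilon)$ yields $m_{jk}(\epsilon)=-1+\CO(\epsilon)$ uniformly on $U$, which is the claim.

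The main conceptual obstacle is not the algebra but the verification of \textbf{(H1)} together with the existence of the real-axis progressive legs $C^\pm$ throughout an $\epsilon$-independent neighborhood $U$: this relies on the rigorous/numerical Stokes-line analysis of \S\ref{geometry} and on a compactness argument showing that the distance from $C^0(\lambda)$ to the remaining turning points (and to the poles $\pm i\pi/4$) stays bounded below for $\lambda$ in a compact subset of $\tilde\Lambda_{jk}\setminus\{0,\lambda_\otimes,\lambda_D^{(1,2)}\}$. A secondary subtlety is bookkeeping of the branch of $\sqrt{-V_0}$ and the sign assignment that enters $\delta(\alpha,\beta)$ consistently across the three arcs, for which the convention fixed at $\lambda=0.2i$ and propagated along the L-shaped path of Figure \ref{analytic-cont-lam} is decisive.
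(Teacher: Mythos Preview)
Your proposal is correct and follows exactly the paper's approach: the theorem is stated in \S\ref{geometry} as an immediate consequence of Theorem \ref{BS} once the geometric analysis of the Stokes configuration (Figures \ref{lambda-02i-and-x-planes}--\ref{no-wkb-4}) has verified \textbf{(H1)} on each of the three arcs, and the value $m_{jk}(\epsilon)=-1+\CO(\epsilon)$ comes from $\delta(\alpha_0,\beta_0)=-1$ via Remark \ref{which-is-which}, since $x_1,x_2,x_6$ are all zeros of $g_-$. One small clarification: the endpoints $\lambda_D^{(1)},\lambda_D^{(2)}$ are not excluded by the hypotheses $\lambda_0\ne 0$ and $|\lambda_0-\lambda_\otimes|>\delta$, but the Remark following Theorem \ref{BS} explicitly extends the conclusion to those double-turning-point limits, so this causes no gap.
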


\begin{figure}[htbp]
\begin{center}
\includegraphics[width=120mm]{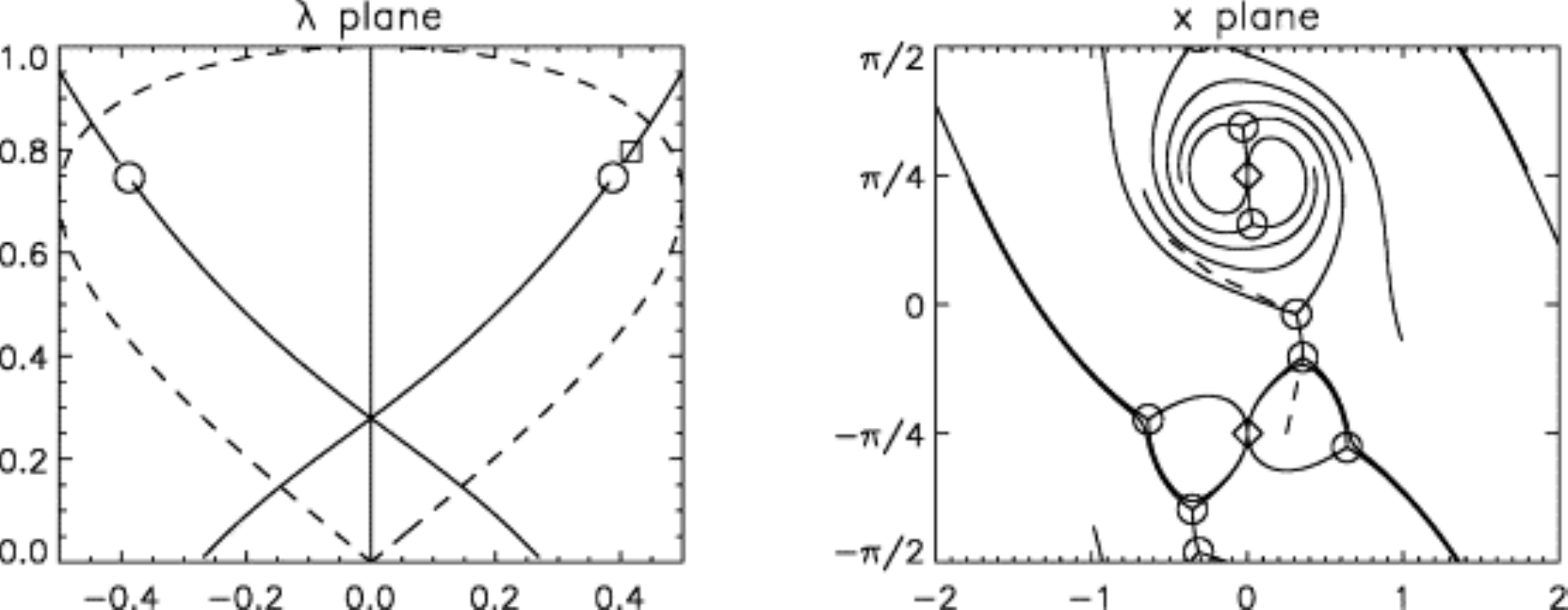}
\end{center}
\caption{On the left, the $\lambda$-plane for a specific $\lambda\in\Lambda_{26}$ 
just above $\lambda_D^{(1)}$.  On the right,  the $x$-plane with the Stokes lines in 
$\mathcal{S}$. Note that one can draw a progressive path that is a deformation of the 
real line in such a way that it crosses the Stokes line from $x_2$ to $x_6$ transversally. 
Hence there is no eigenvalue semiclassically. This is Fig. 13 in \cite{mil}.}
\label{no-wkb-5}
\end{figure}

\begin{theorem}
There exists a neighborhood $U$ of $\lambda_\otimes$ and  functions $m_\alpha(\epsilon)$, 
$m_\beta(\epsilon)$ with uniform asymptotic behaviors $m_\alpha(\epsilon)=-1+\CO(\epsilon)$, 
$m_\beta(\epsilon)=-1+\CO(\epsilon)$ in $U$, such that 
$\lambda\in U$ is an eigenvalue of  $\mathfrak{D}_\epsilon$  if and only if
\begin{equation}
\label{qc3}
m_\alpha(\epsilon)e^{2z(\gamma,\lambda,\alpha)/\epsilon}+
m_\beta(\epsilon)e^{2z(\gamma,\lambda,\beta)/\epsilon}=
1.
\end{equation}
\end{theorem}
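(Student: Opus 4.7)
The plan is to derive this statement as a direct specialization of the general bifurcation-point quantization theorem proved in \S\ref{near-the-bf} to the explicit Stokes geometry displayed in Figure \ref{lambda-bifur-and-x-planes}. First I would identify the three simple turning points involved by setting $\alpha(\lambda)=x_1(\lambda)$, $\beta(\lambda)=x_2(\lambda)$, and $\gamma(\lambda)=x_6(\lambda)$. From the geometric description in \S\ref{geometry}, at $\lambda=\lambda_\otimes$ the Stokes segment $C^0(\lambda_\otimes)$ decomposes into a Stokes line $l_0^{\alpha}$ joining $x_1$ to $x_6$ and a Stokes line $l_0^{\beta}$ joining $x_2$ to $x_6$, while progressive paths $C^-$ from $-\infty$ to $x_1$ and $C^+$ from $x_2$ to $+\infty$ persist throughout a neighborhood of $\lambda_\otimes$. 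Hence the abstract setup of \S\ref{near-the-bf} is applicable verbatim on a suitable complex neighborhood $U$ of $\lambda_\otimes$.

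Once the geometric setup is verified, the theorem proved in \S\ref{near-the-bf} immediately yields the quantization condition \eqref{qc3}, together with the explicit Wronskian ratio expressions for $m_\alpha(\epsilon)$ and $m_\beta(\epsilon)$. The remaining step is to evaluate their leading-order asymptotics by inspecting which of the functions $g_+$, $g_-$ from \eqref{gigi} vanish at the three turning points. Invoking Remark \ref{which-is-which} for the potential \eqref{our-pot}, we have $g_-(x_1,\lambda)=g_-(x_2,\lambda)=g_-(x_6,\lambda)=0$, so all three turning points are zeros of the \emph{same} branch. In the notation \eqref{dab} this gives $\delta(\alpha,\gamma)=\delta(\beta,\gamma)=-1$, and the sign rule proved in the general theorem then delivers $m_\alpha(\epsilon)=-1+\CO(\epsilon)$ and $m_\beta(\epsilon)=-1+\CO(\epsilon)$, as asserted. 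The uniformity in $U$ follows because the Wronskian ratios defining $m_\alpha$, $m_\beta$ depend holomorphically on $\lambda$ through the analytic dependence of the turning points and of the exact WKB symbols $\mathbf{w}^\pm$ on the parameter $\lambda$.

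The main obstacle is not analytic but geometric: one must certify that the Stokes configuration drawn in Figure \ref{lambda-bifur-and-x-planes} is really the configuration for which the general theorem of \S\ref{near-the-bf} was designed. Concretely, one needs to verify that (i) the third Stokes line emanating from $x_6(\lambda_\otimes)$, which according to \S\ref{geometry} continues toward the pole at $-i\pi/4$, does not obstruct the existence of the progressive paths $C^\pm$ connecting $x_1$ and $x_2$ to $\mp\infty$; (ii) as $\lambda$ varies within $U$, the Stokes geometry deforms continuously through precisely the six patterns depicted in Figure \ref{stokesvariation}, so that the Wronskian asymptotics are uniform and the distinguished turning points remain analytic and simple; and (iii) hypothesis \textbf{(H1)}-type non-degeneracy holds on each sub-arc emanating from $\lambda_\otimes$. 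These geometric facts are not proved analytically here but are taken from the careful numerical study of \cite{mil} reviewed in \S\ref{geometry}. Granting them, the proof reduces to the bookkeeping of signs carried out above.
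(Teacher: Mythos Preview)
Your proposal is correct and follows precisely the route the paper intends: the theorem is stated in \S\ref{geometry} without a separate proof because it is an immediate specialization of the general bifurcation-point theorem of \S\ref{near-the-bf}, applied with $\alpha=x_1$, $\beta=x_2$, $\gamma=x_6$ and the sign determination supplied by Remark~\ref{which-is-which}. Your identification that all three turning points are zeros of $g_-$, hence $m_\alpha(\epsilon)=m_\beta(\epsilon)=-1+\CO(\epsilon)$ via the ``same/different'' rule (equivalently $\delta(\alpha,\gamma)=\delta(\beta,\gamma)=-1$), is exactly the computation the paper leaves implicit.
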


\begin{remark}
\label{no-evs-remark-kamvi}
We claim that no eigenvalue can appear  away  from the asymptotic spectral arcs. 
Although  we do not present  a rigorous proof of this fact, it 
follows from the numerical investigation of the geometry of turning points and Stokes lines
for different values of $\lambda$.
For example, Figure \ref{lambda-above-bifur-and-x-planes}, Figure \ref{no-wkb-1} and 
Figure \ref{no-wkb-5} show that for  $\lambda$ slightly off the asymptotic spectral arcs,
a progressive path from $x=-\infty$ to $x=+\infty$ does exist,
which is in fact real near $x=\pm \infty$.  It follows easily from standard WKB theory 
that such $\lambda$ cannot be eigenvalues. Indeed, in view of part (ii) of Theorem 
\ref{formal-to-rigorous}, and the fact that $S$ is real near $x=\pm \infty$, 
the important factor in formula (\ref{wkb-sols}) for ${\bf u}^+$ that determines the 
behavior at $x=\pm\infty$, is the exponential $\exp(z/\epsilon)$. If $\Re z$ is monotonic 
along a path from $-\infty$ to $+\infty$ then it is impossible that the solution decays 
at both ends.
\end{remark}

\begin{remark}
\label{finite-gap-remark}
The question arises whether there is a possibility of a general proof of the fact 
that no eigenvalue can appear  away  from the asymptotic spectral arcs. A consequence 
of this would be a general proof that the limiting spectrum (the minimal set such that 
every actual eigenvalue lies $\epsilon$-close to this set) would be what we have called 
the asymptotic spectrum (which we defined in terms of the existence of an admissible path). 
In turn that would prove that the limiting spectrum is a union of analytic arcs.

A possible strategy is suggested by an analogous problem appearing in the semiclassical 
analysis of the inverse scattering problem in \cite{kmm} where a crucial ingredient of the 
proof is the existence of appropriate \say{steepest descent paths}. The existence proof 
in \cite{kr} requires the study of an associated max-min variational problem.

In our own problem here, we can define a generalized admissible contour as a contour 
that is either admissible in the sense of Definition \ref{general-admissible-curve} 
(extended to allow for double turning points as degenerate cases) or a progressive 
path from $x=-\infty$ to $x=+\infty$. The inequalities defining progressiveness and the 
equations defining Stokes lines can be interpreted as variational \say{Euler-Lagrange} 
conditions for a max-min problem. We expect that a proof along the lines of \cite{kr} 
will show the existence of the desired contour which is either a progressive path from 
$x=-\infty$ to $x=+\infty$ (hence no eigenvalue) or an admissible contour in the sense 
of Definition \ref{general-admissible-curve} (hence  an asymptotic spectral arc).
\end{remark}

\section{Eigenvalues That Lie Near Zero} 
\label{evs-near-zero}

\noindent
\underline{Note}: In this section  $A(x)=S(x)=\sech(2x)$ for simplicity. But the theory applies very generally.

\subsection{Preparations}
\label{preparations-near-0}

We now turn our attention to the semiclassical behavior of eigenvalues (and their 
corresponding norming constants) of the Zakharov-Shabat operator that 
lie near zero. We will focus on  a neighborhood $\mathscr{D}$ of the real axis (to be appropriately specified 
later on) and start with the problem
\be\nn
\mathfrak{D}_\epsilon\mathbf{u}(x)=\lambda\mathbf{u}(x),\quad
x\in\mathscr{D}
\ee
where
\begin{itemize}
\item
the Dirac (or Zakharov-Shabat) operator is defined by
\begin{equation}\nn
\mathfrak{D}_\epsilon=
\begin{bmatrix}
-\frac{\epsilon}{i}\frac{d}{dx} & -iA(x)\exp\{iA(x)/\epsilon\}\\
-iA(x)\exp\{-iA(x)/\epsilon\} & \frac{\epsilon}{i}\frac{d}{dx}
\end{bmatrix}
\end{equation}
\item
$\epsilon>0$ is the semiclassical parameter,
\item
$\mathbf{u}=[u_{1}\hspace{3pt}u_{2}]^T$ is a function from $\mathscr{D}$ to $\mathbb{C}^2$ and 
\item
$\lambda\in\C$ represents the spectral parameter. 
\end{itemize}

First,  we apply the change of variables [cf. equation (4) in \cite{mil}]
\be
\label{w-plus-minus}
w_{\pm}=
\frac{u_{2}\exp\{iA/(2\epsilon)\}\pm u_{1}\exp\{-iA/(2\epsilon)\}}
{\sqrt{A\pm i(\lambda+A'/2)}}.
\ee
Then for the lower sign (minus), the equation reads
\begin{multline}\nn
w_{-}''=
\Bigg[
-\epsilon^{-2}\Big\{A(x)^2+[\lambda+\tfrac{A'(x)}{2}]^2\Big\}\\
+
\frac{3}{4}\bigg\{ \frac{A'(x)-i\frac{A''(x)}{2}}{A(x)-i[\lam+\tfrac{A'(x)}{2}]} \bigg\}^2
-
\frac{1}{2}\frac{A''(x)-i\frac{A'''(x)}{2}}{A(x)-i[\lam+\tfrac{A'(x)}{2}]}
\Bigg]
w_{-}
\end{multline}
where we dropped the dependence of $w_{-}$ on $(x,\lambda,\epsilon)$ for notational 
simplicity and let prime denote differentiation with respect to $x$. Hence, dropping the subscript, 
and considering only the upper half $\lambda$-plane due to symmetry, we are led to the following 
proposition.
\begin{proposition}
Finding the discrete spectrum of $\mathfrak{D}_{\epsilon}$ is equivalent to finding the 
values of  $\lambda\in\mathbb{H}^+$ for which equation 
\be
\label{eq-for-spec-olver}
\frac{d^2w}{dx^2}=[\epsilon^{-2}V_0(x,\lambda)+g(x,\lambda)]w
\ee
has an $L^{2}(\mathscr{D})$ solution.  In this equation, the leading order potential 
$V_0(x,\lambda)$ satisfies
\be
\label{olver-f}
V_0(x,\lambda)=
-\Big\{A(x)^2+[\lambda+\tfrac{A'(x)}{2}]^2\Big\}
\ee
while the correction $g(x,\lambda)$ is given by 
\be
\label{olver-g}
g(x,\lambda)=
\frac{3}{4}\bigg\{ \frac{A'(x)-i\frac{A''(x)}{2}}{A(x)-i[\lam+\tfrac{A'(x)}{2}]} \bigg\}^2
-
\frac{1}{2}\frac{A''(x)-i\frac{A'''(x)}{2}}{A(x)-i[\lam+\tfrac{A'(x)}{2}]}.
\ee
\end{proposition}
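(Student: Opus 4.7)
The plan is to reduce the Dirac first-order system to a scalar second-order ODE via the two-step procedure hinted at by the change of variables \eqref{w-plus-minus}. The key observation is that Section 2 already provides a natural intermediate reduction: since here $S(x)=A(x)$, the transformations leading to \eqref{miller-final} give the system $\tfrac{\epsilon}{i}v_1'=g_+v_2$, $\tfrac{\epsilon}{i}v_2'=-g_-v_1$ with $g_\pm(x,\lambda)=\mp[\lambda+A'(x)/2\pm iA(x)]$. Once we identify $w_-$ with (a constant multiple of) $v_1$ divided by a suitable square root, everything becomes a direct computation.

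First I would solve the first equation for $v_2=\tfrac{\epsilon}{ig_+}v_1'$ and substitute into the second, obtaining $v_1''-(g_+'/g_+)v_1'=(g_+g_-/\epsilon^2)v_1$. Since $g_+g_-=V_0$ by definition \eqref{turn-pt}, this matches the desired leading-order potential $V_0=-\{A^2+(\lambda+A'/2)^2\}$. Next I would perform the standard Liouville-type substitution $v_1=g_+^{1/2}\tilde w$ to eliminate the first-derivative term, which is a routine calculation and produces
\begin{equation*}
\tilde w''=\left[\frac{V_0}{\epsilon^2}+\frac{3}{4}\left(\frac{g_+'}{g_+}\right)^{\!2}-\frac{1}{2}\frac{g_+''}{g_+}\right]\tilde w.
\end{equation*}
To match this with the claimed form of $g(x,\lambda)$, write $f(x,\lambda):=A(x)-i[\lambda+A'(x)/2]$, so that $g_+=-if$; then $g_+'/g_+=f'/f=(A'-iA''/2)/f$ and $g_+''/g_+=f''/f=(A''-iA'''/2)/f$, which gives exactly the expression in \eqref{olver-g}.

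The remaining task is to identify $\tilde w$ with $w_-$ from \eqref{w-plus-minus} up to a nonzero constant (which does not affect the ODE). Tracking the chain of substitutions in Section 2 one has $\tilde v_1=u_1e^{-iA/(2\epsilon)}$, $\tilde v_2=u_2e^{iA/(2\epsilon)}$ and $v_1=(\tilde v_1-\tilde v_2)/2$, so the numerator of $w_-$ is $\tilde v_2-\tilde v_1=-2v_1$, while the denominator $\sqrt{A-i(\lambda+A'/2)}=\sqrt{f}$ differs from $g_+^{1/2}=(-i)^{1/2}f^{1/2}$ only by a $\lambda$- and $x$-independent factor. Hence $w_-$ is a constant multiple of $v_1/g_+^{1/2}=\tilde w$, so it satisfies \eqref{eq-for-spec-olver} with the stated $V_0$ and $g$.

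The one point that needs care, and is probably the main obstacle to a fully rigorous statement, is the claim that discrete eigenvalues of $\mathfrak{D}_\epsilon$ correspond bijectively to $L^2(\mathscr D)$ solutions of the scalar equation. The transformation \eqref{w-plus-minus} is singular where $A(x)\pm i[\lambda+A'(x)/2]$ vanishes, i.e.\ at turning points of $V_0$, and the scalar $g_+^{1/2}$ prefactor is likewise multi-valued there. For $\lambda$ in $\mathbb H^+$ away from the edges of the asymptotic spectrum these zeros lie off $\mathbb R$, so the transformation is smooth and non-vanishing on a tubular neighborhood $\mathscr D$ of the real axis; moreover the prefactors are bounded away from $0$ and $\infty$ as $|x|\to\infty$ (since $A,A',A''\to0$), so $L^2$-membership of $\mathbf u$ on $\mathbb R$ is equivalent to $L^2$-membership of $w_-$ on $\mathscr D\cap\mathbb R$. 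This justifies the equivalence and completes the reduction.
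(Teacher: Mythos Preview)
Your proposal is correct and follows essentially the same route as the paper: a direct computation reducing the Dirac system to a scalar second-order equation via the substitution \eqref{w-plus-minus}. The paper simply states the transformation and the resulting equation without intermediate steps; you organize the same calculation by routing through the system \eqref{miller-final} from Section~\ref{gen-strat} and then applying the standard Liouville substitution $v_1=g_+^{1/2}\tilde w$, which is a cleaner way to see why the correction term $g(x,\lambda)$ takes the form $\tfrac34(g_+'/g_+)^2-\tfrac12 g_+''/g_+$ and hence matches \eqref{olver-g}. Your final paragraph on the $L^2$ equivalence is actually more explicit than what the paper provides.
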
 
 
In this section, we are only interested in the behavior of the eigenvalues that lie near zero. 
So, let us first fix some set $\mathscr{O}$ on the spectral plane that satisfies the following 
condition.

\vspace{0.5cm}

\begin{minipage}{0.1\textwidth}
{\bf (H2):}
\end{minipage}
\hspace{0.01\textwidth}
\begin{minipage}{0.8\textwidth}
The set $\mathscr{O}$ is an open and simply connected set, proper subset of 
${\mathcal R}_0\cap\mathbb{H}^+$ [recall the formula for the mumerical range (\ref{r0}) 
and  of course here $S(x)=A(x)=\sech(2x)$] 
and located  
in such a way that its intersection with $\tilde{\Lambda}_{12}$ 
(see Figure \ref{cut-the-L}) is a single interval 
and  has no common points with either $\tilde{\Lambda}_{16}$ 
or $\tilde{\Lambda}_{26}$.
\end{minipage}

\vspace{0.5cm}
\noindent
From now on, we shall only consider eigenvalues $\lambda$ that lie in a set $\mathscr{O}$ 
satisfying {\bf (H2)}.

\subsection{Solutions in a neighborhood of only one simple turning point}
\label{sols-1-turn-pt}

In this subsection we shall confine the study of the  differential equation (\ref{eq-for-spec-olver}) 
to some domain in which it has only one simple turning point. Let us be more precise. 
\begin{assumption}
\label{assume-one-tp}
For $\epsilon>0$, consider the differential equation
\be\nn
\frac{d^2w}{dx^2}=[\epsilon^{-2}V_0(x,\lambda)+g(x,\lambda)]w,
\quad
(x,\lambda)\in\mathscr{D}\times\mathscr{O}
\ee
where $V_0(x,\lambda)$ is given by (\ref{olver-f}) and $g(x,\lambda)$ by (\ref{olver-g}) 
for $A(x)=\sech(2x)$. We assume that the following hold true.
\begin{itemize}
\item
We denote by $c=c(\lambda)$, one of the following (simple) turning points: $x_1$, $x_2$, 
$x_6$ or $x_8$ (cf. \S\ref{geometry}) which are (simple) zeros of $g_-$ [see (\ref{gigi}), 
recalling that $V_0=g_-g_+$]. 
\item
The set $\mathscr{D}\subset\C$ (in the $x$-plane) is an open and simply connected neighborhood 
of the real axis so that $c$ is an interior point and the only transition point (of the equation) in it.
\item
The set $\mathscr{O}$ satisfies condition $\mathbf{(H2})$.
\end{itemize}
\end{assumption}
\begin{remark}
Clearly, we can always choose $\mathscr{D}$ and $\mathscr{O}$ so that the assumption 
above holds. 
\end{remark}
\begin{remark}
From the assumptions above, it follows that
\begin{itemize}
\item
the function $(x-c)^{-1}V_0(x,\lambda)$ is holomorphic and non-vanishing throughout 
$\mathscr{D}$ (including $c$).
\item
the function $g(x,\lambda)$ is holomorphic in $\mathscr{D}$. 
\end{itemize}
\end{remark}

Having placed the assumptions on our equation, we continue as follows. We introduce a new 
variable $\zeta=\zeta(\cdot,\lambda):\mathscr{D}\rightarrow\C$ defined by
\be
\label{zeta-olver}
\zeta=
\bigg(
\int_{\gamma(c\rightsquigarrow x)}V_0^{1/2}(t,\lambda)dt
\bigg)
^{2/3}.
\ee
In order to choose the branch, we proceed as follows. We expand 
$(t-c)^{-1}V_0(t,\lambda)$ in a Taylor-series in the neighborhood of $c$, 
namely
\be\nn
(t-c)^{-1}V_0(t,\lambda)=f_{0}+f_{1}(t-c)+f_{2}(t-c)^{2}+\dots
\ee
for a sequence $\{f_n\}_{n\in\N_0}$ of complex numbers (depending on $\lambda$) 
where $f_0\neq0$. Substituting for $V_0$ in (\ref{zeta-olver}) and integrating term 
by term, we find that $\zeta$ has a Taylor-series expansion (in the neighborhood of $c$) 
that begins with
\be
\label{zeta-taylor-expan}
\zeta=
\Big(\frac{2}{3}\Big)^{2/3}f_0^{1/3}(x-c)
\Big[
1+\frac{1}{5}\frac{f_{1}}{f_0}(x-c)+\dots
\Big]
\ee
We select any branch of the coefficient $f_0^{1/3}$ that is convenient to us, and this fixes the 
relation between $\zeta$ and $x$ in the neighborhood of $c$; elsewhere, $\zeta$ is defined by 
continuity.

Next, we define $\Delta=\zeta(\mathscr{D},\lambda)$. \textit{We suppose that $\mathscr{D}$ 
is restricted in such a way that the mapping $\zeta(\cdot,\lambda):\mathscr{D}\rightarrow\Delta$ 
is one-to-one [it is already onto; of course this implies that $c$ is the only point in 
$\mathscr{D}$ at which $\zeta(\cdot,\lambda)$ vanishes]}. Hence, it follows that 
$\zeta(\cdot,\lambda)\in\mathscr{H}(\mathscr{D})$ and therefore the 
$\zeta(\cdot,\lambda)$-mapping from $\mathscr{D}$ on to $\Delta$ is conformal. 
Also, we define the following.
\begin{definition}
Let $x(\cdot,\lambda)$ be the inverse of $\zeta(\cdot,\lambda)$ and $S_j$, $j=0,1,2$ be the 
sectors defined by 
\be\nn
(2j-1)\frac{\pi}{3}\leq\ph\zeta\leq(2j+1)\frac{\pi}{3}.
\ee
The sets 
\be\nn
\mathscr{D}_j=x(\Delta\cap S_j,\lambda),\quad\text{for}\quad j=0,1,2
\ee
shall be called the \textbf{principal regions} associated with the transition point $c$.
\end{definition}

Furthermore, we define 
\be
\label{relation-V0-fhat}
\hat{f}(x,\lambda)\equiv\frac{4}{9}\frac{V_0(x,\lambda)}{\zeta(x,\lambda)}=
\Big[\frac{d\zeta}{dx}(x,\lambda)\Big]^2
\ee
where it is clear that $\hat{f}(\cdot,\lambda)$ is holomorphic and non-vanishing in $\mathscr{D}$.
Subsequently, we introduce the notion of a balancing function. 
\begin{definition}
Consider a function $\Omega(t)$ which is any conveniently chosen positive function 
of the complex variable $t$ that is continuous and satisfies the asymptotics
\be\nn
\Omega(t)=\asympt(t^{1/2})\quad\text{uniformly in the neighborhood of}\quad t=\infty.
\ee
Such a function shall be called a \textbf{balancing function}, and for $j,k=0,1,2$ with 
$j\neq k$ we set 
\be
\label{ro-j-k}
\rho_{j,k}=\sup_{t\in S_j\cup S_k}\{\Omega(t)\msf_{j,k}^2(t)\}
\ee
where the $\msf_{j,k}$ are the auxiliary functions found in appendix \ref{mod_bessel} 
(they are related to the modified Bessel functions).
\end{definition}
\begin{remark}
\label{finiteness-of-ro}
Using the bounds we have for $\msf_{j,k}(t)$, the fact that $\msf_{j,k}(t)$ is 
continuous in $S_j\cup S_k$ and the asymptotics for the balancing function, it is easy 
to realize that the quantities $\rho_{j,k}$ are always finite.
\end{remark}

We fix a way to measure the error of the approximation  by setting 
an error-control function.
\begin{definition}
We define an \textbf{error-control function} $H(x,\lam,\epsilon)$ of equation 
(\ref{eq-for-spec-olver}) to be any primitive of
\be\nn
\frac{1}{\Omega(\epsilon^{-2/3}\zeta)}
\bigg\{
\frac{1}{\hat{f}^{1/4}(x,\lam)}
\frac{d^2}{d x^2}
\Big[\frac{1}{\hat{f}^{1/4}(x,\lam)}\Big]
-\frac{g(x,\lam)}{\hat{f}^{1/2}(x,\lam)}
\bigg\}
\ee
where for $\hat{f}^{1/4}$ and $\hat{f}^{1/2}$ we choose any continuous branches 
in $\mathscr{D}$ such that $\hat{f}^{1/2}=(\hat{f}^{1/4})^2$. Using 
(\ref{relation-V0-fhat}) and $V_0$ instead of $\hat{f}$  the above becomes
\be\nn
\frac{3}{2}
\frac{\zeta^{1/2}}{\Omega(\epsilon^{-2/3}\zeta)}
\bigg\{
\frac{1}{V_0^{1/4}(x,\lam)}
\frac{d^2}{d x^2}
\Big[\frac{1}{V_0^{1/4}(x,\lam)}\Big]
-\frac{g(x,\lam)}{V_0^{1/2}(x,\lam)}
\bigg\}-
\frac{5}{16}
\frac{1}{\zeta^2\Omega(\epsilon^{-2/3}\zeta)}.
\ee
\end{definition}

Finally, before stating the theorem about the solutions of (\ref{eq-for-spec-olver}), let us fix 
some complementary notation.  
\begin{definition}
\label{progr-path-olver-1}
Take two points $x_j\in\clos(\mathscr{D}_j)$ and $y_k\in\clos(\mathscr{D}_k)$ 
where $j,k\in\{0,1,2\}$. A path $\mathscr{P}:\gamma(x_j\rightsquigarrow y_k)$ 
[meaning a Jordan arc comprising of a finite chain of $R_2$ arcs; where an $R_2$ 
arc is an arc $\sigma=\sigma(s)$ ($s$ being the arc parameter) such that 
$\sigma''(s)$ is continuous and $\sigma'(s)$ does not vanish], shall be called a 
\textbf{progressive path} joining $x_j,y_k$ if it satisfies the following ``monotonicity 
condition": as $x$ travels along $\mathscr{P}$  from $x_j$ to $y_k$, the function
\be
\label{progro-mono}
\re\int_{\gamma(x_j\rightsquigarrow x)}V_0^{1/2}(t,\lambda)dt
\ee
is non-decreasing
\footnote{This definition is somewhat different from that of the previous section where
we impose that the same expression is strictly increasing for a path to be progressive.}. 
We choose the branch of the integral in (\ref{progro-mono}) so that the 
whole function is continuous, staying non-positive in $\mathscr{D}_j$ while non-negative in 
$\mathscr{D}_k$. Furthermore, we define 
\be\nn
\mathsf{H}_k(x_j)=
\{\, x\in\mathscr{D}_j\cup\mathscr{D}_k
\mid
\exists\hspace{3pt}\text{a progressive path}\hspace{3pt}
\gamma(x_j\rightsquigarrow x)\subset\mathscr{D}_j\cup\mathscr{D}_k\,\}.
\ee
\end{definition}
\begin{definition}
\label{variation-along-progro}
Let $\mathscr{P}:\gamma(a\rightsquigarrow b)$ be a progressive path between the points 
$a$ and $b$. We define the \textbf{variation} of the error-control function $H$ along 
$\mathscr{P}$ to be
\begin{multline*}
\mathcal{V}_{\mathscr{P}}[H](\lam,\epsilon)
\equiv
\mathcal{V}_{a,b}[H](\lam,\epsilon)\\
=
\int_{\gamma(a\rightsquigarrow b)}
\bigg|
\frac{1}{\Omega(\epsilon^{-2/3}\zeta)}
\bigg\{
\frac{1}{\hat{f}^{1/4}(t,\lam)}
\frac{d^2}{d t^2}
\Big[\frac{1}{\hat{f}^{1/4}(t,\lam)}\Big]
-\frac{g(t,\lam)}{\hat{f}^{1/2}(t,\lam)}
\bigg\}
\bigg|
dt.
\end{multline*}
\end{definition}
\noindent
All these, lead us to the next important theorem about the solutions to equation 
(\ref{eq-for-spec-olver}).
\begin{theorem}
Consider Assumption \ref{assume-one-tp}, 
let $a,b\in\C$ be arbitrary and choose a reference point $a_j\in\clos(\mathscr{D}_j)$, $j=0,1,2$ 
(the reference point could be $\infty$). Also, for $k\in\{0,1,2\}$ such that $k\neq j$, consider the 
set $\mathsf{H}_k(a_j)$ of points in $\mathscr{D}_j\cup\mathscr{D}_k$ that can be joined to 
$a_j$ with a progressive path $\mathscr{P}$ in $\mathscr{D}_j\cup\mathscr{D}_k$ [with the 
understanding that if $\zeta(a_j)$ is at infinity, then we suppose that it is the point at infinity on a 
path $P$ in $S_j$ and require $\zeta(\mathscr{P})$ to coincide with $P$ in a neighborhood of 
$\zeta(a_j)$].  Then for every $\epsilon>0$, equation (\ref{eq-for-spec-olver}), namely
\be\nn
\frac{d^2w}{dx^2}=[\epsilon^{-2}V_0(x,\lambda)+g(x,\lambda)]w
\ee
with $V_0,g$ given by (\ref{olver-f}) and (\ref{olver-g}) respectively, has a solution 
$w(x,\lambda,\epsilon)$ (depending of course on $a,b$ and $a_j$) that is holomorphic in 
$\mathscr{D}\setminus\{c\}$ and continuous at $c$. Furthermore, when $x\in\mathsf{H}_k(a_j)$ 
we have
\be\nn
w(x,\lambda,\epsilon)=
\hat{f}^{-1/4}(x,\lambda)
\big[
aU_j(\epsilon^{-2/3}\zeta)+bU_k(\epsilon^{-2/3}\zeta)+R(\zeta,\lambda,\epsilon)
\big]
\ee
where $U_j$, $U_k$ are the MBF defined in appendix \ref{mod_bessel}. 
The remainder $R$ satisfies the following precise estimates
\begin{multline}
\label{error-bounds-olver}
\frac{|R(\zeta,\lambda,\epsilon)|}{\msf_{j,k}(\epsilon^{-2/3}\zeta)},
\frac{\Big|\frac{\partial R}{\partial\z}(\z,\lambda,\epsilon)\Big|}
{\epsilon^{-2/3}\nsf_{j,k}(\epsilon^{-2/3}\zeta)},
\frac{\Big|\frac{\partial[\zeta^{1/4}R(\z,\lambda,\epsilon)]}{\partial\z}\Big|}
{\epsilon^{-2/3}|\zeta|^{1/4}\hat{\nsf}_{j,k}(\epsilon^{-2/3}\zeta)}
\\
\leq
\frac{\sigma_{j,k}}{\ro_{j,k}}\esf_{j,k}(\epsilon^{-2/3}\zeta)
\bigg(
\exp
\Big\{
\frac{\ro_{j,k}}{3|\lambda_{j,k}|}\epsilon^{2/3}\mathcal{V}_{a_j,x}[H](x,\lambda,\epsilon)
\Big\}
-1
\bigg)
\end{multline}
where the variation of $H$ is being evaluated along $\mathscr{P}$ 
(see Definition \ref{variation-along-progro}), the $\ro_{j,k}$ are given 
by (\ref{ro-j-k}) and the quantities $\sigma_{j,k}, \lambda_{j,k}$ satisfy 
\be
\label{sigma-j-k}
\sigma_{j,k}=
\sup_{p\in\zeta(\mathscr{P})}
\bigg\{
\frac{\Om(\epsilon^{-2/3}p)\msf_{j,k}(\epsilon^{-2/3}p)}{\esf_{j,k}(\epsilon^{-2/3}p)}
\big|aU_j(\epsilon^{-2/3}p)+bU_k(\epsilon^{-2/3}p)\big|
\bigg\}
\ee
and
\be
\label{lambda-j-k}
\lambda_{j,k}=\frac{\sin\Big[(k-j)\frac{\pi}{3}\Big]}{\sin\frac{\pi}{3}}
\ee
respectively. When $x\in\mathscr{D}_k$, the bounds (\ref{error-bounds-olver}) for the error, 
apply only to the branch of $w$ obtained by analytic continuation from a neighborhood of $c$ 
in $\mathscr{D}_j$ by rotation through an angle $2(k-j)\frac{\pi}{3}$.
\end{theorem}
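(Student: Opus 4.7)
The proof will follow the Liouville--Green--Olver pattern adapted to the three comparison solutions $U_0,U_1,U_2$ near a simple turning point. First I would substitute $W(\z,\lambda,\epsilon)=\hat f^{1/4}(x,\lambda)\,w(x,\lambda,\epsilon)$ and change the independent variable from $x$ to $\z$. Using \eqref{relation-V0-fhat} and a direct calculation, equation \eqref{eq-for-spec-olver} is transformed into
$$
\frac{d^{2}W}{d\z^{2}}=\Big[\tfrac{9}{4}\epsilon^{-2}\z+\phi(\z,\lambda)\Big]W,
$$
where the remainder $\phi(\z,\lambda)$ is (up to the factor $\Omega(\epsilon^{-2/3}\z)$) exactly the integrand that defines the error-control function $H$. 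After the rescaling $\tau=\epsilon^{-2/3}\z$, the unperturbed equation is precisely the comparison equation whose fundamental solutions $U_0,U_1,U_2$ were constructed in appendix \ref{mod_bessel}.

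\textbf{Volterra equation and Picard iteration.} I would then look for the solution in the form $W=aU_j(\tau)+bU_k(\tau)+R(\tau,\lambda,\epsilon)$ and convert the perturbed ODE into a Volterra integral equation for $R$, using as Green kernel the combination built from the fundamental pair $\{U_j,U_k\}$; its constant Wronskian is precisely what produces the factor $\lambda_{j,k}$ in \eqref{lambda-j-k}. Integrating along $\z(\mathscr P)$ from $\z(a_j)$ and iterating as a contraction yields the Picard series
$$
R=\sum_{n=1}^{\infty}R_{n},\qquad R_{0}\equiv 0,\qquad R_{n+1}=\mathcal K\big[aU_j+bU_k+R_{n}\big],
$$
where $\mathcal K$ is the integral operator whose kernel is the above Green combination multiplied by $\epsilon^{2/3}\phi$. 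The existence, holomorphy in $\mathscr D\setminus\{c\}$, and continuity at $c$ of the resulting $w=\hat f^{-1/4}W$ follow because $\hat f^{-1/4}$ absorbs the branch singularity of the $U_\ell$'s at $\z=0$.

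\textbf{Error bounds.} To derive \eqref{error-bounds-olver} I would estimate the Green kernel by the companion inequalities for $\msf_{j,k}$, $\nsf_{j,k}$, $\hat{\nsf}_{j,k}$, $\esf_{j,k}$ listed in appendix \ref{mod_bessel}. The crucial input is the monotonicity of \eqref{progro-mono} along the progressive path $\mathscr P$, which is exactly the condition that guarantees $\esf_{j,k}(s)/\esf_{j,k}(\tau)\le 1$ as $s$ moves from $\z(a_j)$ to $\tau$, so the kernel bound integrates cleanly. Multiplying and dividing by $\Omega(\epsilon^{-2/3}\z)$ and using \eqref{ro-j-k} and \eqref{sigma-j-k}, a routine induction gives, for every $n\ge 0$,
$$
\frac{|R_{n+1}(\z)-R_{n}(\z)|}{\msf_{j,k}(\epsilon^{-2/3}\z)\,\esf_{j,k}(\epsilon^{-2/3}\z)}
\le\frac{\sigma_{j,k}}{\rho_{j,k}}\cdot\frac{1}{n!}
\Big(\tfrac{\rho_{j,k}\,\epsilon^{2/3}}{3|\lambda_{j,k}|}\mathcal V_{a_{j},x}[H]\Big)^{n},
$$
and summing the geometric--factorial series produces the factor $\exp\{\cdots\}-1$ that appears in \eqref{error-bounds-olver}. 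The two derivative estimates follow by differentiating the Volterra equation once and using the analogous kernel bounds in terms of $\nsf_{j,k}$ and $\hat{\nsf}_{j,k}$ instead of $\msf_{j,k}$.

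\textbf{Main obstacles and the last assertion.} The principal technical issue is obtaining the uniform kernel bound on all of $\mathsf H_{k}(a_{j})$: near $c$ the singular $\z^{1/4}$ behavior of the $U_{\ell}$'s must cancel against $\hat f^{-1/4}$, and when $a_{j}=\infty$ convergence of the Volterra integral at the endpoint requires the balancing growth $\Omega(t)=\asympt(t^{1/2})$ combined with the exponential decay of $U_{j}$ in its principal sector $S_{j}$; Remark \ref{finiteness-of-ro} then guarantees that $\rho_{j,k}$ is finite so the contraction estimate closes. Finally, the claim that for $x\in\mathscr D_{k}$ the bounds \eqref{error-bounds-olver} apply to the branch of $w$ obtained by analytic continuation from $\mathscr D_{j}$ through angle $2(k-j)\pi/3$ is immediate from the ramification of $\z$ at $c$ implied by \eqref{zeta-taylor-expan} (a rotation of $\arg(x-c)$ by $2\pi/3$ rotates $\arg\z$ by the same angle and hence shifts the principal sector index by one) combined with the rotation rule among $\{U_{0},U_{1},U_{2}\}$ stated in appendix \ref{mod_bessel}.
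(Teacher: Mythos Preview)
Your proposal is a correct and reasonably detailed sketch of the Liouville--Green--Olver machinery: the change of variables to $\zeta$, the Volterra integral equation built from the pair $\{U_j,U_k\}$, Picard iteration, and the inductive estimate that sums to $\exp\{\cdots\}-1$. This is indeed the method behind the theorem.

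That said, you should be aware that the paper does \emph{not} actually prove this theorem. Its entire proof consists of a citation: ``For the proof see \S\S 3.3--3.6 in Olver's \cite{olver1978} with $m=3$. The statement of the theorem can be found in \S 3.3 as Theorem 3.1.'' In other words, the result is quoted verbatim from Olver's 1978 paper as a black box, specialised to the case $m=3$ (simple turning point, three principal sectors). What you have written is essentially a compressed outline of Olver's own argument in that reference, so in substance you and the paper agree; you have simply supplied what the paper outsources.

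One small remark on your iteration: writing $R_{n+1}=\mathcal K[aU_j+bU_k+R_n]$ is slightly nonstandard. In Olver's setup the forcing term $aU_j+bU_k$ enters once at the first step, and subsequent iterates are $R_{n+1}=\mathcal K[R_n]+R_1$ (or equivalently one iterates $\mathcal K$ on $R_1$). Your formulation is equivalent after unwinding, but be careful that the induction hypothesis you state,
\[
\frac{|R_{n+1}(\zeta)-R_n(\zeta)|}{\msf_{j,k}\,\esf_{j,k}}\le\frac{\sigma_{j,k}}{\rho_{j,k}}\cdot\frac{1}{n!}\Big(\tfrac{\rho_{j,k}\epsilon^{2/3}}{3|\lambda_{j,k}|}\mathcal V_{a_j,x}[H]\Big)^{n},
\]
matches the iteration you actually set up. This is cosmetic and does not affect correctness.
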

\begin{proof}
For the proof see \S\S 3.3-3.6 in Olver's \cite{olver1978} with $m=3$. The statement of 
the theorem can be found in \S 3.3 as Theorem 3.1.
\end{proof}
\begin{remark}
For the theorem to be meaningful, it is necessary that the right-hand side of 
(\ref{error-bounds-olver}) is finite. Recall that $\ro_{j,k}$ are finite (cf. Remark \ref{finiteness-of-ro}) 
and that in our case $1/|\lambda_{j,k}|$ are finite as well [see (\ref{lambda-j-k}) and recall that 
$(k-j)\neq0\3$]. But 
$\mathcal{V}_{\mathscr{P}}[H]$ has to converge (the variation of $H$ is calculated along 
$\mathscr{P}$) and also the $\sigma_{j,k}$ have to be finite. It is seen that the quantity  in brackets 
in (\ref{sigma-j-k}) is finite at all finite points $p\in S_j\cup S_k$ and bounded as 
$p\rightarrow\infty$ in $S_k$. As $p\rightarrow\infty$ in an internal part of $S_j$ 
(see appendix \ref{mod_bessel}) however, the quantity  in brackets is unbounded, unless 
$b=0$. Accordingly, if $\zeta(a_j)=\infty$ in an internal part of $S_j$, then the theorem may 
be applied only in the case $b=0$.
\end{remark}
\begin{remark}
In order to deal with the behavior near the remaining (simple) turning points, namely 
$x_3$, $x_4$, $x_5$ and $x_7$, instead of the differential equation for $w_-$, we consider 
the differential equation for $w_+$ [see (\ref{w-plus-minus})] and apply exactly the same 
strategy like the one just presented.
\end{remark}

\subsection{Asymptotic estimates for the error terms}
\label{asymptt-estim-for-error}

Let us now consider a circumstance in which the error-term $R$ is sufficiently small for the 
theorem to supply meaningful approximations when $\epsilon$ is small. But let us first define 
the function
\be
\label{F-func}
F(x,\lambda)=
\int
\bigg\{
\frac{1}{V_0^{1/4}(x,\lam)}
\frac{d^2}{d x^2}
\Big[\frac{1}{V_0^{1/4}(x,\lam)}\Big]
-\frac{g(x,\lam)}{V_0^{1/2}(x,\lam)}
\bigg\}
dx
\ee
where for this integral we shall not use paths that intersect $c$, while for 
$V_0^{1/4}, V_0^{1/2}$ we adopt any branches, provided they are continuous 
and the latter is the square of the former. With this in mind, we have the following result. 
\begin{theorem}
Let Assumption \ref{assume-one-tp} be satisfied (observe that $V_0,g$ are independent of 
$\epsilon$), consider $a_j,a_k$ to be arbitrary points in $\clos(\mathscr{D}_j)$ and 
$\clos(\mathscr{D}_k)$ respectively (including the point at infinity) and let $\mathscr{P}$ 
be a progressive path joining $a_j$ and $a_k$. Then as $\epsilon\downarrow0$, we have 
\be\nn
\mathcal{V}_{\mathscr{P}}[H](\lambda,\epsilon)=\asympt(\epsilon^{1/3}).
\ee
\end{theorem}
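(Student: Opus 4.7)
The plan is to split the progressive path $\mathscr{P}$ at a threshold where the defining asymptotic $\Omega(t)=\asympt(t^{1/2})$ begins to dominate, and to exploit the fact that, once the factor $1/\Omega$ is peeled off, the integrand in the variation is $\epsilon$-independent. Denote by
\[
B(x,\lambda):=\frac{1}{\hat f^{1/4}(x,\lambda)}\frac{d^2}{dx^2}\!\Big[\frac{1}{\hat f^{1/4}(x,\lambda)}\Big]-\frac{g(x,\lambda)}{\hat f^{1/2}(x,\lambda)},
\]
so that $\mathcal V_{\mathscr{P}}[H](\lambda,\epsilon)=\int_{\mathscr{P}}|B(t,\lambda)|/\Omega(\epsilon^{-2/3}\zeta(t,\lambda))\,|dt|$. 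The key structural fact, a consequence of \eqref{relation-V0-fhat}, is that $\hat f(\cdot,\lambda)$ is holomorphic and non-vanishing throughout $\mathscr{D}$, including at $c$; hence $B(\cdot,\lambda)\in\mathscr{H}(\mathscr{D})$ and, in particular, is bounded on any compact piece of $\mathscr{P}$.

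The main quantitative step I would carry out is to verify the finiteness of the $\epsilon$-independent master integral
\[
J_{\mathscr{P}}(\lambda)\;:=\;\int_{\mathscr{P}}\frac{|B(t,\lambda)|}{|\zeta(t,\lambda)|^{1/2}}\,|dt|.
\]
Near $c$ this is immediate since $|\zeta|^{-1/2}\sim|t-c|^{-1/2}$ is integrable and $B$ is bounded. When an endpoint of $\mathscr{P}$ lies at $x=\pm\infty$, the exponential decay of $A(x)=\sech(2x)$ together with the uniform positive distance of $\lambda\in\mathscr{O}$ from $0$ (so the denominators in $g$ and in $V_0$ stay bounded below at infinity) allows one to show that the correction term $g/\hat f^{1/2}$ decays exponentially while the Schwarzian-type term $\hat f^{-1/4}(\hat f^{-1/4})''$ decays algebraically; combined with the growth $|\zeta(t,\lambda)|\sim|t|^{2/3}$ at infinity, this yields convergence of the tails of $J_{\mathscr{P}}(\lambda)$.

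Once $J_{\mathscr{P}}(\lambda)<\infty$ is known, I would fix $M>0$ large enough that $\Omega(s)\geq c_0|s|^{1/2}$ for all $|s|\geq M$ (legitimate by the defining asymptotic of the balancing function) and split $\mathscr{P}=\mathscr{P}_{\rm near}\cup\mathscr{P}_{\rm far}$ according to whether $|\zeta(t,\lambda)|\leq M\epsilon^{2/3}$ or not. On $\mathscr{P}_{\rm near}$ the path length is $\asympt(\epsilon^{2/3})$ by conformality of $\zeta$ at $c$, and $\Omega$ is bounded below by a positive constant on the compact set $\{|s|\leq M\}$ (by continuity and positivity), so this piece contributes $\asympt(\epsilon^{2/3})$. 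On $\mathscr{P}_{\rm far}$ the lower bound on $\Omega$ yields $1/\Omega(\epsilon^{-2/3}\zeta)\leq c_0^{-1}\epsilon^{1/3}|\zeta|^{-1/2}$, so the corresponding contribution is bounded by $c_0^{-1}\epsilon^{1/3}J_{\mathscr{P}}(\lambda)=\asympt(\epsilon^{1/3})$. Summing the two contributions gives $\mathcal V_{\mathscr{P}}[H](\lambda,\epsilon)=\asympt(\epsilon^{1/3})+\asympt(\epsilon^{2/3})=\asympt(\epsilon^{1/3})$.

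The hard part, I expect, is the tail estimate entering the finiteness of $J_{\mathscr{P}}(\lambda)$: one must simultaneously handle the mild integrable singularity at $c$ produced by $|\zeta|^{-1/2}$ (easy, since $\hat f$ is non-vanishing there) and balance the algebraic growth $|\zeta|^{-1/2}\sim|t|^{-1/3}$ against the joint asymptotic behaviour of $\hat f^{-1/4}$, its second derivative and the correction $g/\hat f^{1/2}$. This last bookkeeping crucially uses the uniform positive distance of $\lambda$ from $0$ built into hypothesis \textbf{(H2)}, without which the denominators entering $g$ and $\hat f$ would deteriorate at infinity.
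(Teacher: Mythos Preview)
Your direct argument is correct: splitting $\mathscr{P}$ at the scale $|\zeta|\sim\epsilon^{2/3}$, using that $B$ is holomorphic (hence bounded) near $c$ because $\hat f$ is non-vanishing there, and controlling the far piece via the finiteness of $J_{\mathscr P}(\lambda)$, gives the stated $\asympt(\epsilon^{1/3})$. The paper, however, does not carry this out by hand; it simply invokes Theorem~4.1(ii) of \cite{olver1978} (with $m=3$), which packages precisely this splitting argument, and reduces the task to checking two hypotheses: that $1/\zeta(x)$ and the function $F(x,\lambda)$ of \eqref{F-func} have bounded variation as $x\to a_j$ or $x\to a_k$ along $\mathscr{P}$. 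Those conditions are your finiteness of $J_{\mathscr P}$ rewritten in Olver's $V_0$-variables: bounded variation of $F$ is exactly the convergence of the tail contribution from the bracketed $V_0$-term in the alternative form of the integrand, while bounded variation of $1/\zeta$ controls the residual $-\tfrac{5}{16}\zeta^{-2}$ term; both are equivalent to what you verify. Your route is self-contained and makes the mechanism transparent; the paper's is short and modular, delegating the analysis to the reference. One small caveat: hypothesis \textbf{(H2)} does not literally bound $\mathscr{O}$ away from $0$, it only ensures $\mathscr{O}\subset\mathbb{H}^+$; since the theorem is stated for fixed $\lambda$ with no uniformity claim, this does not affect your argument, but the phrase ``uniform positive distance'' slightly overstates what \textbf{(H2)} provides.
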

\begin{proof}
Referring to Theorem 4.1(ii) when $m=3$ in \cite{olver1978}, one only needs to check that 
\begin{itemize}
\item
the function $1/\zeta(x)$ [refer to (\ref{zeta-olver})] and 
\item
the function $F(x,\lambda)$ in (\ref{F-func})
\end{itemize}
are of bounded variation as $x\rightarrow a_j$ or $a_k$ along $\mathscr{P}$.
\end{proof}

\subsection{The connection theorem}
\label{olver-connection}

In this section, we will state a theorem that concerns the connection of solutions of 
our equation between two principal regions near a turning point. But first, we reformulate 
the geometric concepts introduced in \S\ref{sols-1-turn-pt} using a new variable $\xi_c$, 
in order to avoid referring explicitly to the variable $\zeta$. We begin with a definition 
(see Definition \ref{stokes-curves-gg}).

\begin{definition}
The curves in the complex $x$-plane having the equation
\be\nn
\re\int_{\gamma(c\rightsquigarrow x)}V_0^{1/2}(t,\lambda)dt=0
\ee
are called the \textbf{Stokes curves} associated with the turning point $c$
\footnote{
Olver actually calls them \textbf{anti-Stokes curves} or \textbf{principal curves}.}. 
Either branch may be used for the square root of $V_0$, provided continuity is maintained. 
\end{definition}

Since $c$ is a simple turning point of equation (\ref{eq-for-spec-olver}), there are three distinct 
Stokes curves intersecting at $c$ at an angle of $2\pi/3$.  All 
of them occupy the same Riemann sheet. Using the conformality of the mapping one can prove that: 
(i) a Stokes curve can either terminate at $c$ or at a boundary point of $\mathscr{D}$, and (ii) 
that no Stokes curve intersects itself or any other Stokes curve on the same Riemann sheet 
(except c).

Evidently, the Stokes  curves are boundaries of  principal regions (each principal region 
includes its bounding Stokes curves). From here on, \textit{we suppose that in each principal 
region there is a one-to-one relation between $x$ and any continuous branch of the integral 
$\int V_0^{1/2}(t,\lambda)dt$} (which is equivalent to the fact that the mapping from 
$\mathscr{D}$ to $\Delta$ is one-to-one).

We start by fixing an arbitrarily chosen principal region and name it $\mathscr{D}_0$. 
We then define $\mathscr{D}_1$ to be the successive principal domain encountered as we 
pass around $c$ in a counterclockwise direction and similarly set $\mathscr{D}_{2}$ be the 
successive principal region (to $\mathscr{D}_0$) encountered as we pass around $c$ in 
a clockwise direction.  Whatever choice we make for $\mathscr{D}_0$, the labeling 
can be made consistent with that chosen before, by appropriately choosing the branch 
of $f_0^{1/3}$ in (\ref{zeta-taylor-expan}).

\begin{definition}
Take $j\in\{0,1,2\}$. We shall refer to 
$\mathscr{D}_j\cap\mathscr{D}_{(j+1)\3}$ as the 
\textbf{left boundary} of $\mathscr{D}_j$. Also we will call 
$\mathscr{D}_j\cap\mathscr{D}_{(j-1)\3}$, 
the \textbf{right boundary} of $\mathscr{D}_j$. 
\end{definition}

Next, in each principal region $\mathscr{D}_j$, $j\in\{0,1,2\}$ associated with $c$ 
we define 
\be\label{xi-c-func}
\xi_c(x,\lambda)=\int_{\gamma(c\rightsquigarrow x)}V_0^{1/2}(t,\lambda)dt
\ee
taking the branch that is continuous with $\re\xi_c(x,\lambda)\geq0$. This determines a 
function of $x$ that is continuous in $\mathscr{D}$ except on the Stokes  curves. 
Fix a $j\in\{0,1,2\}$. Clearly, for $x$ on the left boundary of $\mathscr{D}_j$ we have 
$\xi_c(x,\lambda)\in i\R^+$, while on the right boundary of $\mathscr{D}_j$ we have 
$\xi_c(x,\lambda)\in i\R^-$. Since the left boundary of $\mathscr{D}_j$ is also the 
right boundary of $\mathscr{D}_{(j+1)\3}$ and the right boundary of 
$\mathscr{D}_j$ is also the left boundary of $\mathscr{D}_{(j-1)\3}$, 
we realize that $\xi_c(x,\lambda)$ changes sign when it crosses a principal curve.
Finally we redefine the notion of the progressive path (cf. Definition \ref{progr-path-olver-1}).
\begin{definition}
Any Jordan arc $\mathscr{P}$ comprising of a finite chain of $R_2$ arcs and having the 
property that $\re\xi_c(\cdot,\lambda)$ is monotonic on the intersection of $\mathscr{P}$ 
with any principal region, shall (also) be called a \textbf{progressive path} 
associated with $c$.
\end{definition}

Before stating the main theorem of this section, let us formulate an important lemma first.
\begin{lemma}
\label{unique-solution-lemma}
Consider the differential equation (\ref{eq-for-spec-olver}), taking into account 
Assumption \ref{assume-one-tp}. Also, let $j\in\{0,1,2\}$, pick a boundary point or 
point at infinity $a_j\in\clos(\mathscr{D}_j)$ and consider a progressive path 
$\mathscr{P}$ in $\mathscr{D}_j$ passing from $a_j$. 
Then equation (\ref{eq-for-spec-olver}) has a unique solution 
$w(x,\lambda,\epsilon)\in\mathscr{H}(\mathscr{D})$ and satisfies
\begin{align*}
V_0^{1/4}(x,\lambda)w(x,\lambda,\epsilon)
&\sim
e^{-\xi_c(x,\lambda)/\epsilon}
\quad\text{as}\quad x\rightarrow a_j\quad\text{along}\quad\mathscr{P}\\
\frac{\partial[V_0^{1/4}(x,\lambda)w(x,\lambda,\epsilon)]}{\partial x}
&\sim
-\frac{1}{\epsilon}
V_0^{1/2}(x,\lambda)e^{-\xi_c(x,\lambda)/\epsilon}
\quad\text{as}\quad x\rightarrow a_j\quad\text{along}\quad\mathscr{P}.
\end{align*}
\end{lemma}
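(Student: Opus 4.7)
The plan is to invoke the construction theorem of Section~\ref{sols-1-turn-pt} with $U_j$ as the asymptotic template, and then match the normalization so that the leading term is exactly $V_0^{-1/4}e^{-\xi_c/\epsilon}$. Concretely, pick any auxiliary sector $S_k$, $k\neq j$, and apply that theorem with reference point $a_j\in\clos(\mathscr D_j)$, with $b=0$ and a coefficient $a=a(\epsilon)$ to be fixed in a moment; this produces a solution
$$w(x,\lambda,\epsilon)=\hat{f}^{-1/4}(x,\lambda)\bigl[a\,U_j(\epsilon^{-2/3}\zeta)+R(\zeta,\lambda,\epsilon)\bigr]$$
which is initially holomorphic on $\mathscr D\setminus\{c\}$ and continuous at $c$. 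Because $V_0$ and $g$ are holomorphic throughout $\mathscr D$ by Assumption~\ref{assume-one-tp}, the coefficient of $w$ in (\ref{eq-for-spec-olver}) is holomorphic on all of $\mathscr D$, so $w$ is promoted to an element of $\mathscr H(\mathscr D)$ by removability.

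To match asymptotics along $\mathscr P$, note that from (\ref{zeta-olver}) one has $\zeta^{3/2}=\tfrac{3}{2}\xi_c$ on the branch fixed in $\mathscr D_j$, while (\ref{relation-V0-fhat}) gives $\hat f^{-1/4}=(3/2)^{1/2}\zeta^{1/4}V_0^{-1/4}$. The large-argument asymptotic of $U_j$ in its subdominant direction (Appendix~\ref{mod_bessel}), which has the shape $U_j(t)\sim C\,t^{-1/4}\exp(-\tfrac{2}{3}t^{3/2})$ for a fixed constant $C$, substituted with $t=\epsilon^{-2/3}\zeta$ yields
$$\hat{f}^{-1/4}(x,\lambda)\,U_j(\epsilon^{-2/3}\zeta)\ \sim\ C(3/2)^{1/2}\epsilon^{1/6}V_0^{-1/4}(x,\lambda)\,e^{-\xi_c(x,\lambda)/\epsilon},$$
so choosing $a=C^{-1}(2/3)^{1/2}\epsilon^{-1/6}$ makes the leading term of $w$ equal to $V_0^{-1/4}e^{-\xi_c/\epsilon}$ exactly. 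The remainder is killed by the subdominance structure in (\ref{error-bounds-olver}) combined with $\mathcal V_{a_j,x}[H]=\asympt(\epsilon^{1/3})$ from Section~\ref{asymptt-estim-for-error}, so the exponential factor there is $1+\asympt(\epsilon)$ and $\hat f^{-1/4}R$ is of strictly smaller exponential order than the main term. The derivative asymptotic is extracted by differentiating the display, using $(d\zeta/dx)^2=\hat f$ from (\ref{relation-V0-fhat}), the companion large-argument asymptotic of $\partial U_j/\partial t$ in its subdominant direction, and the second and third bounds in (\ref{error-bounds-olver}) to control $\partial R/\partial\zeta$ at the same exponential order.

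For uniqueness, let $w^\sharp$ denote a companion solution obtained from the same theorem but with the roles of sectors $j$ and $k$ interchanged, so that $w^\sharp$ is dominant in $S_j$ at $a_j$ and $\{w,w^\sharp\}$ is a basis of the solution space. If $\tilde w\in\mathscr H(\mathscr D)$ obeys the same two asymptotics as $w$, write $\tilde w-w=\alpha w+\beta w^\sharp$; the decay required of $\tilde w-w$ rules out $\beta\neq 0$ because $w^\sharp$ contributes a term exponentially larger than $e^{-\xi_c/\epsilon}$ along $\mathscr P$, and then equating the leading pieces forces $\alpha=0$. In the alternative case when $a_j$ is a finite boundary point, the two prescribed relations reduce to a Cauchy datum at $a_j$ and uniqueness is immediate from classical ODE theory. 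The main obstacle in the whole argument is the precise bookkeeping of the MBF normalization constant $C$ and the stray factor $\epsilon^{1/6}$ absorbed into $a(\epsilon)$, since the prescribed leading form $V_0^{1/4}w\sim e^{-\xi_c/\epsilon}$ leaves no margin for constants; the fact that $\sigma_{j,k}$ in (\ref{sigma-j-k}) is finite at infinity in an internal part of $S_j$ only when $b=0$ is exactly what forces the asymmetric construction above and the use of a reversed-sector companion $w^\sharp$ in the uniqueness step.
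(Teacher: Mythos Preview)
Your approach is workable but takes a substantially longer route than the paper. The paper simply invokes the classical Liouville--Green theorem for a turning-point-free region (Theorem~11.1, Chapter~6 of \cite{olver1997}): one checks that $\re\xi_c(x,\lambda)\to+\infty$ as $x\to a_j$ along $\mathscr P$ and that $\mathcal V_{a_j,x}[F]$ converges, and that theorem hands back directly a holomorphic solution with the stated WKB asymptotics; uniqueness is immediate because the solution is recessive. No modified Bessel functions, no $\epsilon^{1/6}$ bookkeeping, no turning-point machinery at all. Your construction instead pushes everything through the full turning-point theorem of \S\ref{sols-1-turn-pt} and then recovers the WKB form from the large-argument behaviour of $U_j$. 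That does work, and has the mild advantage of staying within machinery already developed in the paper, but it is considerably heavier.

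Two points of imprecision in your write-up are worth flagging. First, with the paper's conventions (\ref{zeta-olver}) gives $\zeta^{3/2}=\xi_c$, not $\tfrac32\xi_c$, and from (\ref{primary-sols-mbf}) together with $K_\nu(z)\sim\sqrt{\pi/(2z)}\,e^{-z}$ one gets $U_j(t)\sim t^{-1/4}e^{-t^{3/2}}$, not $e^{-\frac23 t^{3/2}}$; you have imported the standard Airy normalization rather than the rescaled equation (\ref{mbf-basic-equation}). The two slips cancel in your final display, so the conclusion survives, but the intermediate formulas are wrong as stated. Second, your control of the remainder $R$ appeals to $\mathcal V_{a_j,x}[H]=\asympt(\epsilon^{1/3})$, which is an $\epsilon\downarrow0$ statement, whereas the lemma asks for an $x\to a_j$ asymptotic at fixed $\epsilon$. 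What you actually need from (\ref{error-bounds-olver}) is that $\mathcal V_{a_j,x}[H]\to 0$ as $x\to a_j$ (equivalently, bounded variation of $F$ and $1/\zeta$ there), which is precisely the hypothesis the paper checks. Your uniqueness argument via a dominant companion $w^\sharp$ is correct and is just an explicit unpacking of the ``recessive $\Rightarrow$ unique'' step the paper invokes in one line.
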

\begin{proof}
Using Theorem 11.1 in chapter 6 of Olver's book \cite{olver1997} one needs to realize that 
\begin{itemize}
\item
$\re\xi_c(x,\lambda)\rightarrow+\infty$ as $x\rightarrow a_j$ along $\mathscr{P}$ and
\item
$\mathcal{V}_{a_j,x}[F]$ converges as $x\rightarrow a_j$ along $\mathscr{P}$.
\end{itemize}
It follows that such a solution exists, is holomorphic and additionally satisfies the desired 
asymptotics. Since $\re\xi_c(x,\lambda)\rightarrow+\infty$ as $x\rightarrow a_j$ along 
$\mathscr{P}$, this solution is recessive and hence unique.
\end{proof}

We are now ready for current section's main theorem.
\begin{theorem}
\label{connection-of-sols}
Consider the differential equation (\ref{eq-for-spec-olver}), taking into account 
Assumption \ref{assume-one-tp}. Also, let $j, k\in\{0,1,2\}$ with $k\neq j$, pick two 
boundary points or points at infinity, namely $a_j\in\clos(\mathscr{D}_j)$, 
$a_k\in\clos(\mathscr{D}_k)$ (different from $c$) and consider a progressive path 
$\mathscr{P}$ in $\mathscr{D}_j\cup\mathscr{D}_k$ joining them. Then the unique solution 
$w(x,\lambda,\epsilon)$ of equation (\ref{eq-for-spec-olver}) provided by the previous lemma, 
satifies the following asymptotics as $x\rightarrow a_k$ along $\mathscr{P}$
\begin{align*}
V_0^{1/4}(x,\lambda)w(x,\lambda,\epsilon)
&\sim
i^{k-j-1}(\lambda_{j,k}+\kappa_0)
e^{\xi_c(x,\lambda)/\epsilon}\\
\frac{\partial[V_0^{1/4}(x,\lambda)w(x,\lambda,\epsilon)]}{\partial x}
&\sim
\frac{i^{k-j-1}}{\epsilon}(\lambda_{j,k}+\kappa_0)
V_0^{1/2}(x,\lambda)e^{\xi_c(x,\lambda)/\epsilon}
\end{align*}
where $\kappa_0$ is independent of $x$ and subject to the bound
\be\nn
|\kappa_0|\leq(1+\lambda_{j,k}^2)^{1/2}
\bigg(
\exp
\Big\{
\frac{\ro_{j,k}}{3|\lambda_{j,k}|}\epsilon^{2/3}\mathcal{V}_{\mathscr{P}}[H]
\Big\}
-1
\bigg).
\ee
\end{theorem}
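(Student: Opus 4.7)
The strategy is to represent $w$ through the modified-Bessel-function approximation of \S\ref{sols-1-turn-pt}, invoke the MBF connection formulas to cross from $\mathscr{D}_j$ into $\mathscr{D}_k$, and then read off the resulting Stokes coefficient together with its error from the remainder $R$.

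First, I would apply the MBF approximation theorem of \S\ref{sols-1-turn-pt} with $a_j$ as reference point and coefficients $a=1$, $b=0$, producing a solution
$\tilde{w}(x,\lambda,\epsilon)=\hat{f}^{-1/4}(x,\lambda)\bigl[U_j(\epsilon^{-2/3}\zeta)+R(\zeta,\lambda,\epsilon)\bigr]$
that is holomorphic on $\mathscr{D}\setminus\{c\}$. Using the recessive asymptotics of $U_j$ in $S_j$ from appendix \ref{mod_bessel}, together with the identifications $\hat{f}^{1/4}\propto V_0^{1/4}\zeta^{-1/4}$ and $\xi_c=\tfrac{2}{3}\zeta^{3/2}$ coming from \eqref{zeta-olver} and \eqref{xi-c-func}, the product $V_0^{1/4}\tilde{w}$ and its $x$-derivative are seen to match the prescribed asymptotics of Lemma \ref{unique-solution-lemma} as $x\to a_j$ along $\mathscr{P}$ (up to a harmless constant that is absorbed into the standard normalization of $U_j$). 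By the uniqueness clause of that lemma, $\tilde{w}\equiv w$.

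Next, I would analytically continue $w$ along $\mathscr{P}$ across the Stokes curves bounding $\mathscr{D}_j$ into $\mathscr{D}_k$. Crossing the $k-j$ successive principal regions rotates the branch of $V_0^{1/4}$ by a factor $i^{k-j-1}$ (one factor of $i$ per rotation of $\zeta$ through $2\pi/3$, cf.\ the final sentence of the MBF approximation theorem of \S\ref{sols-1-turn-pt}), while on the Bessel-function side one substitutes the appendix connection identity $U_j(t)=\lambda_{j,k}\,V_k(t)+\mu_{j,k}\,U_k(t)$, where $V_k$ is the solution dominant in $S_k$ with $V_k(t)\sim e^{t}(1+o(1))$, $\lambda_{j,k}$ is the constant in \eqref{lambda-j-k}, and $\mu_{j,k}$ is an irrelevant constant. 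As $x\to a_k$ along $\mathscr{P}$, the piece $\mu_{j,k}U_k(\epsilon^{-2/3}\zeta)$ is exponentially subdominant, so the leading behaviour is $\lambda_{j,k}V_k(\epsilon^{-2/3}\zeta)$, which reassembles with the rotated $\hat{f}^{-1/4}$ to give the announced $i^{k-j-1}\lambda_{j,k}\,e^{\xi_c/\epsilon}$ prefactor (and an analogous expression for the derivative, obtained by differentiating $V_k$ term-by-term).

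Finally, $\kappa_0$ collects the contribution of $R(\zeta,\lambda,\epsilon)$ to the $e^{\xi_c/\epsilon}$-coefficient as $x\to a_k$. Applying estimate \eqref{error-bounds-olver} with $a=1$, $b=0$, the factor $\sigma_{j,k}$ of \eqref{sigma-j-k} reduces to the supremum of $(\Omega\,\msf_{j,k}/\esf_{j,k})|U_j|$ over $\zeta(\mathscr{P})$; the MBF inequalities of appendix \ref{mod_bessel} yield a uniform bound of the form $(1+\lambda_{j,k}^2)^{1/2}$ for this quantity, and combining with the exponential variation factor in \eqref{error-bounds-olver} gives the stated bound on $|\kappa_0|$. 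The main obstacle, as I see it, is precisely the verification that $\sigma_{j,k}\le(1+\lambda_{j,k}^2)^{1/2}$: this requires the sharp MBF estimates on the common Stokes curve separating $\mathscr{D}_j$ from $\mathscr{D}_k$ (where $\xi_c\in i\mathbb{R}$ and both $U_j$ and $V_k$ remain comparable in modulus), and is the step where the geometric constant $\lambda_{j,k}$ enters the error bound in addition to appearing in the leading term.
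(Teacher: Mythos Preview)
Your approach differs substantially from the paper's. The paper does not reconstruct the argument at all: its proof is a one-line invocation of Theorem~5.1 in Olver \cite{olver1978} (with $m=3$), together with the verification of three hypotheses --- that $\re\xi_c(x,\lambda)\to+\infty$ as $x\to a_j$ and as $x\to a_k$ along $\mathscr{P}$, and that $\mathcal{V}_{a_j,x}[F]$ converges as $x\to a_j$ along $\mathscr{P}$. What you sketch is effectively the content of Olver's own proof of that theorem: the MBF representation with $a=1$, $b=0$, identification with the recessive solution of Lemma~\ref{unique-solution-lemma} by uniqueness, the sector connection formula for $U_j$, and the control of $\sigma_{j,k}$ via the modulus bounds of appendix~\ref{mod_bessel}. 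That outline is sound, and your identification of the estimate $\sigma_{j,k}\le(1+\lambda_{j,k}^2)^{1/2}$ as the crux is accurate. A couple of your details are imprecise, however: since $\mathscr{P}\subset\mathscr{D}_j\cup\mathscr{D}_k$ and there are only three principal regions around a simple turning point, the path crosses a single Stokes curve rather than ``$k-j$ successive'' ones, and the factor $i^{k-j-1}$ arises from the interplay between the branch of $\zeta^{1/4}$ hidden in $\hat f^{-1/4}$ and the phase convention $U_j(t)=U(te^{-2\pi i j/3})$, not from iterated rotations. For the purposes of the present paper, all of that analysis is simply delegated to \cite{olver1978}.
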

\begin{proof}
The proof of this theorem follows from Theorem 5.1 in \cite{olver1978} merely by checking 
that
\begin{itemize}
\item
$\re\xi_c(x,\lambda)\rightarrow+\infty$ as $x\rightarrow a_j$ along $\mathscr{P}$,
\item
$\mathcal{V}_{a_j,x}[F]$ converges as $x\rightarrow a_j$ along $\mathscr{P}$ and 
\item
$\re\xi_c(x,\lambda)\rightarrow+\infty$ as $x\rightarrow a_k$ along $\mathscr{P}$.
\end{itemize}
\end{proof}

\subsection{The asymptotic form of the connection theorem}
\label{connection-asymptotics}

In this section, we  construct asymptotic estimates for the error-terms in 
Theorem \ref{connection-of-sols} for small $\epsilon>0$ by applying the results from 
\S\ref{asymptt-estim-for-error}. Let us first fix some notation that will be used in this and 
the following sections. We have
\begin{itemize}
\item
We use the notation ``$\doteq$" to state that besides the validity of the 
equation shown, the corresponding equation obtained by formally differentiating with 
respect to $x$, ignoring the differentiation of the $\asympt$-terms present, also stands 
true.
\item
When an $\asympt$-term appears in an equation it is understood to hold uniformly for 
all the values of $x$ associated with that equation.
\item
With $\chi(\epsilon)$,  we denote any positive function of $\epsilon$ such that 
as $\epsilon\downarrow0$ it satisfies 
\be
\label{chi-epsilon}
\chi(\epsilon)\rightarrow0\quad\text{and}\quad 
1/\chi(\epsilon)=\asympt(\epsilon^{-1}).
\ee
\item
Lastly, let $a, b$ be two points on a path $\mathscr{P}$. We write $[a,b]_{\mathscr{P}}$, 
$(a,b)_{\mathscr{P}}$, $[a,b)_{\mathscr{P}}$ and $(a,b]_{\mathscr{P}}$ to denote the 
part of $\mathscr{P}$ that lies between $a$ and $b$, with endpoints included 
or not, accordingly.
\end{itemize}
With these in mind, we formulate the following theorem.
\begin{theorem}
\label{connection-theorem-of-olver}
For $\epsilon>0$, consider the differential equation 
\be
\label{diff-eq-olver-spec}
\frac{d^2w}{dx^2}=[\epsilon^{-2}V_0(x,\lambda)+g(x,\lambda)]w,
\quad
(x,\lambda)\in\mathscr{D}\times\mathscr{O}
\ee
and  assume Assumption \ref{assume-one-tp}. Also, let $j, k\in\{0,1,2\}$ with $k\neq j$ 
and consider a progressive path $\mathscr{P}$ in $\mathscr{D}_j\cup\mathscr{D}_k$.
Pick points $a_j, b_j, b_k, a_k$, in that order on $\mathscr{P}$, neither of which depends 
on $\epsilon$ nor coincides with $c$, and such that $a_j\in\clos(\mathscr{D}_j)$, 
$a_k\in\clos(\mathscr{D}_k)$, $b_j\in\mathscr{D}_j$ and $b_k\in\mathscr{D}_k$ 
[thus $a_j, a_k$ (but not $b_j, b_k$) may be boundary points of $\mathscr{D}$, including 
the point at infinity]. Finally, let the function $w(x,\lambda,\epsilon)$ denote the solution of 
the differential equation (\ref{diff-eq-olver-spec}) (as provided by 
Lemma \ref{unique-solution-lemma}).Then, on $[b_k,a_k)_{\mathscr{P}}$, the 
analytic continuation of $w(x,\lambda,\epsilon)$ [obtained by passing from $\mathscr{D}_j$ 
to $\mathscr{D}_k$ in the same sense as the sign of $(k-j)$], is given by
\begin{align*}
\text{case A}:\quad
w(x,\lambda,\epsilon)
&=
w_{\rm I}(x,\lambda,\epsilon)
\quad\text{or}
\\
\text{case B}:\quad
w(x,\lambda,\epsilon)
&=
w_{\rm I,1}(x,\lambda,\epsilon)+w_{\rm L}(x,\lambda,\epsilon)
\quad\text{or}
\\
\text{case C}:\quad
w(x,\lambda,\epsilon)
&=
w_{\rm I,-1}(x,\lambda,\epsilon)+w_{\rm R}(x,\lambda,\epsilon)
\end{align*}
depending on whether $b_k$ is an interior point of $\mathscr{D}_k$ (\text{case A}),  or 
$[b_k,a_k)_{\mathscr{P}}$ lies on the left boundary of $\mathscr{D}_k$ (\text{case B}), or 
 $[b_k,a_k)_{\mathscr{P}}$ lies on the right boundary of $\mathscr{D}_k$ (\text{case C}). 
Here, $w_{\rm I}(x,\lambda,\epsilon)$, $w_{\rm I,\pm1}(x,\lambda,\epsilon)$, 
$w_{\rm L}(x,\lambda,\epsilon)$ and $w_{\rm R}(x,\lambda,\epsilon)$ are solutions of the 
differential equation (\ref{diff-eq-olver-spec}) so that on $[b_k,a_k)_{\mathscr{P}}$ they 
have the following asymptotic forms as $\epsilon\downarrow0$
\begin{equation*}
\begin{rcases}
        V_0^{1/4}(x,\lambda)w_{\rm I}(x,\lambda,\epsilon)\\
        V_0^{1/4}(x,\lambda)w_{\rm I,1}(x,\lambda,\epsilon)\\
        V_0^{1/4}(x,\lambda)w_{\rm I,-1}(x,\lambda,\epsilon)
\end{rcases}
\doteq
i^{k-j-1}
[\lambda_{j,k}+\asympt(\hat{\chi})]
e^{\xi_c(x,\lambda)/\epsilon}
\end{equation*}
\begin{equation*}
V_0^{1/4}(x,\lambda)w_{\rm L}(x,\lambda,\epsilon)
\doteq
i^{k-j}
[\lambda_{j,k+1}+\asympt(\hat{\chi})]
e^{-\xi_c(x,\lambda)/\epsilon}
\end{equation*}
\begin{equation*}
V_0^{1/4}(x,\lambda)w_{\rm R}(x,\lambda,\epsilon)
\doteq
-i^{k-j}
[\lambda_{j,k-1}+\asympt(\hat{\chi})]
e^{-\xi_c(x,\lambda)/\epsilon}.
\end{equation*} 
In these relations, $\lambda_{j,k}$ are given by (\ref{lambda-j-k}), $V_0^{1/4}(x,\lambda)$ denotes 
the branch obtained from that used in Lemma \ref{unique-solution-lemma} [equivalently 
see (\ref{sol-asympt-at-initial-pt}) below] by analytic continuation 
in the same manner as for $w(x,\lambda,\epsilon)$ and the function 
$\hat{\chi}=\hat{\chi}(\epsilon)$ is defined by
\be\label{chi-hat-epsilon}
\hat{\chi}(\epsilon)=\max\{\epsilon, \chi(\epsilon)\}.
\ee
\end{theorem}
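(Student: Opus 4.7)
The plan is to combine Theorem \ref{connection-of-sols} with the variational estimate $\mathcal{V}_{\mathscr{P}}[H](\lambda,\epsilon) = \asympt(\epsilon^{1/3})$ established in \S\ref{asymptt-estim-for-error}. Substituting this into the bound on $\kappa_0$ from Theorem \ref{connection-of-sols} gives
\begin{equation*}
|\kappa_0| \leq (1+\lambda_{j,k}^2)^{1/2}\Bigl(\exp\bigl\{\tfrac{\rho_{j,k}}{3|\lambda_{j,k}|}\epsilon^{2/3}\mathcal{V}_{\mathscr{P}}[H]\bigr\}-1\Bigr) = \asympt(\epsilon) \subseteq \asympt(\hat\chi),
\end{equation*}
the inclusion holding because $\hat\chi(\epsilon) \geq \epsilon$ by definition. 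The analogous bound for $\partial[V_0^{1/4}w]/\partial x$ supplied by Theorem \ref{connection-of-sols} delivers the ``$\doteq$'' version throughout with no extra work.

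For Case A, the segment $[b_k, a_k)_{\mathscr{P}}$ lies in the interior of $\mathscr{D}_k$, where $\re\xi_c$ is strictly monotone and remains bounded away from $0$ (since $b_k$ is interior and $\epsilon$-independent). Re-running the argument of Theorem \ref{connection-of-sols} with each point of $[b_k, a_k)_{\mathscr{P}}$ in place of the endpoint --- the variational bound is monotone along the path and stays $\asympt(\epsilon^{1/3})$ --- yields the claimed uniform asymptotic on that segment, with $w_{\rm I} := w$.

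For Cases B and C, the segment $[b_k, a_k)_{\mathscr{P}}$ sits on a Stokes curve, so $\re\xi_c \equiv 0$ and both $e^{\pm\xi_c/\epsilon}$ have unit modulus; a two-term expansion is then forced. In Case B, the segment is simultaneously the right boundary of $\mathscr{D}_{k'}$ with $k' := (k+1) \bmod 3$. I would introduce a second progressive path $\mathscr{P}'$ that coincides with $\mathscr{P}$ on $[a_j, b_k]_{\mathscr{P}}$ but veers into the interior of $\mathscr{D}_{k'}$, and reapply Theorem \ref{connection-of-sols} along $\mathscr{P}'$ to obtain a single-term asymptotic of $w$ in the adjacent region, in terms of a branch $\xi_c^{(k')}$ of the action. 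The key observation is that $\xi_c^{(k')} = -\xi_c$ on the common Stokes curve (a branch flip), so this second representation, pulled back to the boundary and combined with the first, yields the decomposition $w = w_{\rm I,1} + w_{\rm L}$, with the phases $i^{k-j-1}$ and $i^{k-j}$ and the amplitudes $\lambda_{j,k}$ and $\lambda_{j,k+1}$ emerging from the connection relations among the modified Bessel functions $U_j, U_k, U_{k+1}$ of appendix \ref{mod_bessel}. Case C is treated symmetrically with $k' := (k-1) \bmod 3$, which produces the extra minus sign in front of $w_{\rm R}$.

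The main obstacle is the careful bookkeeping of branch conventions in Cases B and C: tracking how $V_0^{1/4}$ changes under the rotation by $2(k-j)\pi/3$ prescribed for analytic continuation from $\mathscr{D}_j$ to $\mathscr{D}_k$, how $\xi_c$ reverses sign across each Stokes curve, and how these combine with the $U_j$-connection formulas to yield the precise prefactors $i^{k-j-1}$, $i^{k-j}$, $-i^{k-j}$ appearing in the statement. Once the phase accounting is aligned with the conventions introduced in \S\ref{olver-connection}, the derivative bounds of Theorem \ref{connection-of-sols} propagate the $\asympt(\hat\chi)$ error into the ``$\doteq$'' claim automatically.
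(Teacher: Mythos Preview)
Your approach is different from the paper's and, while plausible in spirit, has a genuine gap. The paper does not derive this result from Theorem~\ref{connection-of-sols} at all: it simply invokes Theorem~6.1 of Olver~\cite{olver1978} (with $m=3$ and $g$ analytic at $c$) and records that the hypotheses of that theorem are met, namely that $1/\xi_c$ and $F$ have bounded variation on $(a_j,b_j]_{\mathscr{P}}$ and $[b_k,a_k)_{\mathscr{P}}$, and that $w$ satisfies the initial asymptotic \eqref{sol-asympt-at-initial-pt} together with the appropriate limit behaviour at $a_j$. All the work sits in the cited reference.

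The gap in your route is the passage from Theorem~\ref{connection-of-sols} to a \emph{uniform} statement on $[b_k,a_k)_{\mathscr{P}}$. Theorem~\ref{connection-of-sols} only asserts an asymptotic \emph{as $x\to a_k$ along $\mathscr{P}$}; its proof uses the hypothesis $\re\xi_c(x,\lambda)\to+\infty$ at the terminal end, which fails if you try to ``re-run with each point of $[b_k,a_k)_{\mathscr{P}}$ in place of the endpoint'' since $\re\xi_c$ is finite at interior points. To get uniformity on the segment one must instead return to the sharper error bounds \eqref{error-bounds-olver} involving the modified Bessel auxiliary functions $\msf_{j,k}$, $\esf_{j,k}$, $\hat\nsf_{j,k}$ and use their asymptotics in $S_k$ (interior and boundary separately), which is precisely what Olver's Theorem~6.1 packages. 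Your treatment of Cases~B and~C has the right geometric idea (exploit the adjacent sector $\mathscr{D}_{(k\pm1)\bmod 3}$ and the sign flip of $\xi_c$ across the Stokes curve), but the ``combination'' step producing the two-term decomposition $w_{\rm I,\pm1}+w_{\rm L/R}$ with the specific prefactors requires the full $U_j$--$U_k$--$U_{k\pm1}$ connection relations applied to the Bessel-based representation of $w$, not merely two applications of the one-term limit in Theorem~\ref{connection-of-sols}.
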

\begin{proof}
For the proof, the reader can refer to Theorem 6.1 in \cite{olver1978} with $m=3$ and 
the function $g$ analytic at $c$. It suffices to see that the following are true. 
\begin{itemize}
\item[(i)]
The function $1/\xi_c(x,\lambda)$ [see (\ref{xi-c-func})]  is of bounded variation on 
$(a_j,b_j]_{\mathscr{P}}$ and $[b_k,a_k)_{\mathscr{P}}$
\item[(ii)]
The function $F(x,\lambda)$ given by (\ref{F-func}) is of bounded variation on 
$(a_j,b_j]_{\mathscr{P}}$ and $[b_k,a_k)_{\mathscr{P}}$ and
\item[(iii)]
The function $w(x,\lambda,\epsilon)$ satisfies the following.
\begin{itemize}
\item
For $x\in(a_j,b_j]_{\mathscr{P}}$
\be\label{sol-asympt-at-initial-pt}
V_0^{1/4}(x,\lambda)w(x,\lambda,\epsilon)\doteq
[1+\asympt(\chi)]
e^{-\xi_c(x,\lambda)/\epsilon}
\quad\text{as}\quad\epsilon\downarrow0
\ee
\item
As $x\rightarrow a_j$ along $\mathscr{P}$, the functions
\begin{align*}
e^{\xi_c(x,\lambda)/\epsilon}&
V_0^{1/4}(x,\lambda)
w(x,\lambda,\epsilon), \\
V_0^{-1/2}(x,\lambda)
e^{\xi_c(x,\lambda)/\epsilon}&
\frac{\partial[V_0^{1/4}(x,\lambda)w(x,\lambda,\epsilon)]}{\partial x}
\end{align*}
tend to non-zero finite limits.
\end{itemize}
\end{itemize}
\end{proof}
\begin{remark}
Even though $w_{\rm I}(x,\lambda,\epsilon)$ and $w_{\rm I,\pm1}(x,\lambda,\epsilon)$ have 
the same asymptotics as $\epsilon\downarrow0$ on $[b_k,a_k)_{\mathscr{P}}$, it should be 
emphasized that they do represent distinct solutions of equation (\ref{diff-eq-olver-spec}). 
\end{remark}

\subsection{Application of the connection theorem}
\label{application-of-connection}

In this section, we apply Theorem \ref{connection-theorem-of-olver} to prove
a Bohr-Sommerfeld type quantization condition for the eigenvalues that lie 
near zero. Once again, we begin by some geometric formulations that will pave the way 
to the final result.

Let now $\mathscr{D}$ denote a simply connected open neighborhood of the real axis that 
contains only two simple turning points of our differential equation. Let us be more precise 
and make the following assumption.
\begin{assumption}
\label{assume-two-tp}
Consider the differential equation
\be\nn
\frac{d^2w}{dx^2}=[\epsilon^{-2}V_0(x,\lambda)+g(x,\lambda)]w,
\quad
(x,\lambda)\in\mathscr{D}\times\mathscr{O}
\ee
where 
\begin{itemize}
\item
The functions $V_0(x,\lambda)$ and $g(x,\lambda)$ are given by
\be\nn
V_0(x,\lambda)=
-\Big\{A(x)^2+[\lambda+\tfrac{A'(x)}{2}]^2\Big\}
\ee
\be\nn
g(x,\lambda)=
\frac{3}{4}\bigg\{ \frac{A'(x)-i\frac{A''(x)}{2}}{A(x)-i[\lam+\tfrac{A'(x)}{2}]} \bigg\}^2
-
\frac{1}{2}\frac{A''(x)-i\frac{A'''(x)}{2}}{A(x)-i[\lam+\tfrac{A'(x)}{2}]}
\ee
for $A(x)=\sech(2x)$.
\item
The set $\mathscr{O}$ is an open and simply connected set 
(of the $\lambda$-plane), proper subset of ${\mathcal R}_0\cap\mathbb{H}^+$ 
[see (\ref{r0}) with $S(x)=A(x)=\sech(2x)$]  located  in such a way that its intersection 
with $\tilde{\Lambda}_{12}$ is a single interval and moreover it does not have common 
points with either $\tilde{\Lambda}_{16}$ or $\tilde{\Lambda}_{26}$. Also, let $c\equiv x_1$ 
and $\hat{c}\equiv x_2$ be the two distinct simple zeros of $g_-$ as introduced in 
\S\ref{geometry} [see Remark \ref{which-is-which}; recall that $V_0=g_-g_+$ where 
$g_\pm$ are given by (\ref{gigi})].
\item
The set $\mathscr{D}\subset\C$ is an open 
and simply connected domain (of the $x$-plane) so that it contains the real axis and 
that the only transition points of the differential equation in it, are $c$ and $\hat{c}$. 
\end{itemize}
\end{assumption}
\begin{remark}
Observe that the assumption above implies that
\begin{itemize}
\item[(i)]
the function $(x-c)^{-1}(x-\hat{c})^{-1}V_0(x,\lambda)$ is holomorphic 
and non-vanishing throughout $\mathscr{D}$ (including $c, \hat{c}$) and
\item[(ii)]
the function $g(x,\lambda)$ is holomorphic in $\mathscr{D}$. 
\end{itemize}
\end{remark}

Let the WKB approximation of a solution of our differential equation be given 
at a point in $\clos(\mathscr{D})$ other than one of the turning points. Then the WKB 
approximation of the same solution at any other point (except at a turning point) can 
be found by at most two applications of Theorem \ref{connection-theorem-of-olver}.

As was done previously, we associate with the two transition points the functions 
\begin{align*}
\xi_c(x,\lambda)
&=
\int_{\gamma(c\rightsquigarrow x)}V_0^{1/2}(t,\lambda)dt\\
\xi_{\hat{c}}(x,\lambda)
&=
\int_{\sigma(\hat{c}\rightsquigarrow x)}V_0^{1/2}(t,\lambda)dt.
\end{align*}
For the former, namely $\xi_c(x,\lambda)$, we take a branch of the integral so that 
$\re\xi_c(x,\lambda)\geq0$ and so that it makes $\xi_c(x,\lambda)$ continuous 
except at $\re\xi_c(x,\lambda)=0$. The last relation defines a set of three Stokes 
curves emanating from $c$. We refer to any Stokes curve emanating from the turning 
point $c$ as a \textit{$c$-Stokes curve}
\footnote{
Olver calls them \textit{$c$-principal curves}.}. 
We argue similarly for $\xi_{\hat{c}}(x,\lambda)$.

On the $\xi$-plane,
where  
\be\nn
\xi(x,\lambda)=
\int V_0^{1/2}(x,\lambda)dx
\ee
in which the integration constant is arbitrary and the branch is chosen in a continuous 
way, we have the following. The $c$-Stokes curves and the $\hat{c}$-Stokes curves are 
mapped to straight lines parallel to the imaginary axis. Conformal mapping theory shows 
that the $c$-Stokes curves intersect on the same Riemann sheet only at $c$ and a similar 
assertion holds for $\hat{c}$. Also, $c$-Stokes curves and $\hat{c}$-Stokes curves do not 
intersect on the same sheet. However, in our case $c$ and $\hat{c}$ are linked together 
by a common Stokes curve: the interval $[\xi(c), \xi(\hat{c})]$ (on the $\xi$-plane) is 
parallel to the imaginary axis.

\begin{figure}
\centering
\includegraphics[bb=-100 0 710 240, width=14cm]{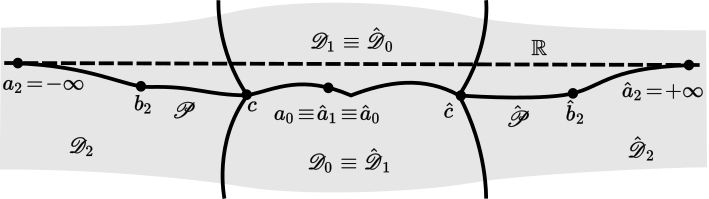}
\caption{The configuration on the $x$-plane for the case of near-zero eigenvalues.}
\label{stokes4}
\end{figure}

Now, we assemble all the results of this paragraph and formulate the theorem 
about the eigenvalues that lie near zero.
\begin{theorem}
\label{theorem-evs-near-zero-ala-olver}
Consider Assumption \ref{assume-two-tp}. Then for any $\lambda_0\in\mathscr{O}$, there 
exists a complex neighborhood $U$ of $\lambda_0$ such that $\lambda\in U$ is an 
eigenvalue of $\mathfrak{D}_{\epsilon}$ if and only if
\be\nn
n(\lambda,\epsilon)e^{2\xi_c(\hat{c})/\epsilon}=1
\ee
where as $\epsilon\downarrow0$, the function $n(\lambda,\epsilon)$ satisfies 
\be\nn
n(\lambda,\epsilon)=-1+\asympt(\epsilon).
\ee
\end{theorem}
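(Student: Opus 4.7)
The plan is to carry out the Bohr--Sommerfeld quantization argument of Theorem \ref{BS} within the Olver framework of \S\ref{sols-1-turn-pt}--\S\ref{connection-asymptotics}. Via the change of variables (\ref{w-plus-minus}) and the decay of $A$ and its derivatives at $\pm\infty$, eigenvalues of $\mathfrak{D}_\epsilon$ in a complex neighborhood of $\lambda_0\in\mathscr{O}$ correspond bijectively to those $\lambda$ for which (\ref{eq-for-spec-olver}) admits a solution that is simultaneously recessive at $x=-\infty$ and at $x=+\infty$. This amounts to the vanishing of the Wronskian of the two canonically normalized recessive solutions at the two ends.

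First, I would apply Lemma \ref{unique-solution-lemma} at each turning point to define $w_-(x,\lambda,\epsilon)$, the unique solution recessive along $C^-$ at $-\infty$ (in the principal region of $c=x_1$ that contains $C^-$), and $w_+(x,\lambda,\epsilon)$, the unique solution recessive along $C^+$ at $+\infty$ (in the principal region of $\hat c=x_2$ that contains $C^+$). The admissibility of the contour $C=C^-\cup C^0\cup C^+$ from \S\ref{geometry} (see Figure \ref{lambda-02i-and-x-planes}) furnishes the progressive paths joining $-\infty$ to $+\infty$ through $c$, the Stokes segment $C^0$, and $\hat c$ that are needed at every step of what follows.

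Next, I would propagate $w_-$ through both turning points by two successive applications of Theorem \ref{connection-theorem-of-olver}. The first application, at $c$, crosses the Stokes line issuing from $c$ toward $\hat c$ and converts $w_-$ from recessive to dominant form on the opposite side, producing a connection factor of the form $i^{k-j-1}[\lambda_{j,k}+\asympt(\hat\chi)]$. Using the identity $\xi_c(x)=\xi_c(\hat c)+\xi_{\hat c}(x)$ one rewrites the expansion in terms of $\xi_{\hat c}$, thereby isolating the crucial exponential prefactor $e^{\xi_c(\hat c)/\epsilon}$. The second application, at $\hat c$, carries the representation into the principal region of $\hat c$ along $C^+$ and yields a second connection factor of the same type. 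Comparing the leading behavior of the resulting expansion with the normalization of $w_+$ (equivalently, imposing $W[w_-,w_+]=0$) gives a condition of the form
\[
n(\lambda,\epsilon)\,e^{2\xi_c(\hat c)/\epsilon}=1,
\]
where $n(\lambda,\epsilon)$ is the product of the two leading connection factors.

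For the asymptotic value of $n$, each connection factor equals $\pm 1$ to leading order; since both $c$ and $\hat c$ are zeros of $g_-$ by Remark \ref{which-is-which}, the signs combine (exactly as in the $\delta(\alpha,\beta)=-1$ branch of Theorem \ref{BS}) to give $-1$. For the error, the variation estimate $\mathcal{V}_{\mathscr{P}}[H]=\asympt(\epsilon^{1/3})$ from \S\ref{asymptt-estim-for-error}, combined with the factor $\epsilon^{2/3}$ appearing in the exponent of the bound (\ref{error-bounds-olver}), yields $\asympt(\epsilon)$ corrections at each connection step, whence $n(\lambda,\epsilon)=-1+\asympt(\epsilon)$ uniformly on a small neighborhood of $\lambda_0$. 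The main obstacle I anticipate is the branch-and-orientation bookkeeping across the two connection steps: one must fix the branch of $V_0^{1/4}$ and the labeling of principal regions at $c$ consistently with those at $\hat c$ via the shared Stokes line $C^0$ (cf.\ Figure \ref{stokes4}), so that the two leading $\pm 1$ factors multiply to $-1$ rather than $+1$.
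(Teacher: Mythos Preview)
Your outline has a structural gap at the first connection step. You propose to apply Theorem \ref{connection-theorem-of-olver} at $c$ in the interior sense (Case A), obtaining a \emph{single} dominant term $w_-\sim C_1\,e^{\xi_c(x)/\epsilon}$, and then to apply the theorem again at $\hat c$. But Theorem \ref{connection-theorem-of-olver} (via Lemma \ref{unique-solution-lemma}) is stated for the solution that is \emph{recessive} at the starting point $a_j$; the single dominant output of Case A cannot be fed back in as input at $\hat c$. Moreover, if $w_-$ really emerged on the far side of $\hat c$ as a single dominant term, its Wronskian with the recessive $w_+$ would be nonzero to leading order for every $\lambda$, and no quantization condition would appear at all. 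The condition $n\,e^{2\xi_c(\hat c)/\epsilon}=1$ does not arise as a \emph{product} of two scalar connection factors; it arises from the vanishing of a \emph{sum} of two exponential contributions.

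What the paper actually does is apply the connection at $c$ in Case B, landing on the common Stokes segment $C^0$, so that $w_-=w_{\rm I,1}^{(c)}+w_{\rm L}^{(c)}$ splits into two pieces. The piece $w_{\rm L}^{(c)}\sim e^{-\xi_c/\epsilon}$ is recessive when viewed from $\hat{\mathscr{D}}_1$, while $w_{\rm I,1}^{(c)}\sim e^{\xi_c/\epsilon}$ is recessive when the same boundary point is regarded as lying in $\hat{\mathscr{D}}_0$ (this sign flip of $\xi_{\hat c}$ across the Stokes line is the key device). Each piece is then pushed through $\hat c$ by a separate application of Theorem \ref{connection-theorem-of-olver}, producing two dominant terms at $+\infty$ with prefactors $e^{-\xi_c(\hat c)/\epsilon}$ and $e^{+\xi_c(\hat c)/\epsilon}$ respectively. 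Requiring their sum to vanish gives $[1+\asympt(\hat\chi)]+[1+\asympt(\hat\chi)]\,e^{2\xi_c(\hat c)/\epsilon}=0$, whence $n=-1+\asympt(\epsilon)$. The $-1$ therefore comes from solving $A+B=0$, not from multiplying signs determined by $\delta(\alpha,\beta)$; your invocation of Remark \ref{which-is-which} and Theorem \ref{BS} for the sign is not the operative mechanism here. Your remark about the $\asympt(\epsilon)$ error via $\epsilon^{2/3}\mathcal V_{\mathscr P}[H]=\asympt(\epsilon)$ is correct, but the argument must be reorganized around the two-term Case B splitting.
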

\begin{proof}
Figure \ref{stokes4} shows the geometric configuration in the complex $x$-plane for our case, 
along with the labeling of the corresponding principal regions. In particular, there are 
two regions, namely $\mathscr{D}_0=\hat{\mathscr{D}}_1$ and 
$\mathscr{D}_1=\hat{\mathscr{D}}_0$, that have both $c$ and $\hat{c}$ on their 
boundaries, each being a principal region with respect to either turning point. 

We would like to connect the WKB form of a solution on the segment 
$(-\infty,b_2]_{\mathscr{P}}$ of a progressive path $\mathscr{P}\in\mathscr{D}_2$, 
to the the WKB form of the same solution on the segment 
$[\hat{b}_2,+\infty)_{\hat{\mathscr{P}}}$ of a progressive path 
$\hat{\mathscr{P}}\in\hat{\mathscr{D}}_2$. The essential observation for one to 
notice is the relation between $\xi_c$ and $\xi_{\hat{c}}$ in the intermediate area. 
We have
\be\nn
\xi_c(x)-\xi_{\hat{c}}(x)=\xi_c(\hat{c})=-\xi_{\hat{c}}(c),
\quad
x\in\mathscr{D}_0\cup\mathscr{D}_1.
\ee

We take an arbitrary point $a_0$ on the common Stokes curve linking $c$ and 
$\hat{c}$. Then, $a_0$ can be joined to $b_2$ by an extension of $\mathscr{P}$ 
that passes through $c$ and is progressive; now $[c,a_0]_{\mathscr{P}}$ coincides 
with the common Stokes curve. By applying Theorem \ref{connection-theorem-of-olver} 
(with $k=0$ and $j=2$) we find that the WKB form of $w(x,\lambda,\epsilon)$ at $a_0$ 
is given by
\be\nn
w(x,\lambda,\epsilon)=
w_{\rm I,1}^{(c)}(x,\lambda,\epsilon)+
w_{\rm L}^{(c)}(x,\lambda,\epsilon)
\ee 
where as $\epsilon\downarrow0$ we have the asymptotics
\begin{align*}
V_0^{1/4}(x,\lambda)w_{\rm I,1}^{(c)}(x,\lambda,\epsilon)
&\doteq
i^{-3}
[\lambda_{2,0}+\asympt(\hat{\chi})]
e^{\xi_c(x,\lambda)/\epsilon}
\\
V_0^{1/4}(x,\lambda)w_{\rm L}^{(c)}(x,\lambda,\epsilon)
&\doteq
i^{-2}
[\lambda_{2,1}+\asympt(\hat{\chi})]
e^{-\xi_c(x,\lambda)/\epsilon}.
\end{align*}
where $\hat{\chi}(\epsilon)$ is given by (\ref{chi-hat-epsilon}) and (\ref{chi-epsilon}).

To prepare for passage through $\hat{c}$, we observe that 
$\mathscr{D}_0=\hat{\mathscr{D}}_1$. Consequently, we relabel $a_0$ now naming 
it $\hat{a}_1$ and extend $\hat{\mathscr{P}}$, still progressive, to pass through $\hat{c}$ 
and continue along the common Stokes  curve until $\hat{a}_1$ is reached. In the WKB 
form of $w(x,\lambda,\epsilon)$ at $\hat{a}_1$ found by passage through $c$, there 
are two terms: $w_{\rm I,1}^{(c)}(x,\lambda,\epsilon)$ and 
$w_{\rm L}^{(c)}(x,\lambda,\epsilon)$. Let us find how these two behave when they 
pass through $\hat{c}$. 

The contribution from $w_{\rm L}^{(c)}(x,\lambda,\epsilon)$ to the WKB form on 
$[\hat{b}_2,+\infty)_{\hat{\mathscr{P}}}$ is obtained by replacing 
$e^{-\xi_c(x,\lambda)/\epsilon}$ by 
$e^{-\xi_c(\hat{c},\lambda)/\epsilon}e^{-\xi_{\hat{c}}(x,\lambda)/\epsilon}$ 
and applying Theorem \ref{connection-theorem-of-olver} (with $c$ replaced by 
$\hat{c}$ and $k=2, j=1$). We find
\be\nn
w_{\rm L}^{(c)}(x,\lambda,\epsilon)=w_{\rm I}^{(\hat{c})}(x,\lambda,\epsilon)
\ee 
where as $\epsilon\downarrow0$ we have
\be\nn
V_0^{1/4}(x,\lambda)w_{\rm I}^{(\hat{c})}(x,\lambda,\epsilon)
\doteq
i^{0}
[\lambda_{1,2}+\asympt(\hat{\chi})]
e^{-\xi_c(\hat{c},\lambda)/\epsilon}e^{\xi_{\hat{c}}(x,\lambda)/\epsilon}.
\ee

To handle the contribution from $w_{\rm I,1}^{(c)}(x,\lambda,\epsilon)$, the key step is to 
regard $a_0$ as a member of $\hat{\mathscr{D}}_0$ and relabel it as $\hat{a}_0$. Since 
this entails crossing a Stokes curve, $\xi_{\hat{c}}(x,\lambda)$ is replaced by 
$-\xi_{\hat{c}}(x,\lambda)$. Thus, $e^{\xi_c(x,\lambda)/\epsilon}$ becomes 
$e^{\xi_c(\hat{c},\lambda)/\epsilon}e^{-\xi_{\hat{c}}(x,\lambda)/\epsilon}$, 
where $\hat{c}$ is regarded as a member of $\mathscr{D}_0$ in calculating 
$\xi_c(\hat{c},\lambda)\in i\R^+$ and $x\in\hat{\mathscr{D}}_0$. Now, by an application 
of Theorem \ref{connection-theorem-of-olver} (with $c$ replaced by $\hat{c}$ and 
$k=2, j=0$), we find
\be\nn
w_{\rm I,1}^{(c)}(x,\lambda,\epsilon)=w_{\rm I}^{(\hat{c})}(x,\lambda,\epsilon)
\ee 
where as $\epsilon\downarrow0$, the following asymptotic form is satisfied
\be\nn
V_0^{1/4}(x,\lambda)w_{\rm I}^{(\hat{c})}(x,\lambda,\epsilon)
\doteq
-i^2
[\lambda_{0,2}+\asympt(\hat{\chi})]
e^{\xi_c(\hat{c},\lambda)/\epsilon}e^{\xi_{\hat{c}}(x,\lambda)/\epsilon}.
\ee

Now, combining the last two results, we are capable of obtaining the behavior of 
$w(x\lambda,\epsilon)$ on $[\hat{b}_2,+\infty)_{\hat{\mathscr{P}}}$. As 
$\epsilon\downarrow0$ we find that for $x\in[\hat{b}_2,+\infty)_{\hat{\mathscr{P}}}$ 
we have
\be\nn
V_0^{1/4}(x,\lambda)w(x,\lambda,\epsilon)
\doteq
\Big\{
[1+\asympt(\hat{\chi})]e^{-\xi_c(\hat{c},\lambda)/\epsilon}+
[1+\asympt(\hat{\chi})]e^{\xi_c(\hat{c},\lambda)/\epsilon}
\Big\}
e^{\xi_{\hat{c}}(x,\lambda)/\epsilon}.
\ee
But $\lambda$ is an eigenvalue of $\mathfrak{D}_\epsilon$ if and only if 
$w(x,\lambda,\epsilon)$ is decaying as $x\rightarrow+\infty$ which is equivalent 
with the fact that the quantity inside the braces in the last result is zero.  This easily leads to
\be\nn
n(\lambda,\epsilon)e^{2\xi_c(\hat{c})/\epsilon}=1
\ee
for some function $n(\lambda,\epsilon)$ so that 
we have 
\be\nn
n(\lambda,\epsilon)=-1+\asympt(\hat{\chi})
\quad\text{as}\quad
\epsilon\downarrow0.
\ee
This completes our proof.
\end{proof}

\section{Norming Constants}
\label{norming-constants}

Here, we present the results for the semiclassical behavior of the norming constants 
that correspond to the eigenvalues (see sections \ref{evs} and \ref{evs-near-zero}) of our 
Dirac operator. These norming constants, are the proportionality constants that relate 
the two eigenfunctions corresponding to a particular eigenvalue of our operator. We 
have the following result.

\begin{corollary}
\label{norm-const-corol}
Consider a $\lambda_0$ on the asymptotic spectral arcs satisfying the condition {\bf (H1)} 
of \S\ref{bs-at-two-turn-pts} and for some complex neighborhood $U$ of $\lambda_0$ 
take $\lambda\in U$ to be an eigenvalue of $\mathfrak{D}_\epsilon$. Then the norming 
constant corresponding to $\lambda$ satisfies the following asymptotics as 
$\epsilon\downarrow0$
\be\nn
\pm1+\asympt(\epsilon).
\ee
\end{corollary}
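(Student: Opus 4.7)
The plan is to read off the norming constant directly from the connection analysis already developed in \S\ref{bs-at-two-turn-pts}, and then invoke the Bohr-Sommerfeld condition of Theorem \ref{BS}. By construction, $\mathbf{u}_0^\alpha$ is the exact WKB solution that decays along $C^-(\lambda_0)$ and $\mathbf{u}_0^\beta$ is the one that decays along $C^+(\lambda_0)$, so by Lemma \ref{linear-depend-lemma-of-u} these two solutions are linearly dependent precisely when $\lambda$ is an eigenvalue. Hence there exists a scalar $\nu=\nu(\lambda,\epsilon)$ with ${\bf u}_0^\alpha=\nu\,{\bf u}_0^\beta$, and this $\nu$ is (up to the standard normalization of Jost solutions at $\pm\infty$, which contributes only a multiplicative factor of the form $1+\CO(\epsilon)$ by Theorem \ref{formal-to-rigorous}(ii)) the norming constant associated with $\lambda$.

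First, I would apply Proposition \ref{dep} to the triples $\{{\bf u}_0^\alpha,{\bf u}_1^\alpha,{\bf u}_2^\alpha\}$ and $\{{\bf u}_0^\beta,{\bf u}_1^\beta,{\bf u}_2^\beta\}$ to get expansions
$$
{\bf u}_0^\alpha=-\frac{{\mathcal W}[{\bf u}_2^\alpha,{\bf u}_0^\alpha]}{{\mathcal W}[{\bf u}_1^\alpha,{\bf u}_2^\alpha]}{\bf u}_1^\alpha-\frac{{\mathcal W}[{\bf u}_0^\alpha,{\bf u}_1^\alpha]}{{\mathcal W}[{\bf u}_1^\alpha,{\bf u}_2^\alpha]}{\bf u}_2^\alpha,
$$
and analogously for ${\bf u}_0^\beta$, and then use the elementary relations \eqref{diagrel} to rewrite the $\beta$-expansion in the common basis $\{{\bf u}_1^\alpha,{\bf u}_2^\alpha\}$. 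Matching the coefficients of ${\bf u}_1^\alpha$ and ${\bf u}_2^\alpha$ in the identity ${\bf u}_0^\alpha=\nu\,{\bf u}_0^\beta$ produces two equivalent formulas for $\nu$,
$$
\nu=\frac{{\mathcal W}[{\bf u}_2^\alpha,{\bf u}_0^\alpha]\,{\mathcal W}[{\bf u}_1^\beta,{\bf u}_2^\beta]}{{\mathcal W}[{\bf u}_1^\alpha,{\bf u}_2^\alpha]\,{\mathcal W}[{\bf u}_2^\beta,{\bf u}_0^\beta]}\,e^{-z(\beta,\lambda,\alpha)/\epsilon}=\frac{{\mathcal W}[{\bf u}_0^\alpha,{\bf u}_1^\alpha]\,{\mathcal W}[{\bf u}_1^\beta,{\bf u}_2^\beta]}{{\mathcal W}[{\bf u}_1^\alpha,{\bf u}_2^\alpha]\,{\mathcal W}[{\bf u}_0^\beta,{\bf u}_1^\beta]}\,e^{z(\beta,\lambda,\alpha)/\epsilon},
$$
whose equality is precisely the BS condition $m(\epsilon)e^{2z/\epsilon}=1$ from Theorem \ref{BS}.

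Second, I would substitute the wronskian asymptotics from Proposition \ref{3.567} (with the $\beta$-analogues \eqref{asympt-wronsk-2}) into the second formula. Using ${\mathcal W}[{\bf u}_0^\alpha,{\bf u}_1^\alpha]=2i+\CO(\epsilon)$, ${\mathcal W}[{\bf u}_1^\alpha,{\bf u}_2^\alpha]=-2i+\CO(\epsilon)$, ${\mathcal W}[{\bf u}_1^\beta,{\bf u}_2^\beta]=-2i+\CO(\epsilon)$ and ${\mathcal W}[{\bf u}_0^\beta,{\bf u}_1^\beta]=\mp 2+\CO(\epsilon)$ (with the sign depending on whether $\beta$ is a zero of $g_+$ or $g_-$), a short calculation collapses the prefactor to $\mp i+\CO(\epsilon)$, giving $\nu=\mp i\, e^{z(\beta,\lambda,\alpha)/\epsilon}(1+\CO(\epsilon))$.

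Finally, I would invoke the Bohr-Sommerfeld condition of Theorem \ref{BS}, which at an eigenvalue reads $e^{2z(\beta,\lambda,\alpha)/\epsilon}=\delta(\alpha_0,\beta_0)+\CO(\epsilon)$ (using $\delta^2=1$), so $e^{z(\beta,\lambda,\alpha)/\epsilon}=\pm\sqrt{\delta(\alpha_0,\beta_0)}\,(1+\CO(\epsilon))$. For the particular case of interest here, namely $A(x)=S(x)=\sech(2x)$, Remark \ref{which-is-which} shows that for every relevant pair $(\alpha_0,\beta_0)$ on $\tilde\Lambda_{12}$, $\tilde\Lambda_{16}$ or $\tilde\Lambda_{26}$ both turning points are zeros of $g_-$, so $\delta(\alpha_0,\beta_0)=-1$ and $e^{z/\epsilon}=\pm i(1+\CO(\epsilon))$. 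Combining with the previous step yields $\nu=(\mp i)(\pm i)(1+\CO(\epsilon))=\pm 1+\CO(\epsilon)$, which is the claim. The main obstacle is the bookkeeping of signs and of the square-root branches of $\sqrt{\delta(\alpha_0,\beta_0)}$, compounded by the fact that in the fully general (non-$\sech(2x)$) setting the case $\delta(\alpha_0,\beta_0)=+1$ would replace $\pm 1$ by $\pm i$; the statement as written is therefore implicitly specialised to the configuration in which both relevant turning points are zeros of the same factor $g_\pm$, as happens throughout our example.
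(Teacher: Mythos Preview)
Your argument is correct and follows essentially the same route as the paper: both proofs expand ${\bf u}_0^\alpha$ and ${\bf u}_0^\beta$ in the common basis $\{{\bf u}_1^\alpha,{\bf u}_2^\alpha\}$ via Proposition \ref{dep} and the relations \eqref{diagrel}, obtaining two equivalent expressions for the proportionality constant, and then feed in the wronskian asymptotics \eqref{asympt-wronsk-1}, \eqref{asympt-wronsk-2}.

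The only real difference is in the last step. You keep one of the two formulas and substitute $e^{z(\beta,\lambda,\alpha)/\epsilon}$ via a square root of the Bohr--Sommerfeld relation $m(\epsilon)e^{2z/\epsilon}=1$, which forces you to track the branch of $\sqrt{\delta(\alpha_0,\beta_0)}$ (the difficulty you honestly flag). The paper instead \emph{multiplies} the two equivalent formulas,
\[
\Big(\frac{{\bf u}_0^\beta}{{\bf u}_0^\alpha}\Big)^2=\frac{{\mathcal W}[{\bf u}_0^\beta,{\bf u}_1^\beta]\,{\mathcal W}[{\bf u}_2^\beta,{\bf u}_0^\beta]}{{\mathcal W}[{\bf u}_0^\alpha,{\bf u}_1^\alpha]\,{\mathcal W}[{\bf u}_2^\alpha,{\bf u}_0^\alpha]},
\]
so that the factors $e^{\pm z/\epsilon}$ cancel exactly and no square root is ever taken; the right-hand side is then $-\delta(\alpha_0,\beta_0)+\CO(\epsilon)$ directly from the wronskian asymptotics, and with $\delta=-1$ one gets $({\bf u}_0^\beta/{\bf u}_0^\alpha)^2=1+\CO(\epsilon)$. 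This buys a cleaner bookkeeping of signs (no branch choice, no separate invocation of the BS value of $e^{2z/\epsilon}$), while your version makes the link to Theorem \ref{BS} more explicit. A small simplification you could use in your formulas is the exact identity ${\mathcal W}[{\bf u}_1^\beta,{\bf u}_2^\beta]={\mathcal W}[{\bf u}_1^\alpha,{\bf u}_2^\alpha]$, immediate from \eqref{diagrel}, which removes that ratio altogether.
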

\begin{proof}
Suppose that ${\bf u}_0^\alpha$ and ${\bf u}_0^\beta$ are the eigenfunctions corresponding 
to the eigenvalue $\lambda$. These were defined explicitly in \S\ref{bs-at-two-turn-pts} as 
solutions of the problem (\ref{ev-problem}). From Lemma \ref{linear-depend-lemma-of-u}, 
and the formulae (\ref{wronsk1}), (\ref{wronsk2}) (or equivalently Theorem \ref{BS}) we find that
\be\nn
\frac{{\bf u}_0^\beta}{{\bf u}_0^\alpha}=
\frac{
{\mathcal W}[{\bf u}_2^\beta,{\bf u}_0^\beta]
}
{
{\mathcal W}[{\bf u}_2^\alpha,{\bf u}_0^\alpha]
}
e^{z(\beta,\lambda,\alpha)/\epsilon}=
\frac{
{\mathcal W}[{\bf u}_0^\beta,{\bf u}_1^\beta]
}
{
{\mathcal W}[{\bf u}_0^\alpha,{\bf u}_1^\alpha]
}
e^{-z(\beta,\lambda,\alpha)/\epsilon}.
\ee
This gives rise to the fact
\be\nn
\bigg(\frac{{\bf u}_0^\beta}{{\bf u}_0^\alpha}\bigg)^2=
\frac{
{\mathcal W}[{\bf u}_0^\beta,{\bf u}_1^\beta]{\mathcal W}[{\bf u}_2^\beta,{\bf u}_0^\beta]
}
{
{\mathcal W}[{\bf u}_0^\alpha,{\bf u}_1^\alpha]{\mathcal W}[{\bf u}_2^\alpha,{\bf u}_0^\alpha]
}.
\ee 
Now using the asymptotic formulas (\ref{asympt-wronsk-1}) and (\ref{asympt-wronsk-2}), we find 
that $({\bf u}_0^\beta/{\bf u}_0^\alpha)^2$ behaves like 
$-\delta(\alpha_0,\beta_0)+\asympt(\epsilon)$ 
as $\epsilon\downarrow0$. Hence as $\epsilon\downarrow0$, 
the norming constant satisfies
\be\nn
\frac{{\bf u}_0^\beta}{{\bf u}_0^\alpha}=
\pm1+\asympt(\epsilon)
\ee
since in all of our cases the turning points in our analysis are zeros of $g_-$ which consequently 
force $\delta$ to be $-1$ [cf. (\ref{dab})].
\end{proof}

\begin{remark}
The signs of the norming constants change consecutively from one eigenvalue to the next 
one as we move along the asymptotic spectral arcs on the spectral plane.
\end{remark}
\begin{remark}
For the norming constants of the eigenvalues that lie near zero, 
Theorem \ref{theorem-evs-near-zero-ala-olver} shows that we obtain the 
same result.
\end{remark}

\section{Reflection Coefficient}
\label{reflection}

In this section we will consider the reflection coefficient for our Dirac 
operator 
\be\nn
\mathfrak{D}_\epsilon=
\begin{bmatrix}
-\frac{\epsilon}{i}\frac{d}{dx} & -iA(x)\exp\{iS(x)/\epsilon\}\\
-iA(x)\exp\{-iS(x)/\epsilon\} & \frac{\epsilon}{i}\frac{d}{dx}
\end{bmatrix}.
\ee
where again $A(x)=\sech(2x)$ and also the phase function $S(x)=\sech(2x), x\in\R$. Recall that the  
continuous spectrum of our Dirac operator is the whole real line. So in this section we are 
considering $\lambda\in\R$. 

\subsection{Reflection away from zero.}
\label{away-0-refle}

We begin with the case where this $\lambda\in\R$ is \textit{independent} of $\epsilon$ and 
consider a $\delta>0$ so that $|\lambda|\geq\delta$. Under the (different) change of variables
\be\nn
W_{\pm}=
\frac{u_{2}\exp\{iA/(2\epsilon)\}\pm u_{1}\exp\{-iA/(2\epsilon)\}}
{\sqrt{A\pm i(\lambda+A'/2)}}
\ee
equation (\ref{ev-problem}) is transformed to the following equation 
(actually, by applying the transformation above, we get two independent equations; 
we only consider the case for the lower index and set $W=W_-$).

\be\label{final-scattter-eq}
\frac{d^2W}{dx^2}=[-\epsilon^{-2}f(x,\lam)+g(x,\lam)]y
\ee
where $f$ and $g$ are given by the following formulae
\be\label{f-tilde}
f(x,\lam)=A^2(x)+[\lambda+\tfrac{A'(x)}{2}]^2
\ee
and
\begin{align}
\nn
g(x,\lam)
&=
\frac{1}{2}\Big[\big(\log\{A(x)-i[\lambda+\tfrac{A'(x)}{2}]\}\big)'\Big]^2
-\frac{ \Big(\{A(x)-i[\lambda+\tfrac{A'(x)}{2}] \}^{1/2}\Big)''}
{ \{A(x)-i[\lambda+\tfrac{A'(x)}{2}] \}^{1/2} }
\\
\label{g-tilde}
&=
\frac{3}{4}\Bigg\{ \frac{A'(x)-i\frac{A''(x)}{2}}{A(x)-i[\lam+\tfrac{A'(x)}{2}]} \Bigg\}^2
-
\frac{1}{2}\frac{A''(x)-i\frac{A'''(x)}{2}}{A(x)-i[\lam+\tfrac{A'(x)}{2}]}
\end{align}
\noindent
We have the following definitions.
\begin{definition}
We define an \textbf{error-control function} $H(x,\lam)$ for equation 
(\ref{final-scattter-eq}), to be a primitive of
\be\label{error-control-scattering}
\frac{1}{f^{1/4}(x,\lam)}
\frac{d^2}{dx^2}
\Big[
\frac{1}{f^{1/4}(x,\lam)}
\Big]
-\frac{g(x,\lam)}{f^{1/2}(x,\lam)}.
\ee 
\end{definition}
\begin{definition}
We define the \textbf{variation} of $H$ in the interval 
$(x_1,x_2)\subseteq(0,+\infty)$ to be given by 
\be\label{var-control-scatt}
\mathcal{V}_{x_1,x_2}[H](\lam)=
\int_{x_1}^{x_2}
\bigg|
\frac{1}{f^{1/4}(t,\lam)}
\frac{d^2}{dx^2}
\Big[
\frac{1}{f^{1/4}(t,\lam)}
\Big]
-\frac{g(t,\lam)}{f^{1/2}(t,\lam)}
\bigg|
dt.
\ee
\end{definition}

Observe that (\ref{f-tilde}) implies $f>0$ in $\R$. Consequently, 
equation (\ref{final-scattter-eq}) has no turning points. Furthermore, 
notice that 
\begin{itemize}
\item
$g$ is complex-valued
\item
$f$ is twice continuously differentiable with respect to $x$ and
\item
$g$ is continuous.
\end{itemize}
These properties allow one (cf. Theorem $2.2$ of \S $2.4$ 
from chapter 6 of \cite{olver1997} along with the remarks from \S 5.1 
of the same chapter) to arrive at the following theorem.
\begin{theorem}
Take an arbitrary interval $(x_1, x_2)\subseteq\R$.  Then, equation 
(\ref{final-scattter-eq}) has in the above interval, two twice continuously 
differentiable solutions $w_\pm$ with
\be\nn
w_\pm(x,\epsilon)=
f^{-1/4}(x,\lam)
\exp\Big\{\pm\frac{i}{\epsilon}\int f^{1/2}(t,\lam)dt\Big\}
[1+r_\pm(x,\epsilon)].
\ee
The remainders $r_\pm$ satisfy
\be\label{error-term-scattering}
|r_\pm(x,\epsilon)|\hspace{5pt},\hspace{5pt}
\epsilon f^{-1/2}(x,\epsilon)
\Big|\frac{\partial r_\pm}{\partial x}(x,\epsilon)\Big|
\leq
\exp\{\epsilon\mathcal{V}_{\kappa,x}[H](\lam)\}-1
\ee
where $\kappa$ is an arbitrary (finite or infinite) point in the closure of $(x_1, x_2)$, 
provided that $\mathcal{V}_{\kappa,x}[H](\lambda)<+\infty$. 
\end{theorem}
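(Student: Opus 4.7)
The plan is to apply the classical Liouville--Green construction, exactly as in Chapter~6 of Olver's book. Since $f(x,\lambda) > 0$ throughout $\mathbb{R}$ and $f \in C^2$, the Liouville variable
$$\xi = \int_{\kappa}^{x} f^{1/2}(t,\lambda)\,dt$$
is $C^2$ and strictly increasing, so it defines a $C^2$--diffeomorphism of $(x_1,x_2)$ onto its image. Setting $W(x) = f^{-1/4}(x,\lambda) Y(\xi)$ and computing $d^2W/dx^2$ via the chain rule, the equation (\ref{final-scattter-eq}) is transformed into the normal form
$$\frac{d^2 Y}{d\xi^2} \;=\; \bigl[-\epsilon^{-2} + \Phi(\xi,\lambda)\bigr] Y, \qquad \Phi(\xi,\lambda)\,d\xi \;=\; dH(x,\lambda),$$
where $H$ is the error-control function of (\ref{error-control-scattering}); this identification is the whole point of the particular combination appearing there.

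Next, I would construct two distinguished solutions $Y_\pm(\xi,\epsilon) = e^{\pm i\xi/\epsilon}[1 + \tilde{r}_\pm(\xi,\epsilon)]$ by variation of parameters against the unperturbed basis $\{e^{i\xi/\epsilon}, e^{-i\xi/\epsilon}\}$, whose Wronskian is $-2i/\epsilon$. This converts the ODE into the Volterra integral equation
$$\tilde{r}_\pm(\xi,\epsilon) \;=\; \mp\frac{\epsilon}{2i}\int_{\xi_\kappa}^{\xi}\bigl[1 - e^{\mp 2i(\xi - s)/\epsilon}\bigr]\,\Phi(s,\lambda)\bigl[1 + \tilde{r}_\pm(s,\epsilon)\bigr]\,ds,$$
with $\xi_\kappa$ the image of the base point $\kappa$. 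Because the bracketed exponential factor has modulus at most $2$ on the real axis, Picard iteration starting from $\tilde{r}_\pm^{(0)} \equiv 0$ yields the standard termwise estimate
$$\bigl|\tilde{r}_\pm^{(n)} - \tilde{r}_\pm^{(n-1)}\bigr| \;\leq\; \frac{\bigl[\epsilon\,\mathcal{V}_{\kappa,x}[H](\lambda)\bigr]^{n}}{n!},$$
so the series converges absolutely whenever $\mathcal{V}_{\kappa,x}[H](\lambda) < +\infty$, and its sum $\tilde{r}_\pm$ satisfies $|\tilde{r}_\pm| \leq e^{\epsilon \mathcal{V}_{\kappa,x}[H]} - 1$. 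Reversing the substitution $W = f^{-1/4} Y$ delivers the claimed form of $w_\pm$ and the first half of the estimate (\ref{error-term-scattering}).

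To obtain the derivative bound, I would differentiate the integral equation once with respect to $\xi$ (the kernel factor now becomes $e^{\mp 2i(\xi-s)/\epsilon}$, again bounded by $1$ in modulus), run the same Gronwall/Picard argument, and then convert back to the $x$-variable via $d/dx = f^{1/2}\, d/d\xi$, which produces the weight $\epsilon f^{-1/2}$ in front of $|\partial r_\pm/\partial x|$. Twice continuous differentiability of $w_\pm$ is then automatic: $r_\pm$ is $C^1$ by the integral representation, and the original ODE rewritten as $W'' = [-\epsilon^{-2} f + g]\, W$ with $f \in C^2$ and $g$ continuous upgrades $w_\pm$ from $C^1$ to $C^2$ by direct inspection.

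The only genuine obstacle is ensuring that the fixed-point contraction is uniform; this is precisely why the hypothesis $\mathcal{V}_{\kappa,x}[H](\lambda) < +\infty$ must be imposed with respect to the \emph{chosen} base point $\kappa$, and why the iteration has to be anchored at that point. If $\kappa$ is taken at infinity, one must verify that $H$ has bounded variation on the corresponding semi-infinite tail; this is guaranteed here by the rapid decay of $A$ and its derivatives, combined with $|\lambda| \geq \delta > 0$, which keeps $f$ bounded away from zero near infinity and makes the integrand in (\ref{var-control-scatt}) integrable. Everything else is purely routine bookkeeping.
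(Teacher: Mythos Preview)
Your proposal is correct and is precisely the argument the paper has in mind: the paper does not give a self-contained proof at all, but simply cites Theorem~2.2 of \S2.4 in Chapter~6 of Olver's book \cite{olver1997}, together with the remarks in \S5.1 of that chapter (which handle the complex-valued $g$). What you have written is a faithful reconstruction of Olver's Liouville--Green/Volterra argument, so there is nothing to compare---you have supplied the details that the paper delegates to the reference.
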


\begin{remark}
Since $g$ is not real, we cannot expect the solutions $w_\pm$ to be complex 
conjugates.
\end{remark}
\begin{remark}
It follows that
\begin{itemize}
\item
$r_\pm(x,\epsilon)\rightarrow0$ as $x\to\kappa$ and
\item
$\epsilon f^{-1/2}(x,\epsilon)
\frac{\partial r_\pm}{\partial x}(x,\epsilon)\rightarrow0$ as 
$x\to\kappa$.
\end{itemize}
\end{remark}

Using (\ref{error-control-scattering}),  notice that $H$ is independent
of $\epsilon$ whence the right-hand side of (\ref{error-term-scattering}) is 
$\asympt(\epsilon)$ as $\epsilon\downarrow0$ and fixed $x$. But 
$\mathcal{V}_{x_1,x_2}[H](\lam)<+\infty$ which implies that 
this $\asympt$-term is uniform with respect to $x$ since
$\mathcal{V}_{\kappa,x}[H](\lam)\leq
\mathcal{V}_{x_1,x_2}[H](\lam)$.
Hence 
\be\nn
w_\pm(x,\epsilon)\sim
f^{-1/4}(x,\lam)
\exp\Big\{\pm\frac{i}{\epsilon}\int f^{1/2}(t,\lam)dt\Big\}
\quad\text{as}\quad
\epsilon\downarrow0
\ee
uniformly in $(x_1, x_2)$.

Next we define the \textit{Jost solutions}.  The Jost solutions are defined as 
the components of the bases $\{J_-^l, J_+^l\}$ and $\{J_-^r, J_+^r\}$ of the 
two-dimensional linear space of solutions of equation (\ref{final-scattter-eq}), 
which satisfy the asymptotic conditions
\begin{align*}
J_\pm^l(x,\lam) &\sim \exp\big\{\pm i\lambda x/\epsilon\big\}
\quad\text{as}\quad x\to-\infty\\
J_\pm^r(x,\lam) &\sim \exp\big\{\pm i\lambda x/\epsilon\big\}
\quad\text{as}\quad x\to+\infty.
\end{align*}
From scattering theory, we know that the \textit{reflection coefficient} 
$\mathcal{R}(\lam,\epsilon)$ for the waves incident on the potential from 
the right, can be expressed in terms of wronskians of the Jost 
solutions. More presicely, we have
\be
\label{r}
\mathcal{R}(\lam,\epsilon)=\frac{W[J_-^l, J_-^r]}{W[J_+^r, J_-^l]}.
\ee

The next step is to construct the Jost solutions as 
\textit{WKB solutions}. For this, we define the following four 
WKB solutions
\begin{align*}
\bar{w}_\pm^l(x,\epsilon)=
f^{-1/4}(x,\lam)
\exp\Big\{\pm\frac{i}{\epsilon}\Big(\lam x+\int_{-\infty}^x 
[f^{1/2}(t,\lam)-\lam]dt\Big)\Big\}
[1+\bar{r}_\pm^l(x,\epsilon)]\\
\bar{w}_\pm^r(x,\epsilon)=
f^{-1/4}(x,\lam)
\exp\Big\{\pm\frac{i}{\epsilon}\Big(\lam x+\int_{+\infty}^x 
[f^{1/2}(t,\lam)-\lam]dt\Big)\Big\}
[1+\bar{r}_\pm^r(x,\epsilon)]
\end{align*} 
which we are going to modify slightly in a while. If we take the 
limits as $x\to\pm\infty$ of the above, we instantly notice the
following relations between $\bar{w}_\pm^l, \bar{w}_\pm^l$ and 
the Jost solutions $J_\pm^l, J_\pm^r$; we have
\begin{align*}
J_\pm^l &=\lam^{1/2}\bar{w}_\pm^l\\
J_\pm^r &=\lam^{1/2}\bar{w}_\pm^r.
\end{align*}
Let now $w_\pm^l, w_\pm^r$ be four WKB solutions satisfying
\begin{align}
\label{wkb-left}
w_\pm^l(x,\epsilon)=
f^{-1/4}(x,\lam)
\exp\Big\{\pm\frac{i}{\epsilon}\int_{0}^x 
f^{1/2}(t,\lam)dt\Big\}
[1+r_\pm^l(x,\epsilon)]
\\
\label{wkb-right}
w_\pm^r(x,\epsilon)=
f^{-1/4}(x,\lam)
\exp\Big\{\pm\frac{i}{\epsilon}\int_{0}^x 
f^{1/2}(t,\lam)dt\Big\}
[1+r_\pm^r(x,\epsilon)].
\end{align} 
Once again, the connnection between $w_\pm^l, w_\pm^r$ and 
$\bar{w}_\pm^l, \bar{w}_\pm^r$ is evident.  Indeed, as $\epsilon\downarrow0$, we have
\begin{align*}
\bar{w}_\pm^l &=
\exp\Big\{\pm\frac{i}{\epsilon}\int_{-\infty}^0 
[f^{1/2}(t,\lam)-\lam]dt\Big\}
w_\pm^l\big(1+\text{o}(1)\big)\\
\bar{w}_\pm^r &= 
\exp\Big\{\mp\frac{i}{\epsilon}\int_0^{+\infty} 
[f^{1/2}(t,\lam)-\lam]dt\Big\}
w_\pm^r\big(1+\text{o}(1)\big).
\end{align*}
Subsequently, as $\epsilon\downarrow0$, for the Jost solutions we have
\begin{align}
\label{j-w-left}
J_\pm^l &=
\lam^{1/2}
\exp\Big\{\pm\frac{i}{\epsilon}\int_{-\infty}^0 
[f^{1/2}(t,\lam)-\lam]dt\Big\}
w_\pm^l\big(1+\text{o}(1)\big)\\
\label{right-j-w}
J_\pm^r &=
\lam^{1/2}
\exp\Big\{\mp\frac{i}{\epsilon}\int_0^{+\infty} 
[f^{1/2}(t,\lam)-\lam]dt\Big\}
w_\pm^r\big(1+\text{o}(1)\big).
\end{align} 

Recall that the properties of $A$ show that the
function $t\mapsto f^{1/2}(t,\lam)-\lam$ is in $L^1(\R)$.
Furthermore, we have
\be\label{scatter-norm}
\int_{-\infty}^0[f^{1/2}(t,\lam)-\lam]dt=
\int_0^{+\infty}[f^{1/2}(t,\lam)-\lam]dt=
\frac{1}{2}
\|f^{1/2}(\cdot,\lam)-\lam\|_{L^1(\R)}
\ee
and we define
\be
\label{sigma-norm}
\sigma(\lambda)=\|f^{1/2}(\cdot,\lam)-\lam\|_{L^1(\R)}.
\ee
Substituting (\ref{j-w-left}), (\ref{right-j-w}), (\ref{scatter-norm}) 
and (\ref{sigma-norm}) in (\ref{r}) we have
\be
\label{refle}
\mathcal{R}(\lam,\epsilon)=
e^{i\sigma(\lambda)/\epsilon}
\frac{W[w_-^l, w_-^r]}{W[w_+^r, w_-^l]}.
\ee

Finally,  using (\ref{wkb-left}), (\ref{wkb-right}) and 
(\ref{error-term-scattering}) we find that
\begin{itemize}
\item
$W[w_-^l, w_-^r]=\asympt(1)$ as $\epsilon\downarrow0$ and
\item
$W[w_+^r, w_-^l]=-\frac{2i}{\epsilon}+\asympt(1)$ as $\epsilon\downarrow0$.
\end{itemize}
Substituting these last results in (\ref{refle}) we get that
\be\nn
\mathcal{R}(\lam,\epsilon)=
e^{i\sigma(\lambda)/\epsilon}
\asympt(\epsilon)
\quad\text{as}\quad
\epsilon\downarrow0.
\ee
\noindent
Hence we have just proved the following theorem.
\begin{theorem}
Consider an arbitrary $\delta>0$.  Then the reflection coefficient 
of equation (\ref{final-scattter-eq}) as defined by (\ref{r}),  satisfies
\be
\mathcal{R}(\lam,\epsilon)=
\asympt(\epsilon)
\quad\text{as}\quad
\epsilon\downarrow0
\ee
uniformly for $|\lambda|\geq\delta$.  
\end{theorem}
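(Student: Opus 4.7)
The plan is to start directly from the wronskian formula \eqref{refle}, which writes $\mathcal{R}(\lambda,\epsilon)$ as the unimodular factor $e^{i\sigma(\lambda)/\epsilon}$ times the ratio $W[w_-^l,w_-^r]/W[w_+^r,w_-^l]$. Since $\sigma(\lambda)\in\R$ for real $\lambda$, the exponential factor has modulus one, and so the problem reduces to showing that the numerator wronskian is $\mathcal{O}(1)$ while the denominator is of size $\epsilon^{-1}$ at leading order, \emph{uniformly} in $|\lambda|\ge\delta$. Both wronskians are constant in $x$ (the equation \eqref{final-scattter-eq} has no first-derivative term), so I can evaluate them at any convenient location.

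To compute the wronskians I would use the explicit WKB forms \eqref{wkb-left}--\eqref{wkb-right} together with the error bound \eqref{error-term-scattering}. Setting $\phi(x,\lambda)=\int_0^x f^{1/2}(t,\lambda)\,dt$, the leading derivative of each $w_\pm^\bullet$ is $\pm i\epsilon^{-1}f^{1/4}e^{\pm i\phi/\epsilon}$ plus lower-order corrections. In $W[w_-^l,w_-^r]$ both factors carry the same phase $e^{-i\phi/\epsilon}$ and the same sign in the leading derivative, so the $\epsilon^{-1}$ contributions cancel exactly; the surviving terms come from $(f^{-1/4})'$ and from the error terms $r_-^{l,r}$ (and their $x$-derivatives), which by \eqref{error-term-scattering} are $\mathcal{O}(\epsilon)$ and $\mathcal{O}(f^{1/2})$ respectively, giving $W[w_-^l,w_-^r]=\mathcal{O}(1)$. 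In $W[w_+^r,w_-^l]$, by contrast, the two solutions carry opposite phases and opposite signs at leading order, so the $\epsilon^{-1}$ parts \emph{add} up to $-2i/\epsilon$, with an $\mathcal{O}(1)$ correction coming from the error terms. Combining these with \eqref{refle} yields $\mathcal{R}(\lambda,\epsilon)=\mathcal{O}(\epsilon)$.

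The main technical obstacle, I expect, is the uniformity in $\lambda$. What must be verified is that the variation $\mathcal{V}_{-\infty,+\infty}[H](\lambda)$ of \eqref{var-control-scatt} is bounded uniformly for $|\lambda|\ge\delta$, since only then are all the error terms $r_\pm^{l,r}(x,\epsilon)=\mathcal{O}(\epsilon)$ uniform in $x$ \emph{and} $\lambda$. For this one needs a uniform lower bound on $f(x,\lambda)=A^2(x)+(\lambda+A'(x)/2)^2$: on any compact $x$-set this follows from $A(x)>0$, while for large $|x|$ both $A$ and $A'$ decay to $0$ so $f(x,\lambda)\to\lambda^2\ge\delta^2$. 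Combined with the exponential decay of $A,A',A'',A'''$ at $\pm\infty$, this makes both $(f^{-1/4})^{-1}(f^{-1/4})''$ and $g/f^{1/2}$ integrable on $\R$, with integrals bounded uniformly in $|\lambda|\ge\delta$. Once this variational estimate is in hand, the WKB construction and the cancellation arguments above all hold uniformly, and the theorem follows.
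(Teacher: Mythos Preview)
Your proposal is correct and follows essentially the same route as the paper: derive \eqref{refle}, then show $W[w_-^l,w_-^r]=\mathcal{O}(1)$ and $W[w_+^r,w_-^l]=-2i/\epsilon+\mathcal{O}(1)$ via the WKB forms \eqref{wkb-left}--\eqref{wkb-right} and the error estimate \eqref{error-term-scattering}. You are in fact more explicit than the paper about the cancellation mechanism in the numerator wronskian and about the uniform-in-$\lambda$ bound on $\mathcal{V}_{-\infty,+\infty}[H](\lambda)$ (which the paper only addresses later, in \S\ref{near-0-refle}, when $\lambda$ is allowed to approach $0$).
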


\subsection{Reflection near zero.}
\label{near-0-refle}

Now we turn to the case where $\lambda$ depends on $\epsilon$ 
[$\lam=\lam(\epsilon)$] and particularly we let $\lambda$ approach $0$ 
like $\epsilon^b$ for an $\epsilon$-independent positive constant $b$. 
Using (\ref{f-tilde}) and (\ref{g-tilde}), we see that 
(\ref{error-control-scattering}) can be written as
\begin{align}
\label{formula-for-variation}
\nn
\frac{1}{f^{1/4}(x,\lam)}
&
\frac{d^2}{dx^2}
\Big[
\frac{1}{f^{1/4}(x,\lam)}
\Big]
-\frac{g(x,\lam)}{f^{1/2}(x,\lam)}=\\
&\nn
\hspace{10pt}
\frac{5}{16}\{A(x)^2+[\lambda+\tfrac{A'(x)}{2}]^2\}^{-5/2}
\{2A(x)A'(x)+[\lam+\tfrac{A'(x)}{2}]A''(x)\}^2\\
&\nn
-
\frac{1}{4}\{A(x)^2+[\lambda+\tfrac{A'(x)}{2}]^2\}^{-3/2}\\
&\nn
\hspace{2.15cm}
\cdot
\{2[A(x)A''(x)+A'(x)^2]+[\lam+\tfrac{A'(x)}{2}]A'''(x)+\tfrac{A''(x)^2}{2}\}\\
&\nn
-
\frac{3}{4}\{A(x)^2+[\lambda+\tfrac{A'(x)}{2}]^2\}^{-1/2}
\Bigg\{ \frac{A'(x)-i\frac{A''(x)}{2}}{A(x)-i[\lam+\tfrac{A'(x)}{2}]} \Bigg\}^2\\
&
+
\frac{1}{2}\{A(x)^2+[\lambda+\tfrac{A'(x)}{2}]^2\}^{-1/2}
\frac{A''(x)-i\frac{A'''(x)}{2}}{A(x)-i[\lam+\tfrac{A'(x)}{2}]}.
\end{align}
We can  easily see that each  of the  terms in the sum (\ref{formula-for-variation}) is less than
\begin{equation*}
C\lambda^{-5/2}e^{-\gamma |x|}
\end{equation*}
where $C,\gamma>0$, since $A(x)=\sech(2x)$ and $A(x), A'(x), A''(x), A'''(x) \sim\pm e^{-2|x|}$ 
as $x\rightarrow \pm\infty$. 
Recalling that $\lambda(\epsilon)\in[\epsilon^b,+\infty)$ where $b>0$ is independent of 
$\epsilon$,  and (\ref{error-term-scattering}) we see that the variation in 
(\ref{var-control-scatt}) behaves like 
\be\nn 
\mathcal{V}_{0,+\infty}[H](\lam(\epsilon))=
\asympt\Big(\epsilon^{-5b/2}\Big)
\quad\text{as}\quad\epsilon\downarrow0. 
\ee
Hence using (\ref{wkb-left}),  (\ref{wkb-right}) we get
\begin{itemize}
\item
$W[w_-^l, w_-^r]=\asympt(\epsilon^{-5b/2})$ 
as $\epsilon\downarrow0$ and
\item
$W[w_+^r, w_-^l]=
-\frac{2i}{\epsilon}+\asympt(\epsilon^{-1-5b/2})$ 
as $\epsilon\downarrow0$.
\end{itemize}
Substituting these last results in (\ref{refle}),  we finally obtain that
\be
\mathcal{R}(\lam(\epsilon),\epsilon)=
e^{i\sigma(\lambda(\epsilon))/\epsilon}
\asympt(\epsilon^{1-5b/2})
\quad\text{as}\quad
\epsilon\downarrow0.
\ee

So, we have  showed the following.
\begin{theorem}
\label{spectrum+scatter}
Consider $0<b<\tfrac{2}{5}$ (independent of $\epsilon$). Then the reflection 
coefficient of equation (\ref{final-scattter-eq}) as defined by (\ref{r}),  satisfies
\be
\label{final-r}
\mathcal{R}(\lam(\epsilon),\epsilon)=
\asympt(\epsilon^{1-5b/2})
\quad\text{as}\quad
\epsilon\downarrow0
\ee
uniformly for $\lambda(\epsilon)$ in any closed interval of $[\epsilon^b,+\infty)$.  
\end{theorem}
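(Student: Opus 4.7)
My plan is to mirror the structure of Section~\ref{away-0-refle} while tracking the $\lambda$-dependence of every estimate, since now $\lambda=\lambda(\epsilon)\to 0$ at the rate $\epsilon^{b}$ in the worst case. The skeleton is identical: bound the variation $\mathcal V_{0,+\infty}[H]$ of the error-control function, feed this into \eqref{error-term-scattering} to control the remainders $r^{l}_{\pm},r^{r}_{\pm}$ appearing in the WKB representations \eqref{wkb-left}--\eqref{wkb-right}, expand the two Wronskians that enter formula~\eqref{refle}, and take the ratio. What is genuinely new is that every $\asympt$-term will pick up an inverse power of $\lambda$, whose worst instance dictates the final exponent.

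The first step is a pointwise estimate of the integrand of $\mathcal V_{0,+\infty}[H]$ displayed in \eqref{formula-for-variation}. Each of its four summands has a denominator of the form $\{A(x)^{2}+[\lambda+A'(x)/2]^{2}\}^{-p/2}$ with $p\in\{1,3,5\}$, and a numerator that is polynomial in $A,A',A'',A'''$. Since $A(x)=\sech(2x)$ and each of its first three derivatives decays like $e^{-2|x|}$, and since $\lambda>0$ prevents the denominator from vanishing, I expect each summand to admit a bound of the form $C\lambda^{-5/2}e^{-\gamma|x|}$ with $C,\gamma>0$ uniform on closed subintervals of $(0,+\infty)$. Integrating over $x\in(0,+\infty)$ then yields
\[
\mathcal V_{0,+\infty}[H](\lambda(\epsilon))=\asympt(\epsilon^{-5b/2})\quad\text{as}\quad\epsilon\downarrow 0,
\]
uniformly in $\lambda(\epsilon)\in[\epsilon^{b},K]$ for any fixed $K>0$.

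Feeding this into \eqref{error-term-scattering} shows that $r^{l}_{\pm},r^{r}_{\pm}=\asympt(\epsilon\cdot\epsilon^{-5b/2})=\asympt(\epsilon^{1-5b/2})$, and this is exactly where the hypothesis $b<2/5$ is essential: it forces $1-5b/2>0$ so that the remainders are genuinely small and the exponential in \eqref{error-term-scattering} may be linearised. Expanding the Wronskians exactly as in the case of fixed $\lambda$ but carrying these enlarged errors gives
\[
W[w_{-}^{l},w_{-}^{r}]=\asympt(\epsilon^{-5b/2}),\qquad W[w_{+}^{r},w_{-}^{l}]=-\frac{2i}{\epsilon}+\asympt(\epsilon^{-1-5b/2}).
\]
Dividing and using $|e^{i\sigma(\lambda)/\epsilon}|=1$ in \eqref{refle} delivers $\mathcal R(\lambda(\epsilon),\epsilon)=e^{i\sigma(\lambda(\epsilon))/\epsilon}\asympt(\epsilon^{1-5b/2})$, which is the claimed estimate.

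The principal obstacle is the pointwise bound $C\lambda^{-5/2}e^{-\gamma|x|}$: one must verify that $-5/2$ is indeed the worst inverse power of $\lambda$ across all four summands of \eqref{formula-for-variation}, and that the bound is uniform for $\lambda$ in a closed subinterval of $[\epsilon^{b},+\infty)$. This requires balancing two regimes. For $|x|\gg 1$, where $A,A'$ are exponentially small, the denominator is of order $\lambda^{p}$ with $p$ as large as $5$, but the numerator supplies both exponential decay in $|x|$ and additional smallness in $A,A',A'',A'''$, bringing the worst exponent down to $-5/2$; for $|x|$ bounded, the denominator stays bounded below and no blow-up in $\lambda$ occurs at all. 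Pinning down the cancellation precisely, and confirming its uniformity in $\lambda$, is the one step demanding care; everything after it is a mechanical continuation of the fixed-$\lambda$ argument.
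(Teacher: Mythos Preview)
Your proposal is correct and follows essentially the same route as the paper: bound each summand of \eqref{formula-for-variation} by $C\lambda^{-5/2}e^{-\gamma|x|}$, deduce $\mathcal V_{0,+\infty}[H]=\asympt(\epsilon^{-5b/2})$, feed this into \eqref{error-term-scattering}, obtain the two Wronskian estimates $W[w_-^l,w_-^r]=\asympt(\epsilon^{-5b/2})$ and $W[w_+^r,w_-^l]=-2i/\epsilon+\asympt(\epsilon^{-1-5b/2})$, and take the ratio in \eqref{refle}. Your discussion of the two regimes (large $|x|$ versus bounded $|x|$) for the pointwise bound is in fact more explicit than the paper's, which simply asserts the bound.
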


\begin{remark}
A similar (and better) estimate can be achieved by using the exact WKB method. See Corollary 2.8 
in \cite{fujii+kamvi}.
\end{remark}

\section{Conclusion: Application to the semiclassical focusing NLS}
\label{application-nls}

In a sense what we have been able to prove in the previous sections  is that the behavior of the solution of 
(\ref{ivp-nls}) in the regime $\epsilon \downarrow 0$ can be approximated by a 
``soliton ensemble", i.e a purely soliton solution. This solution is uniquely defined 
by the eigenvalues of the associated Dirac operator and their corresponding norming 
constants: the eigenvalues are exactly the WKB approximations of the previous sections 
while the norming constants alternate between -1 and 1. The role of the continuous part of 
the spectrum is negligible when epsilon is small, since the reflection coefficient is 
small away from zero
\footnote{We refer to \cite{h+k2} for the actual details justifying the approximation.}. 
The remaining question is the semiclassical behavior of the soliton ensemble itself.

This is a generalization of the problem considered in \cite{kmm} and \cite{kr}, the difference 
being that here the relevant WKB-eigenvalues lie in the \textsf{Y}-shaped set of Figure \ref{cut-the-L} 
(together with its reflection with respect to the real axis), while in \cite{kmm} the spectrum 
is purely imaginary (the four non-imaginary ``branches" are not there). Even though the proofs 
of \cite{kmm} and \cite{kr} are long and complicated they can be readily transferred from one 
case to another. 

We simply recall that the starting point of the strategy in \cite{kmm} consists of considering
a fairly arbitrary contour around the imaginary discrete spectrum and transforming the meromorphic 
Riemann-Hilbert problem (with no jump contour) given by the inverse scattering theory to a  
holomorphic Riemann-Hilbert problem with jump given on that fairly arbitrary contour. $Eventually$ 
this contour has to be deformed to an optimal contour on which the asymptotic analysis of the
holomorphic Riemann-Hilbert problem will be feasible.

The limiting density of eigenvalues is a crucial quantity in the analysis. 
In the special case $A(x)=S(x)=\sech(2x)$, for example,
\begin{equation}
\rho(\lam)=
\frac{\lam}{\pi}\int_{x_1(\lam)}^{x_2(\lam)}
\frac{dx}{(A(x)^2+(\lam + \frac{1}{2} S'(x))^2)^{1/2}}
\end{equation}
for $\lam \in \tilde \Lambda_{12}$, while
\begin{equation}
\rho(\lam)=
\frac{\lam}{\pi}\int_{x_1(\lam)}^{x_6(\lam)}
\frac{dx}{(A(x)^2+ (\lam + \frac{1}{2} S'(x))^2)^{1/2}}
\end{equation}
for $\lam \in \tilde \Lambda_{16}$ and
\begin{equation}
\rho(\lam)=
\frac{\lam}{\pi}\int_{x_2(\lam)}^{x_6(\lam)}
\frac{dx}{(A(x)^2+(\lam + \frac{1}{2} S'(x))^2)^{1/2}}
\end{equation}
for $\lam \in \tilde \Lambda_{26}$.
By symmetry, the last two integrals are equal and it is not hard to see that all three integrals 
are analytic in the upper half-plane, the top integral one being equal to the sum of the two others\footnote{Just write every 
integral as a contour integral on a closed curve wrapping around each $\tilde \Lambda_{ij}$ and then deform to 
a curve independent of $\lam$ to  get local analyticity. Then, as $\lam$ varies,
 start deforming  those closed curves along the moving cuts (actually Stokes curves moving with $\lam$)
to achieve global analyticity. The geometry of turning points is known numerically and this is crucial.
One can deform such that the number of turning points along each deformation remains two and cover the whole upper half-plane.
The only possibly problematic point is the bifurcation point $\lam_\otimes$ where a third turning point appears. 
But the linear relation between the three integrals above shows that
analyticity holds at that point as well.}. 
So $\rho$ is not continuous at the bifurcation point 
$\lam_\otimes$ if defined as a function on the \textsf{Y}-shaped set of Figure \ref{cut-the-L}
but it is a sum of two functions $\rho_1$ and $\rho_2$ such that $\rho_1$ 
is continuous in $\tilde \Lambda_{12} \cup \tilde \Lambda_{16}$ and $\rho_2$ is continuous 
in $\tilde \Lambda_{12} \cup \tilde \Lambda_{26}$. In fact, each $\rho_j$ admits (the same) analytic
extension, which is of course $\frac{1}{2} \rho$. We also note that because of the nice behavior of 
$A$ and $S$ at infinity, $\rho$ is regular at $0$. 

It follows eventually that the strategy of \cite{kmm} and \cite{kr} goes through by performing 
the first step of \cite{kmm} for a sum of two measures, one supported on $\tilde \Lambda_{12} \cup \tilde \Lambda_{16}$
and the other on $\tilde \Lambda_{12} \cup \tilde \Lambda_{26}$ each with an analytic density
with an entire extension. 
Even though the \textsf{Y}-shaped set of Figure \ref{cut-the-L} is more complicated than 
the imaginary interval appearing as an asymptotic spectrum in the zero phase case,
the symmetries involved (with respect to conjugation and taking negatives) ensure the
properties of the density $\rho$ needed in the analysis of  \cite{kr}. We leave  the details 
to the interested reader 
\footnote{Note that the imaginary part of $\rho$ is positive on the real line. This is the main  
condition in \cite{kr} for the admissibility of the contour maximizing the equilibrium energy.}
.

\appendix

\section{Modified Bessel Functions}
\label{mod_bessel}

\subsection{Primary solutions}

Let us in this section of the appendix describe some properties of Modified Bessel 
Functions (or MBFs) which are used in certain parts of this survey. For the proofs and a 
more detailed analysis on the material presented in this section, we refer the interested 
reader to \cite{olver-et-al} and \S2 of \cite{olver1978}. 

We start by investigating solutions of the following equation

\be
\label{mbf-basic-equation}
\frac{d^2w(t)}{dt^2}=\frac{9}{4}tw(t).
\ee
When $t$ is real we shall adopt the function
\be
\label{primary-sols-mbf}
U(t)=\Big(\frac{2t}{\pi}\Big)^{1/2}K_{1/3}(t^{3/2})
\ee
as a standard solution, where $K_{1/3}$ denotes the MBF of order $1/3$ in the usual 
notation. We continue to employ this solution in the more general case of $t$ being 
a complex number. The functions on the RHS of  (\ref{primary-sols-mbf}) are understood 
to assume their principal values when $\ph t=0$ and be defined by continuity for other 
values of $\ph t$.

\subsection{Secondary solutions}

Let us fix some notation and define some \textbf{sectors} on the complex plane. So, for 
$j\in\{0,1,2\}$ we set 
\be\nn
S_j=
\{\, 
x\in\C
\mid
(2j-1)\tfrac{\pi}{3}\leq\ph x\leq(2j+1)\tfrac{\pi}{3}
\,\}.
\ee
If there exists a number $0<\delta<1$ we say that the set
\be\nn
{\rm int}_\delta(S_j)=
\{\, 
x\in\C
\mid
(2j-\delta)\tfrac{\pi}{3}\leq\ph x\leq(2j+\delta)\tfrac{\pi}{3}
\,\}
\ee
is an \textbf{internal part} of the sector $S_j$.

We now introduce the secondary solutions for equation (\ref{mbf-basic-equation}). Here, we 
treat $t$ as a complex number. These are defined by
\be\nn
U_j(t)=U(te^{-i2\pi j/3}),\quad j\in\{0,1,2\}.
\ee
It can be easily observed that $U_j$ is a solution of equation (\ref{mbf-basic-equation}) 
that is decaying in the sector $S_j$, as $t\rightarrow\infty$.

For $j,k\in\{0,1,2\}$, the wronskian of the pair $\{U_j,U_k\}$ is given by 
\begin{align*}
W[U_j,U_k]
&=
3ie^{-i(j+k)\pi/3}\lambda_{j,k}\quad\text{where}\\
\lambda_{j,k}
&=
\frac{\sin[(k-j)\pi/3]}{\sin\pi/3}
\end{align*}
From this, we conclude that $U_j$, $U_k$ are linearly independent when 
$j,k\in\{0,1,2\}$ with $j\neq k$.

\subsection{Auxiliary functions}

\underline{Note}: throughout this section, we consider $j,k\in\{0,1,2\}$ with $j\neq k$ 
(so that $\lambda_{j,k}\neq0$).

First of all, it is well known that as $t\rightarrow\infty$, the solution $U_{j}$ is decaying
in $S_j$ and dominant in $S_k$ and by symmetry, $U_k$ is dominant in $S_j$ and 
decaying in $S_k$. This in turn, implies that $\{U_j, U_k\}$ comprises a numerically 
satisfactory pair of solutions in the closed region $S_j\cup S_k$ except possibly in a 
neighborhood of $0$. Hence, the pair above constitutes an appropriate solution basis 
in $S_j\cup S_k$ and in order to majorize these solutions, we shall introduce auxiliary 
weight, modulus and phase functions.

We start by defining the function $E(t)$ by the formula
\be\nn
E(t)=|\exp\{(-1)^j t^{3/2}\}|
\quad\text{for}\quad
t\in S_j
\ee 
where the branch of $t^{3/2}$ is $|t|^{3/2}\exp\{i3(\ph t)/2\}$.On the boundaries of 
$S_j$, $E(t)=1$ and so $E$ is continuous everywhere with $E(t)\geq1$. As weight 
functions in $S_j\cup S_k$ we assign the functions $\esf_{j,k}(t)$ defined by
\be\nn
\esf_{j,k}(t)=
\begin{cases}
1/E(t),\quad t\in S_j\\
E(t),\quad t\in S_k
\end{cases}
\ee 
It is a continuous function with $\esf_{j,k}(0)=1$. Also it satisfies the following symmetry
\be\nn
\esf_{k,j}(t)=1/\esf_{j,k}(t)
\quad\text{for}\quad
t\in S_j\cup S_k.
\ee

For the modulus and phase functions in $S_j\cup S_k$ we define
\begin{align*}
|U_j(t)|
&=
\esf_{j,k}(t)\msf_{j,k}(t)\cos\thsf_{j,k}(t)\\
|U_j'(t)|
&=
\esf_{j,k}(t)\nsf_{j,k}(t)\cos\omsf_{j,k}(t)
\end{align*}
and
\begin{align*}
|U_k(t)|
&=
\frac{1}{\esf_{j,k}(t)}\msf_{j,k}(t)\sin\thsf_{j,k}(t)\\
|U_k'(t)|
&=
\frac{1}{\esf_{j,k}(t)}\nsf_{j,k}(t)\sin\omsf_{j,k}(t)
\end{align*}
From these, we arrive at
\begin{align*}
\msf_{j,k}(t)
&=
\bigg[
\frac{|U_j(t)|^2}{\esf_{j,k}(t)^2}+
\esf_{j,k}(t)^2|U_k(t)|^2
\bigg]^{1/2}
\\
\nsf_{j,k}(t)
&=
\bigg[
\frac{|U_j'(t)|^2}{\esf_{j,k}(t)^2}+
\esf_{j,k}(t)^2|U_k'(t)|^2
\bigg]^{1/2}
\\
\thsf_{j,k}(t)
&=
\arctan
\bigg[
\esf_{j,k}(t)^2
\bigg|\frac{U_k(t)}{U_j(t)}\bigg|
\bigg]
\\
\omsf_{j,k}(t)
&=
\arctan
\bigg[
\esf_{j,k}(t)^2
\bigg|\frac{U_k'(t)}{U_j'(t)}\bigg|
\bigg]
\end{align*} 
The above formulae show that each one of $\msf_{j,k}(t), \nsf_{j,k}(t), \thsf_{j,k}(t)$ 
and $\omsf_{j,k}(t)$ is continuous in $S_j\cup S_k$. Also we have the following 
symmetry relations
\begin{align*}
\msf_{j,k}(t)
&=
\msf_{k,j}(t)
\\
\nsf_{j,k}(t)
&=
\nsf_{k,j}(t)
\\
\thsf_{j,k}(t)
+&
\thsf_{k,j}(t)
=\frac{\pi}{2}
\\
\omsf_{j,k}(t)
+&
\omsf_{k,j}(t)
=\frac{\pi}{2}
\end{align*} 

In  internal parts of $S_j$, $S_k$, for  large $|t|$ the asymptotic behavior of the modulus
functions is given by
\begin{align*}
\msf_{j,k}(t)
&\sim
(1+\lambda_{j,k}^2)^{1/2}|t|^{-1/4}
\\
\nsf_{j,k}(t)
&\sim
\frac{3}{2}(1+\lambda_{j,k}^2)^{1/2}|t|^{1/4}.
\end{align*}
But in the full domain $S_j\cup S_k$, the modulus and phase functions fluctuate 
as $t\rightarrow\infty$. For $\msf_{j,k}(t)$ we have the following bound
\be\nn
\msf_{j,k}(t)
\leq
C_{j,k}|t|^{-1/4}[1+\Theta(t^{3/2})]
\quad\text{for}\quad
t\in S_j\cup S_k
\ee
where $\Theta(t)$ and $C_{j,k}$ are given by 
\begin{align}
\label{big-theta}
\Theta(t)
&=
\exp\{5\pi/(72|t|)\}-1
\\
\label{big-c-jk}
C_{j,k}
&=
\max\{
[1+(|\lambda_{j,k}|+|\lambda_{j,k+1}|)^2]^{1/2}, 
[1+(|\lambda_{j,k}|+|\lambda_{j,k-1}|)^2]^{1/2}
\}.
\end{align}

Unfortunately, no such simple bound is available for $\nsf_{j,k}(t)$ and for this we shall 
introduce another pair of modulus and phase functions that will be convenient to work 
with. This is defined by
\begin{align*}
\bigg|\frac{d[t^{1/4}U_j(t)]}{dt}\bigg|
&=
|t|^{1/4}
\esf_{j,k}(t)
\hat{\nsf}_{j,k}(t)
\cos\hat{\omsf}_{j,k}(t)\\
\bigg|\frac{d[t^{1/4}U_k(t)]}{dt}\bigg|
&=
|t|^{1/4}
\frac{1}{\esf_{j,k}(t)}
\hat{\nsf}_{j,k}(t)
\sin\hat{\omsf}_{j,k}(t)
\end{align*}
Immediately, we find that 
\begin{align*}
\hat{\nsf}_{j,k}(t)
&=
|t|^{-1/4}
\bigg\{
\frac{1}{\esf_{j,k}(t)^2}
\bigg|\frac{d[t^{1/4}U_j(t)]}{dt}\bigg|^2+
\esf_{j,k}(t)^2
\bigg|\frac{d[t^{1/4}U_k(t)]}{dt}\bigg|^2
\bigg\}^{1/2}
\\
\hat{\omsf}_{j,k}(t)
&=
\arctan
\bigg\{
\esf_{j,k}(t)^2
\bigg|\frac{d[t^{1/4}U_k(t)]}{dt}\bigg|\Big/
\bigg|\frac{d[t^{1/4}U_j(t)]}{dt}\bigg|
\bigg\}.
\end{align*} 
Symmetry now tell us that 
\begin{align*}
\hat{\nsf}_{j,k}(t)
&=
\hat{\nsf}_{k,j}(t)
\\
\hat{\omsf}_{j,k}(t)
+&
\hat{\omsf}_{k,j}(t)
=\frac{\pi}{2}
\end{align*} 
while as $t\rightarrow\infty$ in internal parts of $S_j$ or $S_k$ we have
\be\nn
\hat{\nsf}_{j,k}(t)
\sim
\frac{3}{2}(1+\lambda_{j,k}^2)^{1/2}|t|^{1/4}.
\ee
Lastly,  $\hat{\nsf}_{j,k}(t)$ satisfies the following bound
\be\nn
\hat{\nsf}_{j,k}(t)
\leq
\frac{3}{2}
C_{j,k}|t|^{1/4}[1+\Theta(t^{3/2})]
\quad\text{for}\quad
t\in S_j\cup S_k.
\ee
where $\Theta(t)$ and $C_{j,k}$ are given by (\ref{big-theta}) and (\ref{big-c-jk}) 
respectively.

\section*{Data Availability}

Data sharing is not applicable to this article as no new data were
created or analyzed in this study.

\section*{Acknowledgements} 

We would like to thank \href{http://users.math.uoc.gr/~nefrem/}{Nikos Efremidis} for 
Figures \ref{efrem-zero-phase} and \ref{efrem-non-zero-phase}  in the Introduction. 
Figures \ref{curves-in-lambda-upper-half-plane} through \ref{no-wkb-5} have been 
taken from \cite{mil}. Also
\begin{itemize}
\item
The research of the first author was supported by a 
\href{https://www.jsps.go.jp/english/}{JSPS} grant-in-aid No. 21K03303. 
\item
The second author acknowledges the support of the Institute of Applied 
and Computational Mathematics of the Foundation of Research and 
Technology - Hellas (\href{https://www.forth.gr/}{FORTH}), via grant
MIS 5002358 and the support of the \href{https://en.uoc.gr}{University of Crete} 
via grant 10753. He also expresses his sincere gratitude to the Independent Power 
Transmission Operator(\href{https://www.admie.gr/en}{IPTO}) for a scholarship 
through the \href{http://www.sse.uoc.gr/en/}{School of Sciences and Engineering} 
of the University of Crete. Finally he acknowledges the support of the 
\href{https://www.fulbright.gr/en/}{Fulbright Institute in Greece} through a 
Visiting Researcher Fellowship at the 
\href{https://www.math.princeton.edu}{Department of Mathematics, Princeton University}.
\item
The third  author acknowledges the generous support of 
\href{http://en.ritsumei.ac.jp}{Ritsumeikan University} during a visit in $2018$ and 
the support of the \href{https://gsri.gov.gr}{Greek Secretariat of Research and Technology} 
under ARISTEIA II Grant No.\ 3964.
\end{itemize}

\end{document}